\providecommand{\U}[1]{\protect\rule{.1in}{.1in}}
\newtheorem{theorem}{Theorem}
\theoremstyle{plain}
\newtheorem{assumption}{Assumption}
\newtheorem{corollary}[theorem]{Corollary}
\newtheorem{example}[theorem]{Example}
\newtheorem{lemma}[theorem]{Lemma}
\newtheorem{proposition}[theorem]{Proposition}
\newtheorem{remark}[theorem]{Remark}
\numberwithin{equation}{section}
\numberwithin{theorem}{section}
\begin{document}
\title{A pricing measure to explain the risk premium in power markets}
\date{\today }
\author[Benth]{Fred Espen Benth}
\address[Fred Espen Benth]{\\
Centre of Mathematics for Applications \\
University of Oslo\\
P.O. Box 1053, Blindern\\
N--0316 Oslo, Norway}
\email[]{fredb\@@math.uio.no}
\urladdr{http://folk.uio.no/fredb/}
\author[Ortiz-Latorre]{Salvador Ortiz-Latorre}
\address[Salvador Ortiz-Latorre]{\\
Centre of Mathematics for Applications \\
University of Oslo\\
P.O. Box 1053, Blindern\\
N--0316 Oslo, Norway}
\email[]{salvador.ortiz-latorre\@@cma.uio.no}
\thanks{We are grateful for the financial support from the project "Energy Markets:
Modeling, Optimization and Simulation (EMMOS)", funded by the Norwegian
Research Council under grant Evita/205328.}

\begin{abstract}
In electricity markets, it is sensible to use a two-factor model with mean
reversion for spot prices. One of the factors is an Ornstein-Uhlenbeck (OU)
process driven by a Brownian motion and accounts for the small variations. The
other factor is an OU process driven by a pure jump L\'{e}vy process and
models the characteristic spikes observed in such markets. When it comes to
pricing, a popular choice of pricing measure is given by the Esscher transform
that preserves the probabilistic structure of the driving L\'{e}vy processes,
while changing the levels of mean reversion. Using this choice one can
generate stochastic risk premiums (in geometric spot models) but with
(deterministically) changing sign. In this paper we introduce a pricing change
of measure, which is an extension of the Esscher transform. With this new
change of measure we also can slow down the speed of mean reversion and
generate stochastic risk premiums with stochastic non constant sign, even in
arithmetic spot models. In particular, we can generate risk profiles with
positive values in the short end of the forward curve and negative values in
the long end. Finally, our pricing measure allows us to have a stationary spot
dynamics while still having randomly fluctuating forward prices for contracts
far from maturity.

\end{abstract}
\maketitle

\section{Introduction}

In modelling and analysis of forward and futures prices in commodity markets,
the \textit{risk premium} plays an important role. It is defined as the
difference between the forward price and the expected commodity spot price at
delivery, and the classical theory predicts a \textit{negative} risk premium.
The economical argument for this is that producers of the commodity is willing
to pay a premium for hedging their production (see Geman~\cite{G} for a
discussion, as well as a list of references).

Geman and Vasicek ~\cite{GV} argued that in power markets, the consumers may
hedge the price risk using forward contracts which are close to delivery, and
thus creating a \textit{positive} premium. Power is a non-storable commodity,
and as such may experience rather large price variations over short time
(sometimes referred to as spikes). One might observe a risk premium which may
be positive in the short end of the forward market, and negative in the long
end where the producers are hedging their power generation. A theoretical and
empirical foundation for this is provided in, for example, Bessembinder and
Lemon~\cite{BL} and Benth, Cartea and Kiesel \cite{BCK}.

When deriving the forward price, one specifies a pricing probability and
computes the forward price as the conditional expected spot at delivery. In
the power market, this pricing probability is not necessarily a so-called
equivalent martingale measure, or a risk neutral probability (see Bingham and
Kiesel \cite{BK}), as the spot is not tradeable in the usual sense. Thus, a
pricing probability can a priori be any equivalent measure, and in effect is
an indirect specification of the risk premium. In this paper we suggest a new
class of pricing measures which gives a stochastically varying risk premium.

We will focus our considerations on the power market, where typically a spot
price model may take the form as a two-factor mean reversion dynamics. Lucia
and Schwartz~\cite{LS} considered two-factor models for the electricity spot
price dynamics in the Nordic power market NordPool. Both arithmetic and
geometric models where suggested, that is, either directly modelling the spot
price by a two-factor dynamics, or assuming such a model for the logarithmic
spot prices. Their models were based on Brownian motion and, as such, not able
to capture the extreme variations in the power spot markets. Cartea and
Figueroa~\cite{CF} used a compound Poisson process to model spikes, that is,
extreme price jumps which are quickly reverted back to "normal levels". Benth,
\v{S}altyt\.{e} Benth and Koekebakker~\cite{BSBK} give a general account on
multi-factor models based on Ornstein-Uhlenbeck processes driven by both
Brownian motion and L\'{e}vy processes. Empirical studies suggest a stationary
power spot price dynamics after explaining deterministic seasonal variations
(see e.g. Barndorff-Nielsen, Benth and Veraart~\cite{BNBV-spot} for a study of
spot prices at EEX, the German power exchange). We will in this paper focus on
a two-factor model for the spot, where each factor is an Ornstein-Uhlenbeck
process, driven by a Brownian motion and a jump process, respectively. The
first factor models the "normal variations" of the spot price, whereas the
second accounts for sudden jumps (spikes) due to unexpected imbalances in
supply and demand.

The standard approach in power markets is to specify a pricing measure which
is preserving the L\'{e}vy property. This is called the Esscher transform (see
Benth et al.~\cite{BSBK}), and works for L\'{e}vy processes as the Girsanov
transform with a constant parameter for Brownian motion. The effect of doing
such a measure change is to adjust the mean reversion level, and it is known
that the risk premium becomes deterministic and typically either positive or
negative for all maturities along the forward curve.

We propose a class of measure changes which slows down the \textit{speed of
mean reversion} of the two factors. As it turns out, in conjunction with an
Esscher transform as mentioned above, we can produce a stochastically varying
risk premium, where potential positive premiums in the short end of the market
can be traced back to sudden jumps in the spike factor being slowed down under
the pricing measure. This result holds for arithmetic spot models, whereas the
geometric ones are much harder to analyse under this change of probability.
The class of probabilities preserves the Ornstein-Uhlenbeck structure of the
factors, and as such may be interpreted as a \textit{dynamic structure
preserving measure change}. For the L\'{e}vy driven component, the L\'{e}vy
property is lost in general, and we obtain a rather complex jump process with
state-dependent (random) compensator measure.

We can explicitly describe the density process for our measure change. The
theoretical contribution of this paper, besides the new insight on risk
premium, is a proof that the density process is a true martingale process,
indeed verifying that we have constructed a \textit{probability measure}. This
verification is not straightforward because the kernels used to define the
density process, through stochastic exponentiation, are stochastic and
unbounded. Hence, the usual criterion by L\'{e}pingle-M\'{e}min \cite{LeMe78}
is difficult to apply and, furthermore, it does not provide sharp results. We
follow the same line of reasoning as in a very recent paper by Klebaner and
Lipster \cite{KleLi11}. Although their result is more general than ours in
some respects, it does not apply directly to our case because we need some
additional integrability requirements. The proof is roughly as follows. First,
we reduce the problem to show the uniform integrability of the sequence of
random variables obtained by evaluating at the end of the trading period the
localised density process. This sequence of random variables naturally induces
a sequence of measure changes which, combined with an easy inequality for the
logarithm function, allow us to get rid of the stochastic exponential in the
expression to be bounded. Finally, we can reduce the problem to get an uniform
bound for the second moment of the factors under these new probability measures.

Interestingly, as our pricing probability is reducing the speed of mean
reversion, we might in the extreme situation "turn off" the mean reversion
completely (by reducing it to zero). For example, if we take the Brownian
factor as the case, we can have a stationary dynamics of the "normal
variations" in the market, but when looking at the process under the pricing
probability the factor can be non-stationary, that is, a drifted Brownian
motion. A purely stationary dynamics for the spot will produce constant
forward prices in the long end of the market, something which is not observed
empirically. Hence, the inclusion of non-stationary factors are popular in
modelling the spot-forward markets. In many studies of commodity spot and
forward markets, one is considering a two-factor model with one non-stationary
and one stationary component. The stationary part explains the short term
variations, while the non-stationary is supposed to account for long-term
price fluctuations in the spot (see Gibson and Schwartz~\cite{GS} and Schwartz
and Smith~\cite{SS} for such models applied to oil markets). Indeed, the power
spot models in Lucia and Schwartz~\cite{LS} are of this type. It is hard to
detect the long term factor in spot price data, and one is usually filtering
it out from the forward prices using contracts far from delivery.
Theoretically, such contracts should have a dynamics being proportional to the
long term factor. Contrary to this approach, one may in view of our new
results, suggest a \textit{stationary} spot dynamics and introduce a pricing
measure which turns one of the factors into a non-stationary dynamics. This
would imply that one could directly fit a two-factor stationary spot model to
power data, and next calibrate a measure change to account for the long term
variations in the forward prices by turning off (or significantly slow down)
the speed of mean reversion.

Our results are presented as follows: in the next section we introduce the
basic assumptions and properties satisfied by the factors in our model. Then,
in Section \ref{SecChangeOfMeasure}, we define the new change of measure and
prove the main results regarding the uniform integrability of its density
process. We deal with the Brownian and pure jump case separately. Finally, in
Section \ref{SecStudyOfRiskPrem}, we recall the arithmetic and geometric spot
price models. We compute the forward price processes induced by this change of
measure and we discuss the risk premium profiles that can be obtained.

\section{The mathematical set up\label{SecMathSetup}}

Suppose that $(\Omega,\mathcal{F},\{\mathcal{F}_{t}\}_{t\in\lbrack0,T]},P)$ is
a complete filtered probability space, where $T>0$ is a fixed finite time
horizon. On this probability space there are defined $W$, a standard Wiener
process, and $L,$ a pure jump L\'{e}vy subordinator with finite expectation,
that is a L\'{e}vy process with the following L\'{e}vy-It\^{o} representation
$L(t)=\int_{0}^{t}\int_{0}^{\infty}zN^{L}(ds,dz),t\in\lbrack0,T],$ where
$N^{L}(ds,dz)$ is a Poisson random measure with L\'{e}vy measure $\ell$
satisfying $\int_{0}^{\infty}z\ell(dz)<\infty.$ We shall suppose that $W$ and
$L$ are independent of each other. The following assumption is minimal, having
in mind, on the one hand, that our change of measure extends the Esscher
transform and, on the other hand, that we are going to consider a geometric
spot price model.

\begin{assumption}
\label{Assumption_Exp_Theta_L}We assume that%
\begin{equation}
\Theta_{L}\triangleq\sup\{\theta\in\mathbb{R}_{+}:\mathbb{E}[e^{\theta
L(1)}]<\infty\}, \label{Def_Theta_L}%
\end{equation}
is strictly positive constant, which may be $\infty.$
\end{assumption}

Actually, to have the geometric model well defined we will need to assume
later that $\Theta_{L}>1.$ Some remarks are in order.

\begin{remark}
\label{Remark_Cumulants}In $(-\infty,\Theta_{L})$ the cumulant (or log moment
generating) function $\kappa_{L}(\theta)\triangleq\log\mathbb{E}_{P}[e^{\theta
L(1)}]$ is well defined and analytic. As $0\in(-\infty,\Theta_{L})$, $L$ has
moments of all orders. Also, $\kappa_{L}(\theta)$ is convex, which yields that
$\kappa_{L}^{\prime\prime}(\theta)\geq0$ and, hence, that $\kappa_{L}^{\prime
}(\theta)$ is non decreasing. Finally, as a consequence of $L\geq0,$ a.s., we
have that $\kappa_{L}^{\prime}(\theta)$ is non negative.
\end{remark}

\begin{remark}
\label{Remark_D_Cumulants}Thanks to the L\'{e}vy-Kintchine representation of
$L$ we can express $\kappa_{L}(\theta)$ and its derivatives in terms of the
L\'{e}vy measure $\ell.$ We have that for $\theta\in(-\infty,\Theta_{L})$%
\begin{align*}
\kappa_{L}(\theta)  &  =\int_{0}^{\infty}(e^{\theta z}-1)\ell(dz)<\infty,\\
\kappa_{L}^{(n)}(\theta)  &  =\int_{0}^{\infty}z^{n}e^{\theta z}%
\ell(dz)<\infty,\quad n\in\mathbb{N},
\end{align*}
showing, in fact, that $\kappa_{L}^{(n)}(\theta)>0,n\in\mathbb{N}.$
\end{remark}

Consider the OU processes
\begin{align}
X(t)  &  =X(0)+\int_{0}^{t}(\mu_{X}-\alpha_{X}X(s))ds+\sigma_{X}W(t)\quad
t\in\lbrack0,T],\label{Equ_OU_Brownian}\\
Y(t)  &  =Y(0)+\int_{0}^{t}(\mu_{Y}-\alpha_{Y}Y(s))ds+L(t),\quad t\in
\lbrack0,T], \label{Equ_OU_Levy}%
\end{align}
with $\alpha_{X},\sigma_{X},\alpha_{Y}>0,\mu_{X},X(0)\in\mathbb{R},\mu
_{Y},Y(0)\geq0.$ Note that, in equation $\left(  \ref{Equ_OU_Brownian}\right)
,$ $X$ is written as a sum of a finite variation process and a martingale. We
may also rewrite equation $(\ref{Equ_OU_Levy})$ as a sum of a finite variation
part and pure jump martingale%
\[
Y(t)=Y(0)+\int_{0}^{t}(\mu_{Y}+\kappa_{L}^{\prime}(0)-\alpha_{Y}%
Y(s))ds+\int_{0}^{t}\int_{0}^{\infty}z\tilde{N}^{L}(ds,dz),\quad t\in
\lbrack0,T],
\]
where $\tilde{N}^{L}(ds,dz)\triangleq N^{L}(ds,dz)-ds\ \ell(dz)$ is the
compensated version of $N^{L}(ds,dz)$. In the notation of Shiryaev
\cite{Sh99}, page 669, the predictable characteristic triplets (with respect
to the pseudo truncation function $g(x)=x$) of $X$ and $Y$ are given by
\[
(B^{X}(t),C^{X}(t),\nu^{X}(dt,dz))=(\int_{0}^{t}(\mu_{X}-\alpha_{X}%
X(s))ds,\sigma_{X}^{2}t,0),\quad t\in\lbrack0,T],
\]
and%
\[
(B^{Y}(t),C^{Y}(t),\nu^{Y}(dt,dz))=(\int_{0}^{t}(\mu_{Y}+\kappa_{L}^{\prime
}(0)-\alpha_{Y}Y(s))ds,0,\ell(dz)dt),\quad t\in\lbrack0,T],
\]
respectively. In addition, applying It\^{o} formula to $e^{\alpha_{X}t}X(t)$
and $e^{\alpha_{Y}t}Y(t),$ one can find the following explicit expressions for
$X(t)$ and $Y(t)$%
\begin{align}
X(t)  &  =X(s)e^{-\alpha_{X}(t-s)}+\frac{\mu_{X}}{\alpha_{X}}(1-e^{-\alpha
_{X}(t-s)})+\sigma_{X}\int_{s}^{t}e^{-\alpha_{X}(t-u)}%
dW(u),\label{Equ_X_Explicit_P}\\
Y(t)  &  =Y(s)e^{-\alpha_{Y}(t-s)}+\frac{\mu_{Y}+\kappa_{L}^{\prime}%
(0)}{\alpha_{Y}}(1-e^{-\alpha_{Y}(t-s)})+\int_{s}^{t}\int_{0}^{\infty
}e^{-\alpha_{Y}(t-u)}z\tilde{N}^{L}(du,dz), \label{Equ_Y_Explicit_P}%
\end{align}
where $0\leq s\leq t\leq T.$

\begin{remark}
\label{Remark_X_Gaussian}Using that the stochastic integral of a deterministic
function is Gaussian, one easily gets that $X$ is a Gaussian process and
$X(t)\sim\mathcal{N}(m_{t},\Sigma_{t}^{2})$ with%
\begin{align*}
m_{t}  &  =X(0)e^{-\alpha_{X}t}+\frac{\mu_{X}}{\alpha_{X}}(1-e^{-\alpha_{X}%
t}),\quad t\in\lbrack0,T],\\
\Sigma_{t}^{2}  &  =\frac{\sigma_{X}^{2}}{2\alpha_{X}}(1-e^{-2\alpha_{X}%
t}),\quad t\in\lbrack0,T].
\end{align*}

\end{remark}

\section{The change of measure\label{SecChangeOfMeasure}}

We will consider a parametrized family of measure changes which will allow us
to simultaneously modify the speed and the level of mean reversion in
equations $\left(  \ref{Equ_OU_Brownian}\right)  $ and $\left(
\ref{Equ_OU_Levy}\right)  $. The density processes of these measure changes
will be determined by the stochastic exponential of certain martingales. To
this end consider the following families of kernels%
\begin{align}
G_{\theta_{1},\beta_{1}}(t)  &  \triangleq\sigma_{X}^{-1}\left(  \theta
_{1}+\alpha_{X}\beta_{1}X(t)\right)  ,\quad t\in\lbrack0,T],\label{Equ_G}\\
H_{\theta_{2},\beta_{2}}(t,z)  &  \triangleq e^{\theta_{2}z}\left(
1+\frac{\alpha_{Y}\beta_{2}}{\kappa_{L}^{\prime\prime}(\theta_{2}%
)}zY(t-)\right)  ,\quad t\in\lbrack0,T],z\in\mathbb{R}. \label{Equ_H}%
\end{align}
The parameters $\bar{\beta}\triangleq(\beta_{1},\beta_{2})$ and $\bar{\theta
}\triangleq(\theta_{1},\theta_{2})$ will take values on the following sets
$\bar{\beta}\in\lbrack0,1]^{2},\bar{\theta}\in\bar{D}_{L}\triangleq
\mathbb{R}\times D_{L},$where $D_{L}\triangleq(-\infty,\Theta_{L}/2)$ and
$\Theta_{L}$ is given by equation $\left(  \ref{Def_Theta_L}\right)  .$ By
Assumption $\left(  \ref{Assumption_Exp_Theta_L}\right)  $ and Remarks
\ref{Remark_Cumulants} and \ref{Remark_D_Cumulants} these kernels are well defined.

\begin{remark}
Under the assumption $\int_{0}^{\infty}z^{3}e^{\Theta_{L}z}\ell(dz)<\infty,$
which is stronger than $\int_{0}^{\infty}e^{\Theta_{L}z}\ell(dz)<\infty,$ one
can consider the set \textrm{cl}($D_{L})=(-\infty,\Theta_{L}/2]$ and our
results still hold by changing $\kappa_{L}^{\prime}(\theta),\kappa_{L}%
^{\prime\prime}(\theta)$ and $\kappa_{L}^{(3)}(\theta)$ by its left
derivatives at the rigth end of $D_{L}.$
\end{remark}

\begin{example}
\label{Example_Subordinators} Typical examples of $\ell,\Theta_{L}$ and
$D_{L}$ are the following:

\begin{enumerate}
\item Bounded support: $L$ has a jump of size $a,$ i.e. $\ell=\delta_{a}.$ In
this case $\Theta_{L}=\infty$ and $D_{L}=\mathbb{R}.$

\item Finite activity: $L$ is a compound Poisson process with exponential
jumps, i.e., $\ell(dz)=ce^{-\lambda z}1_{(0,\infty)}\allowbreak dz,$ for some
$c>0$ and $\lambda>0.$ In this case $\Theta_{L}=\lambda$ and $D_{L}%
=(-\infty,\lambda/2).$

\item Infinite activity: $L$ is a tempered stable subordinator, i.e.,
$\ell(dz)=cz^{-(1+\alpha)}\allowbreak e^{-\lambda z}\allowbreak1_{(0,\infty
)}dz,$ for some $c>0,\lambda>0$ and $\alpha\in\lbrack0,1).$ In this case also
$\Theta_{L}=\lambda$ and $D_{L}=(-\infty,\lambda/2).$
\end{enumerate}
\end{example}

Next, for $\bar{\beta}\in\lbrack0,1]^{2},\bar{\theta}\in\bar{D}_{L},$ define
the following family of Wiener and Poisson integrals%
\begin{align}
\tilde{G}_{\theta_{1},\beta_{1}}(t)  &  \triangleq\int_{0}^{t}G_{\theta
_{1},\beta_{1}}(s)dW(s),\quad t\in\lbrack0,T],\label{Equ_G_tilda}\\
\tilde{H}_{\theta_{2},\beta_{2}}(t)  &  \triangleq\int_{0}^{t}\int_{0}%
^{\infty}\left(  H_{\theta_{2},\beta_{2}}(s,z)-1\right)  \tilde{N}%
^{L}(ds,dz),\quad t\in\lbrack0,T], \label{Equ_H_tilda}%
\end{align}
associated to the kernels $G_{\theta_{1},\beta_{1}}$ and $H_{\theta_{2}%
,\beta_{2}},$ respectively.

\begin{remark}
\label{Remark_Stoch_Exp}Let $M$ be a semimartingale on $(\Omega,\mathcal{F}%
,\{\mathcal{F}_{t}\}_{t\in\lbrack0,T]},P)$ and denote by $\mathcal{E}(M)$ the
stochastic exponential of $M,$ that is, the unique strong solution of
\begin{align*}
d\mathcal{E}(M)(t)  &  =\mathcal{E}(M)(t-)dM(t),\quad t\in\lbrack0,T],\\
\mathcal{E}(M)(t)  &  =1.
\end{align*}
When $M$ is a local martingale, $\mathcal{E}(M)$ is also a local martingale.
If $\mathcal{E}(M)$ is positive, then $\mathcal{E}(M)$ is also a
supermartingale and $\mathbb{E}_{P}[\mathcal{E}(M)(t)]\leq1,t\in\lbrack0,T].$
In that case, one has that $\mathcal{E}(M)$ is a true martingale if and only
$\mathbb{E}_{P}[\mathcal{E}(M)(T)]=1.$ If $\mathcal{E}(M)$ is a positive true
martingale, it can be used as a density process to define a new probability
measure $Q,$ equivalent to $P,$ that is, $\left.  \frac{dQ}{dP}\right\vert
_{\mathcal{F}_{t}}=\mathcal{E}(M)(t),t\in\lbrack0,T].$
\end{remark}

The desired family of measure changes is given by $Q_{\bar{\theta},\bar{\beta
}}\sim P,\bar{\beta}\in\lbrack0,1]^{2},\bar{\theta}\in\bar{D}_{L},$ with%
\begin{equation}
\left.  \frac{dQ_{\bar{\theta},\bar{\beta}}}{dP}\right\vert _{\mathcal{F}_{t}%
}\triangleq\mathcal{E}(\tilde{G}_{\theta_{1},\beta_{1}}+\tilde{H}_{\theta
_{2},\beta_{2}})(t),\quad t\in\lbrack0,T], \label{Equ_Def_Q_theta_beta_p}%
\end{equation}
where we are implicitly assuming that $\mathcal{E}(\tilde{G}_{\theta_{1}%
,\beta_{1}}+\tilde{H}_{\theta_{2},\beta_{2}})$ is a strictly positive true
martingale. Then, by Girsanov's theorem for semimartingales (Thm. 1 and 3, p.
702 and 703 in Shiryaev \cite{Sh99}), the process $X(t)$ and $Y(t)$ become
\begin{align}
X(t)  &  =X(0)+B_{Q_{\bar{\theta},\bar{\beta}}}^{X}(t)+\sigma_{X}%
W_{Q_{\bar{\theta},\bar{\beta}}}(t),\quad t\in\lbrack0,T],\nonumber\\
Y(t)  &  =Y(0)+B_{Q_{\bar{\theta},\bar{\beta}}}^{Y}(t)+\int_{0}^{t}\int
_{0}^{\infty}z\tilde{N}_{Q_{\bar{\theta},\bar{\beta}}}^{L}(ds,dz),\quad
t\in\lbrack0,T], \label{Equ_GirsanovY}%
\end{align}
with%
\begin{align}
B_{Q_{\bar{\theta},\bar{\beta}}}^{X}(t)  &  =\int_{0}^{t}(\mu_{X}+\theta
_{1}-\alpha_{X}(1-\beta_{1})X(s))ds,\quad t\in\lbrack
0,T],\label{Equ_X_Drift_Q}\\
B_{Q_{\bar{\theta},\bar{\beta}}}^{Y}(t)  &  =\int_{0}^{t}(\mu_{Y}+\kappa
_{L}^{\prime}(0)-\alpha_{Y}Y(s))ds+\int_{0}^{t}\int_{0}^{\infty}%
z(H_{\theta_{2},\beta_{2}}(s,z)-1)\ell(dz)ds\label{Equ_Y_Drift_Q}\\
&  =\int_{0}^{t}\{(\mu_{Y}+\kappa_{L}^{\prime}(0)-\alpha_{Y}Y(s))+\int
_{0}^{\infty}z(e^{\theta_{2}z}-1)\ell(dz)\nonumber\\
&  \qquad+\frac{\alpha_{Y}\beta_{2}}{\kappa_{L}^{\prime\prime}(\theta_{2}%
)}\int_{0}^{\infty}z^{2}e^{\theta_{2}z}\ell(dz)Y(s-)\}ds\nonumber\\
&  =\int_{0}^{t}\left(  \mu_{Y}+\kappa_{L}^{\prime}(\theta_{2})-\alpha
_{Y}(1-\beta_{2})Y(s)\right)  ds,\quad t\in\lbrack0,T],\nonumber
\end{align}
where $W_{Q_{\bar{\theta},\bar{\beta}}}$ is a $Q_{\bar{\theta},\bar{\beta}}%
$-standard Wiener process and the $Q_{\bar{\theta},\bar{\beta}}$-compensator
measure of $Y$ (and $L$) is
\[
v_{Q_{\bar{\theta},\bar{\beta}}}^{Y}(dt,dz)=v_{Q_{\bar{\theta},\bar{\beta}}%
}^{L}(dt,dz)=H_{\theta_{2},\beta_{2}}(t,z)\ell(dz)dt.
\]
In conclusion, the semimartingale triplet for $X$ and $Y$ under $Q_{\bar
{\theta},\bar{\beta}}$ are given by $(B_{Q_{\bar{\theta},\bar{\beta}}}%
^{X},\sigma_{X}^{2}t,0)$ and $(B_{Q_{\bar{\theta},\bar{\beta}}}^{Y}%
,0,v_{Q_{\bar{\theta},\bar{\beta}}}^{Y}),$ respectively.

\begin{remark}
\label{Remark_Dynamics_Q}Under $Q_{\bar{\theta},\bar{\beta}},$ $X$ and $Y$
still satisfy Langevin equations with different parameters, that is, the
measure change preserves the structure of the equations. The process $L$ is
not a L\'{e}vy process under $Q_{\bar{\theta},\bar{\beta}}$, but it remains a
semimartingale. Therefore, one can use It\^{o} formula again to obtain the
following explicit expressions for $X$ and $Y$%
\begin{align}
X(t)  &  =X(s)e^{-\alpha_{X}(1-\beta_{1})(t-s)}+\frac{\mu_{X}+\theta_{1}%
}{\alpha_{X}(1-\beta_{1})}(1-e^{-\alpha_{X}(1-\beta_{1})(t-s)}%
)\label{Equ_X_Explicit_Q}\\
&  \qquad+\sigma_{X}\int_{s}^{t}e^{-\alpha_{X}(1-\beta_{1})(t-u)}%
dW_{Q_{\bar{\theta},\bar{\beta}}}(u),\nonumber\\
Y(t)  &  =Y(s)e^{-\alpha_{Y}(1-\beta_{2})(t-s)}+\frac{\mu_{Y}+\kappa
_{L}^{\prime}(\theta_{2})}{\alpha_{Y}(1-\beta_{2})}(1-e^{-\alpha_{Y}%
(1-\beta_{2})(t-s)})\label{Equ_Y_Explicit_Q}\\
&  \qquad+\int_{s}^{t}\int_{0}^{\infty}e^{-\alpha_{Y}(1-\beta_{2}%
)(t-u)}z\tilde{N}_{Q_{\bar{\theta},\bar{\beta}}}^{L}(du,dz),\nonumber
\end{align}
where $0\leq s\leq t\leq T.$
\end{remark}

\begin{remark}
Looking at equations $\left(  \ref{Equ_X_Drift_Q}\right)  $ and $\left(
\ref{Equ_Y_Drift_Q}\right)  $, one can see how the values of the parameters
$\bar{\theta}$ and $\bar{\beta}$ change the drift. Setting $\bar{\theta
}=(0,0)$ we keep fixed the level to which the process reverts and change the
speed of mean reversion by changing $\bar{\beta}$. If $\bar{\beta}=(0,0)$ we
fix the speed of mean reversion and change the level by changing $\bar{\theta
}.$ By choosing $\beta_{1}=1$, say, we observe that $X(t)$ in
\eqref{Equ_X_Explicit_Q} becomes (using a limit consideration in the second
term)
\begin{equation}
X(t)=X(s)+(\mu_{X}+\theta_{1})(t-s)+\sigma_{X}(W_{Q_{\bar{\theta},\bar{\beta}%
}}(t)-W_{Q_{\bar{\theta},\bar{\beta}}}(s))\,.
\end{equation}
Hence, $X$ is a drifted Brownian motion and we have a non-stationary dynamics
under the pricing measure with this choice of $\beta_{1}$. Obviously, we can
choose $\beta_{2}=1$ and obtain similarly a non-stationary dynamics for the
jump component as well, however, this will not be driven by a L\'{e}vy process
under $Q_{\bar{\theta},\bar{\beta}}$.
\end{remark}

The previous reasonings rely crucially on the assumption that $Q_{\bar{\theta
},\bar{\beta}}$ is a probability measure. Hence, we have to find sufficient
conditions on the L\'{e}vy process $L$ and the possible values of the
parameters $\bar{\theta}$ and $\bar{\beta}$ that ensure $\mathcal{E}(\tilde
{G}_{\theta_{1},\beta_{1}}+\tilde{H}_{\theta_{2},\beta_{2}})$ to be a true
martingale with strictly positive values. As $[\tilde{G}_{\theta_{1},\beta
_{1}},\tilde{H}_{\theta_{2},\beta_{2}}],$ the quadratic co-variation between
$\tilde{G}_{\theta_{1},\beta_{1}}$ and $\tilde{H}_{\theta_{2},\beta_{2}},$ is
identically zero, by Yor's formula (equation II.8.19 in \cite{JaSh03}) we can
write%
\begin{equation}
\mathcal{E}(\tilde{G}_{\theta_{1},\beta_{1}}+\tilde{H}_{\theta_{2},\beta_{2}%
})(t)=\mathcal{E}(\tilde{G}_{\theta_{1},\beta_{1}})(t)\mathcal{E}(\tilde
{H}_{\theta_{2},\beta_{2}})(t),\quad t\in\lbrack0,T], \label{Equ_Yor_Formula}%
\end{equation}
and, as the stochastic exponential of a continuous process is always positive,
we just need to ensure the positivity of $\mathcal{E}(\tilde{H}_{\theta
_{2},\beta_{2}})(t).$ Assume that $\mathcal{E}(\tilde{H}_{\theta_{2},\beta
_{2}})$ is positive, then remark \ref{Remark_Stoch_Exp} yields that
$\mathcal{E}(\tilde{G}_{\theta_{1},\beta_{1}}+\tilde{H}_{\theta_{2},\beta_{2}%
})$ is a true martingale if and only if $\mathbb{E}_{P}[\mathcal{E}(\tilde
{G}_{\theta_{1},\beta_{1}}+\tilde{H}_{\theta_{2},\beta_{2}})(T)]=1.$ Using the
independence of $\tilde{G}_{\theta_{1},\beta_{1}}$ and $\tilde{H}_{\theta
_{2},\beta_{2}}$ and the identity $\left(  \ref{Equ_Yor_Formula}\right)  ,$ we
get
\[
\mathbb{E}_{P}[\mathcal{E}(\tilde{G}_{\theta_{1},\beta_{1}}+\tilde{H}%
_{\theta_{2},\beta_{2}})(T)]=\mathbb{E}_{P}[\mathcal{E}(\tilde{G}_{\theta
_{1},\beta_{1}})(T)]\mathbb{E}_{P}[\mathcal{E}(\tilde{H}_{\theta_{2},\beta
_{2}})(T)],
\]
showing that $\mathcal{E}(\tilde{G}_{\theta_{1},\beta_{1}}+\tilde{H}%
_{\theta_{2},\beta_{2}})$ is a martingale if and only if $\mathcal{E}%
(\tilde{G}_{\theta_{1},\beta_{1}})$ and $\mathcal{E}(\tilde{H}_{\theta
_{2},\beta_{2}})$ are also martingales. Hence, we can write%
\[
\left.  \frac{dQ_{\bar{\theta},\bar{\beta}}}{dP}\right\vert _{\mathcal{F}_{t}%
}=\left.  \frac{dQ_{\theta_{1},\beta_{1}}}{dP}\right\vert _{\mathcal{F}_{t}%
}\times\left.  \frac{dQ_{\theta_{2},\beta_{2}}}{dP}\right\vert _{\mathcal{F}%
_{t}},\quad t\in\lbrack0,T],
\]
where $\left.  \frac{dQ_{\theta_{1},\beta_{1}}}{dP}\right\vert _{\mathcal{F}%
_{t}}\triangleq\mathcal{E}(\tilde{G}_{\theta_{1},\beta_{1}})(t)$ and $\left.
\frac{dQ_{\theta_{2},\beta_{2}}}{dP}\right\vert _{\mathcal{F}_{t}}%
\triangleq\mathcal{E}(\tilde{H}_{\theta_{2},\beta_{2}})(t),t\in\lbrack0,T].$

The previous reasonings allow us to reduce the proof that $Q_{\bar{\theta
},\bar{\beta}}$ is a probability measure equivalent to $P,Q_{\bar{\theta}%
,\bar{\beta}}\sim P$, to prove that $\mathcal{E}(\tilde{G}_{\theta_{1}%
,\beta_{1}})$ is martingale (or $Q_{\theta_{1},\beta_{1}}\sim P$) and
$\mathcal{E}(\tilde{H}_{\theta_{2},\beta_{2}})$ is a martingale with strictly
positive values (or $Q_{\theta_{2},\beta_{2}}\sim P$). The literature on this
topic is huge, see for instance Kazamaki \cite{Ka77}, Novikov \cite{No80},
L\'{e}pingle and M\'{e}min \cite{LeMe78} and Kallsen and Shiryaev
\cite{KaSh02}. The main difficulty when trying to use the classical criteria
is that our kernels depend on the processes $X$ and $Y,$ which are unbounded.
To prove that $\mathcal{E}(\tilde{G}_{\theta_{1},\beta_{1}})$ is a martingale
one could use a localized version of Novikov's criterion. However, this
approach would entail to show that the expectation of the exponential of the
integral of a stochastic iterated integral of order two is finite. Although
these computations seem feasible, they are definitely very stodgy. On the
other hand, the most widely used sufficient criterion for martingales with
jumps is the L\'{e}pingle-M\'{e}min criterion. This criterion is very general
but the conditions obtained are far from optimal. Using this criterion we are
only able to prove the result by requiring the L\'{e}vy process $L$ to have
bounded jumps.

In a very recent paper, assuming some structure on the processes, Klebaner and
Lipster \cite{KleLi11} give a fairly general criterion which seems easier to
apply than those of Novikov and L\'{e}pingle-M\'{e}min. Although we can not
apply directly their criteria, at least not in the pure jump case, we can
reason similarly to prove the desired result for $\mathcal{E}(\tilde
{G}_{\theta_{1},\beta_{1}})$ and $\mathcal{E}(\tilde{H}_{\theta_{2},\beta_{2}%
}).$

Finally, note that these results can be extended, in a straightforward manner,
to any finite number of Langevin equations driven by Brownian motions and
L\'{e}vy processes, independent of each other. In the following two
subsections, we will drop the subindices in the parameters $\theta$ and
$\beta.$

\subsection{Brownian driven OU-process}

We first show that the process $\tilde{G}_{\theta,\beta}$ is a martingale
under $P$.

\begin{proposition}
\label{Prop_Gtilda_martingale}Let $\theta\in\mathbb{R}$ and $\beta\in
\lbrack0,1]$. Then, $\tilde{G}_{\theta,\beta}=\{\tilde{G}_{\theta,\beta
}(t)\}_{t\in\lbrack0,T]}$, defined by $\left(  \ref{Equ_G_tilda}\right)  ,$ is
a square integrable martingale under $P$.
\end{proposition}

\begin{proof}
We have to show that $G_{\theta,\beta}\in L^{2}(\Omega\times\lbrack
0,T];P\otimes\mathrm{Leb}).$ We get
\[
\mathbb{E}_{P}[\int_{0}^{T}G_{\theta,\beta}(t)^{2}dt]\leq2\sigma_{X}%
^{-2}\{\theta^{2}T+\alpha_{X}^{2}\mathbb{E}_{P}[\int_{0}^{T}X(t)^{2}dt]\}.
\]
By remark \ref{Remark_X_Gaussian} and the properties of the Gaussian
distribution, one has
\[
\mathbb{E}_{P}[\int_{0}^{T}X(t)^{2}dt]=\int_{0}^{T}(m_{t}^{2}+\Sigma_{t}%
^{2})dt\leq T\sup_{t\in\lbrack0,T]}(m_{t}^{2}+\Sigma_{t}^{2})<\infty,
\]
because $m_{t}$ and $\Sigma_{t}$ are continuous functions on $[0,T].$
\end{proof}

\begin{theorem}
\label{Theo_StochExp_Gtilda_martingale}Let $\theta\in\mathbb{R}$ and $\beta
\in\lbrack0,1]$. Then $\mathcal{E}(\tilde{G}_{\theta,\beta})=\{\mathcal{E}%
(\tilde{G}_{\theta,\beta})(t)\}_{t\in\lbrack0,T]}$ is a martingale under $P.$
\end{theorem}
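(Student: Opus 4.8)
The plan is to show $\mathcal{E}(\tilde G_{\theta,\beta})$ is a true martingale by exhibiting a sequence of stopping times that localizes it and then proving uniform integrability of the stopped values at the terminal time $T$. Since $\tilde G_{\theta,\beta}$ is a continuous local martingale (indeed a square-integrable martingale by Proposition \ref{Prop_Gtilda_martingale}), its stochastic exponential $\mathcal{E}(\tilde G_{\theta,\beta})$ is a positive continuous local martingale, hence a supermartingale with $\mathbb{E}_P[\mathcal{E}(\tilde G_{\theta,\beta})(t)]\le 1$. By Remark \ref{Remark_Stoch_Exp}, it suffices to verify $\mathbb{E}_P[\mathcal{E}(\tilde G_{\theta,\beta})(T)]=1$, and I would obtain this from uniform integrability of the localized sequence.

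Following the Klebaner–Lipster line of reasoning sketched in the introduction, I would first introduce localizing stopping times $\tau_n$ and define $Z_n\triangleq\mathcal{E}(\tilde G_{\theta,\beta})(T\wedge\tau_n)$; each $Z_n$ is the terminal value of a genuine martingale with $\mathbb{E}_P[Z_n]=1$, and the $Z_n$ induce equivalent probability measures $Q_n$ with $dQ_n/dP=Z_n$. The key trick is that uniform integrability of $\{Z_n\}$ reduces to controlling $\mathbb{E}_P[Z_n\log Z_n]=\mathbb{E}_{Q_n}[\log Z_n]$ uniformly in $n$ (via a de la Vallée–Poussin / entropy argument, since $x\log x$ is superlinear). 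Writing out
\[
\log \mathcal{E}(\tilde G_{\theta,\beta})(t)=\int_0^t G_{\theta,\beta}(s)\,dW(s)-\tfrac12\int_0^t G_{\theta,\beta}(s)^2\,ds,
\]
and taking $\mathbb{E}_{Q_n}$, the martingale part of the stochastic integral vanishes under $Q_n$ (up to the Girsanov shift), so that $\mathbb{E}_{Q_n}[\log Z_n]$ collapses to an expression controlled by $\mathbb{E}_{Q_n}[\int_0^{T}G_{\theta,\beta}(s)^2\,ds]$. Since $G_{\theta,\beta}(s)=\sigma_X^{-1}(\theta+\alpha_X\beta X(s))$, this is up to constants a bound on $\mathbb{E}_{Q_n}[\int_0^T X(s)^2\,ds]$.

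Thus the whole argument funnels into a single analytic estimate: a uniform-in-$n$ bound for the second moment of $X$ under the measures $Q_n$. Here I would use that, under $Q_n$ (before the localization is removed, i.e. on $[0,T\wedge\tau_n]$), the Girsanov theorem turns $X$ into an Ornstein–Uhlenbeck process with shifted drift and \emph{reduced} mean-reversion speed $\alpha_X(1-\beta)\ge 0$, as recorded in \eqref{Equ_X_Drift_Q}–\eqref{Equ_X_Explicit_Q}. The crucial structural point is that $\beta\in[0,1]$ guarantees the effective mean-reversion rate stays nonnegative, so the explicit OU representation \eqref{Equ_X_Explicit_Q} gives a second moment that grows at most polynomially and is bounded on $[0,T]$ uniformly in the localization; I would make this rigorous by a Gronwall-type inequality applied to $\varphi_n(t)\triangleq\mathbb{E}_{Q_n}[X(t\wedge\tau_n)^2]$, obtained from Itô's formula on $X^2$.

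The main obstacle I anticipate is the bookkeeping around localization: one must carry the stopping times $\tau_n$ through the change of measure carefully, since $Q_n$ is only defined via the \emph{stopped} density, and verify that the drift shift and the Gronwall bound are uniform in $n$ so that the limit $n\to\infty$ can be taken. The boundary case $\beta=1$ (where $X$ degenerates to a drifted Brownian motion and the mean reversion vanishes) needs separate but easy handling, since the second moment then grows only linearly in $t$ and is still bounded on $[0,T]$. Once the uniform second-moment bound is in hand, uniform integrability of $\{Z_n\}$ follows, $Z_n\to\mathcal{E}(\tilde G_{\theta,\beta})(T)$ in $L^1(P)$, and $\mathbb{E}_P[\mathcal{E}(\tilde G_{\theta,\beta})(T)]=\lim_n\mathbb{E}_P[Z_n]=1$, completing the proof.
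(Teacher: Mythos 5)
Your proposal is correct and follows essentially the same route as the paper's proof: localization, uniform integrability via the superlinear test function $x\log x$, the change of measure induced by the stopped densities, Girsanov to reduce $\mathbb{E}_{Q_n}[\log Z_n]$ to $\mathbb{E}_{Q_n}[\int_0^T \boldsymbol{1}_{[0,\tau_n]}(s)G_{\theta,\beta}(s)^2\,ds]$, and finally a uniform second-moment bound for $X$ under $Q_n$ exploiting the OU structure with reduced mean-reversion speed $\alpha_X(1-\beta)$. The only cosmetic differences are that the paper bounds the second moment directly from the explicit representation \eqref{Equ_X_Explicit_Q} (using $\eta(x)=(1-e^{-xa})/x\leq a$, which also covers $\beta=1$ without a separate case) rather than via Gronwall, and it uses the one-sided bound $\log\mathcal{E}\leq\tilde{G}^{\tau_n}$ rather than the exact Girsanov computation of $\mathbb{E}_{Q_n}[\log Z_n]$.
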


\begin{proof}
As $\tilde{G}_{\theta,\beta}$ is a martingale with continuous paths, we have
that $\mathcal{E}(\tilde{G}_{\theta,\beta})$ is a positive local martingale.
By remark \ref{Remark_Stoch_Exp}, it suffices to prove that $\mathbb{E}%
_{P}[\mathcal{E}(\tilde{G}_{\theta,\beta})(T)]=1.$ Note that the sequence of
stopping times $\tau_{n}=\inf\{t:\mathcal{E}(\tilde{G}_{\theta,\beta
})>n\}\wedge T,n\geq1$ is a reducing sequence for $\mathcal{E}(\tilde
{G}_{\theta,\beta}).$ That is, $\tau_{n}$ converges a.s. to $T$ and, for every
$n\geq1$ fixed, the stopped process $\mathcal{E}(\tilde{G}_{\theta,\beta
})^{\tau_{n}}(t)\triangleq\mathcal{E}(\tilde{G}_{\theta,\beta})(t\wedge
\tau_{n})$ is a (bounded) martingale on $[0,T]$. Therefore, $\mathbb{E}%
_{P}[\mathcal{E}(\tilde{G}_{\theta,\beta})^{\tau_{n}}(T)]=\mathbb{E}%
_{P}[\mathcal{E}(\tilde{G}_{\theta,\beta})^{\tau_{n}}(0)]=1,n\geq1,$ and if we
show that
\begin{equation}
\lim_{n\rightarrow\infty}\mathbb{E}_{P}[\mathcal{E}(\tilde{G}_{\theta,\beta
})^{\tau_{n}}(T)]=\mathbb{E}_{P}[\mathcal{E}(\tilde{G}_{\theta,\beta})(T)]
\label{Equ_Conv_Expectations}%
\end{equation}
we will have finished. To show $\left(  \ref{Equ_Conv_Expectations}\right)  $
is equivalent to show the uniform integrability of the sequence of random
variable $\{\mathcal{E}(\tilde{G}_{\theta,\beta})^{\tau_{n}}(T)\}_{n\geq1},$
that is, to show
\[
\lim_{M\rightarrow\infty}\sup_{n\geq1}\mathbb{E}_{P}[\mathcal{E}(\tilde
{G}_{\theta,\beta})^{\tau_{n}}(T)\boldsymbol{1}_{\{\mathcal{E}(\tilde
{G}_{\theta,\beta})^{\tau_{n}}(T)>M\}}]=0.
\]
It is not difficult to prove that if $\Lambda(t)$ is a non-negative function
such that $\lim_{t\rightarrow\infty}\Lambda(t)/t=\infty$ and
\[
\sup_{n\geq1}\mathbb{E}_{P}[\Lambda(\mathcal{E}(\tilde{G}_{\theta,\beta
})^{\tau_{n}}(T))]<\infty,
\]
then $\{\mathcal{E}(\tilde{G}_{\theta,\beta})^{\tau_{n}}(T)\}_{n\geq1}$ is
uniformly integrable. We consider the test function $\Lambda(t)=1+t\log(t).$
Hence, it suffices to prove that
\begin{equation}
\sup_{n\geq1}\mathbb{E}_{P}[\mathcal{E}(\tilde{G}_{\theta,\beta})^{\tau_{n}%
}(T)\log(\mathcal{E}(\tilde{G}_{\theta,\beta})^{\tau_{n}}(T))]<\infty.
\label{Equ_UI_Stoch_Exp_Gtau_n}%
\end{equation}
Note that we can use the sequence of martingales on $[0,T]$ given by
$\{\mathcal{E}(\tilde{G}_{\theta,\beta})^{\tau_{n}}\}_{n\geq1}$ to define a
sequence of probability measures $\{Q_{\theta,\beta}^{n}\}_{n\geq1}$ with
Radon-Nykodim densities given by $\left.  \frac{dQ_{\theta,\beta}^{n}}%
{dP}\right\vert _{\mathcal{F}_{t}}\triangleq\mathcal{E}(\tilde{G}%
_{\theta,\beta})^{\tau_{n}}(t),t\in\lbrack0,T],n\geq1.$ In addition, one has
that
\begin{align}
\mathcal{E}(\tilde{G}_{\theta,\beta})^{\tau_{n}}(t)  &  =\exp\left(  \int
_{0}^{t\wedge\tau_{n}}G_{\theta,\beta}(s)dW(s)-\frac{1}{2}\int_{0}%
^{t\wedge\tau_{n}}G_{\theta,\beta}(s)^{2}ds\right)
\label{Equ_Stoch_Exp_Gtau_n}\\
&  =\exp\left(  \int_{0}^{t}\boldsymbol{1}_{[0,\tau_{n}]}(s)G_{\theta,\beta
}(s)dW(s)-\frac{1}{2}\int_{0}^{t}(\boldsymbol{1}_{[0,\tau_{n}]}(s)G_{\theta
,\beta}(s))^{2}ds\right) \nonumber\\
&  =\mathcal{E}(\tilde{G}_{\theta,\beta}^{n})(t),\quad t\in\lbrack
0,T],n\geq1,\nonumber
\end{align}
where $\tilde{G}_{\theta,\beta}^{n}(t)\triangleq\int_{0}^{t}\boldsymbol{1}%
_{[0,\tau_{n}]}(s)G_{\theta,\beta}(s)dW(s),t\in\lbrack0,T],n\geq1.$ On the
other hand, from $\left(  \ref{Equ_Stoch_Exp_Gtau_n}\right)  ,$ we have the
trivial bound $\log(\mathcal{E}(\tilde{G}_{\theta,\beta})^{\tau_{n}}%
(T))\leq\tilde{G}_{\theta,\beta}^{\tau_{n}}(T).$ Combining the last bound with
the change of measure given by $\{Q_{\theta,\beta}^{n}\}_{n\geq1}$ we get
that
\begin{equation}
\sup_{n\geq1}\mathbb{E}_{Q_{\theta,\beta}^{n}}[\tilde{G}_{\theta,\beta}%
^{\tau_{n}}(T)]<\infty, \label{Equ_UnifBoundGTilda}%
\end{equation}
implies that $\left(  \ref{Equ_UI_Stoch_Exp_Gtau_n}\right)  $ holds. Applying
Girsanov's Theorem, we can write
\[
\tilde{G}_{\theta,\beta}^{\tau_{n}}(T)=\int_{0}^{T}\boldsymbol{1}_{[0,\tau
_{n}]}(t)(G_{\theta,\beta}(t))^{2}dt+\int_{0}^{T}\boldsymbol{1}_{[0,\tau_{n}%
]}(t)G_{\theta,\beta}(t)dW_{Q_{\theta,\beta}^{n}}(t),
\]
where $W_{Q_{\theta,\beta}^{n}}$ is a $Q_{\theta,\beta}^{n}$-Brownian motion.
Therefore, it suffices to prove that
\begin{equation}
\sup_{n\geq1}\mathbb{E}_{Q_{\theta,\beta}^{n}}[\int_{0}^{T}\boldsymbol{1}%
_{[0,\tau_{n}]}(t)(G_{\theta,\beta}(t))^{2}dt]<\infty,
\label{EquSupBound_G_Square}%
\end{equation}
because this imply that $\int_{0}^{T\wedge\tau_{n}}G_{\theta,\beta
}(t)dW_{Q_{\theta,\beta}^{n}}(t)$ is a $Q_{\theta,\beta}^{n}$-martingale with
zero expectation and, in passing, that $\left(  \ref{Equ_UnifBoundGTilda}%
\right)  $ holds. Now we proceed as in the proof of Proposition
\ref{Prop_Gtilda_martingale}. We have that%
\[
\mathbb{E}_{Q_{\theta,\beta}^{n}}[\int_{0}^{T}\boldsymbol{1}_{[0,\tau_{n}%
]}(t)(G_{\theta,\beta}(t))^{2}dt]\leq2\sigma_{X}^{-2}\{\theta^{2}T+\alpha
_{X}^{2}\mathbb{E}_{Q_{\theta,\beta}^{n}}[\int_{0}^{T}\boldsymbol{1}%
_{[0,\tau_{n}]}(t)X(t)^{2}dt]\},
\]
but now the term with $X(t)^{2}$ is more delicate to treat. Using Remark
\ref{Remark_X_Gaussian}, we know that $X(t)$ conditioned to $\tau_{n}$ is
Gaussian, but we do not know the distribution of $\tau_{n}$ and, hence, a
direct computation of $\mathbb{E}_{Q_{\theta,\beta}^{n}}[\boldsymbol{1}%
_{[0,\tau_{n}]}(t)X(t)^{2}]$ is not possible. However, we have that
\begin{align*}
&  \mathbb{E}_{Q_{\theta,\beta}^{n}}[\int_{0}^{T}\boldsymbol{1}_{[0,\tau_{n}%
]}(t)X(t)^{2}dt]\\
&  \leq2\{\mathbb{E}_{Q_{\theta,\beta}^{n}}[\int_{0}^{T}\boldsymbol{1}%
_{[0,\tau_{n}]}(t)\left(  X(0)e^{-\alpha_{X}(1-\beta)t}+\frac{\mu_{X}+\theta
}{\alpha_{X}(1-\beta)}\left(  1-e^{-\alpha_{X}(1-\beta)t}\right)  \right)
^{2}dt]\\
&  \qquad+\sigma_{X}^{2}\mathbb{E}_{Q_{\theta,\beta}^{n}}[\int_{0}%
^{T}\boldsymbol{1}_{[0,\tau_{n}]}(t)\left(  \int_{0}^{t}e^{-\alpha_{X}%
(1-\beta)(t-u)}dW_{Q_{\theta,\beta}^{n}}(u)\right)  ^{2}dt]\}\\
&  \leq2T\{\left(  \left\vert X(0)\right\vert +\left(  \left\vert \mu
_{X}\right\vert +\left\vert \theta\right\vert \right)  T\right)  ^{2}%
+\sigma_{X}^{2}T\}<\infty,
\end{align*}
where we have used that the function $\eta(x)\triangleq(1-e^{-xa})/x\leq a$
for $x,a\geq0,$ and that
\[
\mathbb{E}_{Q_{\theta,\beta}^{n}}\left[  \left(  \int_{0}^{t}e^{-\alpha
_{X}(1-\beta)(t-u)}dW_{Q_{\theta,\beta}^{n}}(u)\right)  ^{2}\right]  =\int
_{0}^{t}e^{-2\alpha_{X}(1-\beta)(t-u)}du\leq T.
\]
Hence, we have shown $\left(  \ref{EquSupBound_G_Square}\right)  $ and the
result follows.
\end{proof}

\subsection{L\'{e}vy driven OU-processes}

First we will prove that $\tilde{H}_{\theta,\beta}$ is a square integrable martingale.

\begin{proposition}
\label{Prop_Htilda_martingale}Let $\theta\in D_{L},\beta\in\lbrack0,1]$. Then
$\tilde{H}_{\theta,\beta}=\{\tilde{H}_{\theta,\beta}(t)\}_{t\in\lbrack0,T]},$
defined by $(\ref{Equ_H_tilda})$, is a square integrable martingale under $P.$
\end{proposition}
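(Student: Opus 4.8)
The plan is to reduce the statement to a single integrability condition and then estimate it. By the $L^{2}$-isometry for stochastic integrals against a compensated Poisson random measure, the process $\tilde{H}_{\theta,\beta}$ defined by $\left(\ref{Equ_H_tilda}\right)$ is a square integrable martingale under $P$ as soon as its integrand lies in $L^{2}(\Omega\times[0,T]\times\mathbb{R}_{+};P\otimes\mathrm{Leb}\otimes\ell)$, i.e. as soon as
\[
\mathbb{E}_{P}\Big[\int_{0}^{T}\!\!\int_{0}^{\infty}\big(H_{\theta,\beta}(s,z)-1\big)^{2}\,\ell(dz)\,ds\Big]<\infty .
\]
So the whole proof amounts to bounding this quantity, and I would do it by separating the two contributions hidden in the kernel $\left(\ref{Equ_H}\right)$.

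Writing $H_{\theta,\beta}(s,z)-1=(e^{\theta z}-1)+e^{\theta z}\frac{\alpha_{Y}\beta}{\kappa_{L}^{\prime\prime}(\theta)}zY(s-)$ and using $(a+b)^{2}\le 2a^{2}+2b^{2}$, the inner $z$-integral splits into $\int_{0}^{\infty}(e^{\theta z}-1)^{2}\ell(dz)=\kappa_{L}(2\theta)-2\kappa_{L}(\theta)$ and $\int_{0}^{\infty}e^{2\theta z}z^{2}\ell(dz)=\kappa_{L}^{\prime\prime}(2\theta)$. Both are finite precisely because $\theta\in D_{L}=(-\infty,\Theta_{L}/2)$ forces $2\theta<\Theta_{L}$, so that Remark \ref{Remark_D_Cumulants} applies with argument $2\theta$. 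This is exactly where the factor $1/2$ in the definition of $D_{L}$ enters: squaring the Esscher weight $e^{\theta z}$ produces $e^{2\theta z}$, which must remain $\ell$-integrable against $1$ and against $z^{2}$. I expect this tail integrability at $z=\infty$ to be the only genuinely delicate point; near $z=0$ everything is harmless since $(e^{\theta z}-1)^{2}\sim\theta^{2}z^{2}$ and $e^{2\theta z}z^{2}\sim z^{2}$ there, and $\int_{0}^{1}z^{2}\ell(dz)\le\int_{0}^{1}z\,\ell(dz)<\infty$ by the finite-expectation hypothesis on $L$.

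It then remains to control $\sup_{s\in[0,T]}\mathbb{E}_{P}[Y(s-)^{2}]$, and since $Y(s-)=Y(s)$ $P$-a.s. for each fixed $s$ I would use the explicit representation $\left(\ref{Equ_Y_Explicit_P}\right)$. The two deterministic terms there are continuous, hence bounded, on $[0,T]$, while the It\^{o} isometry gives
\[
\mathbb{E}_{P}\Big[\Big(\int_{0}^{s}\!\!\int_{0}^{\infty}e^{-\alpha_{Y}(s-u)}z\,\tilde{N}^{L}(du,dz)\Big)^{2}\Big]=\kappa_{L}^{\prime\prime}(0)\int_{0}^{s}e^{-2\alpha_{Y}(s-u)}\,du\le\frac{\kappa_{L}^{\prime\prime}(0)}{2\alpha_{Y}},
\]
where $\kappa_{L}^{\prime\prime}(0)=\int_{0}^{\infty}z^{2}\ell(dz)<\infty$ because $0\in D_{L}$. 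Hence $C:=\sup_{s\in[0,T]}\mathbb{E}_{P}[Y(s)^{2}]<\infty$, and combining the pieces the target quantity is bounded by $2T\big(\kappa_{L}(2\theta)-2\kappa_{L}(\theta)\big)+2T\,\frac{\alpha_{Y}^{2}\beta^{2}}{\kappa_{L}^{\prime\prime}(\theta)^{2}}\kappa_{L}^{\prime\prime}(2\theta)\,C<\infty$. This finishes the verification of the $L^{2}$-condition, and with it the square integrable martingale property.
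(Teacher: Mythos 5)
Your proposal is correct and follows essentially the same route as the paper's proof: reduce to the $L^{2}$-condition $\mathbb{E}_{P}[\int_{0}^{T}\!\int_{0}^{\infty}|H_{\theta,\beta}(s,z)-1|^{2}\ell(dz)\,ds]<\infty$, split the kernel into the Esscher part and the $Y$-dependent part, exploit $2\theta<\Theta_{L}$ for the $z$-integrals, and bound $\sup_{t\in[0,T]}\mathbb{E}_{P}[Y(t)^{2}]$ via the explicit representation $\left(\ref{Equ_Y_Explicit_P}\right)$ and the It\^{o} isometry. The only cosmetic difference is that you evaluate $\int_{0}^{\infty}(e^{\theta z}-1)^{2}\ell(dz)$ exactly as $\kappa_{L}(2\theta)-2\kappa_{L}(\theta)$, while the paper bounds it by $\theta^{2}\kappa_{L}^{\prime\prime}(2\theta\vee 0)$ using the mean value theorem.
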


\begin{proof}
According to Ikeda-Watanabe \cite{IkWa81}, p. 59-63, we have to check that
$\mathbb{E}_{P}[\int_{0}^{T}\int_{0}^{\infty}|H_{\theta,\beta}(s,z)-1|^{2}%
\ell(dz)dt]\allowbreak<\infty.$ We can write%
\begin{align*}
\mathbb{E}_{P}[\int_{0}^{T}\int_{0}^{\infty}|H_{\theta,\beta}(s,z)-1|^{2}%
\ell(dz)dt]  &  \leq T\int_{0}^{\infty}|e^{\theta z}-1|^{2}\ell(dz)\\
&  \qquad+\frac{\alpha_{Y}^{2}}{\left(  \kappa_{L}^{\prime\prime}%
(\theta)\right)  ^{2}}\int_{0}^{\infty}e^{2\theta z}z^{2}\ell(dz)\int_{0}%
^{T}\mathbb{E}_{P}[|Y(t)|^{2}]dt.
\end{align*}
By the mean value theorem in integral form we have that $|e^{\theta z}%
-1|^{2}=|\theta z\int_{0}^{1}e^{\lambda\theta z}d\lambda|^{2}\leq\theta
^{2}z^{2}e^{(2\theta\vee0)z}.$ Hence, as $\theta\in D_{L},$%
\[
\int_{0}^{\infty}|e^{\theta z}-1|^{2}\ell(dz)\leq\theta^{2}\int_{0}^{\infty
}z^{2}e^{2\theta z}\ell(dz)=\theta^{2}\kappa_{L}^{\prime\prime}(2\theta
\vee0)<\infty.
\]
Therefore, the result follows by showing that $\sup_{t\in\lbrack
0,T]}\mathbb{E}_{P}[\left\vert Y(t)\right\vert ^{2}]\allowbreak<\infty.$ We
have that%
\begin{align*}
\sup_{t\in\lbrack0,T]}\mathbb{E}_{P}[|Y(t)|^{2}]  &  \leq2\sup_{t\in
\lbrack0,T]}\{\left(  Y(0)e^{-\alpha_{Y}t}+\frac{\mu_{Y}+\kappa_{L}^{\prime
}(0)}{\alpha_{Y}}(1-e^{-\alpha_{Y}t})\right)  ^{2}\\
&  \qquad+\mathbb{E}_{P}[\left(  \int_{0}^{t}\int_{0}^{\infty}ze^{-\alpha
_{Y}(t-s)}\tilde{N}^{L}(ds,dz)\right)  ^{2}]\}\\
&  \leq\left(  Y(0)+\frac{\mu_{Y}+\kappa_{L}^{\prime}(0)}{\alpha_{Y}}\right)
^{2}+\sup_{t\in\lbrack0,T]}\int_{0}^{t}\int_{0}^{\infty}z^{2}e^{-2\alpha
_{Y}(t-s)}\ell(dz)ds\\
&  \leq\left(  Y(0)+\frac{\mu_{Y}+\kappa_{L}^{\prime}(0)}{\alpha_{Y}}\right)
^{2}+T\kappa_{L}^{\prime\prime}(0)<\infty.
\end{align*}

\end{proof}

Note that the stochastic exponential $\mathcal{E}(\tilde{H}_{\theta,\beta})$
satisfies the following SDE%
\[
\mathcal{E}(\tilde{H}_{\theta,\beta})(t)=1+\int_{0}^{t}\mathcal{E}(\tilde
{H}_{\theta,\beta})(s-)d\tilde{H}_{\theta,\beta}(s)=1+\int_{0}^{t}\int
_{0}^{\infty}\mathcal{E}(\tilde{H}_{\theta,\beta})(s-)\left(  \tilde
{H}_{\theta,\beta}(s,z)-1\right)  \tilde{N}^{L}(ds,dz),
\]
and it can be represented explicitly as
\begin{align}
\mathcal{E}(\tilde{H}_{\theta,\beta})(t)  &  =e^{\tilde{H}_{\theta,\beta}%
(t)}\prod_{0<s\leq t}(1+\Delta\tilde{H}_{\theta,\beta}(s))e^{-\Delta\tilde
{H}_{\theta,\beta}(s)}\label{Equ_StochExp_Explicit}\\
&  =\exp\left(  \tilde{H}_{\theta,\beta}(t)-\sum_{0\leq s\leq t}\Delta
\tilde{H}_{\theta,\beta}(s)-\log(1+\Delta\tilde{H}_{\theta,\beta}(s))\right)
,\quad t\in\lbrack0,T].\nonumber
\end{align}
\qquad Hence, a necessary and sufficient condition for the positivity of
$\mathcal{E}(\tilde{H}_{\theta,\beta})$ is that $\Delta\tilde{H}_{\theta
,\beta}>-1,$ up to an evanescent set. Moreover, by the definition of
$\tilde{H}_{\theta,\beta}(t)$ and $H_{\theta,\beta}(t,z)$ we have that
\begin{equation}
\Delta\tilde{H}_{\theta,\beta}(t)=H_{\theta,\beta}(t,\Delta L(t))-1=(e^{\theta
\Delta L(t)}-1)+\frac{\alpha_{Y}\beta}{\kappa_{L}^{\prime\prime}(\theta
)}\Delta L(t)e^{\theta\Delta L(s)}Y(t-),\quad t\in\lbrack0,T],
\label{EquJumpsHtilda}%
\end{equation}
which yields the condition
\begin{equation}
P(\frac{\alpha_{Y}\beta}{\kappa_{L}^{\prime\prime}(\theta)}(\Delta
L(t))Y(t-)>-1,t\in\lbrack0,T])=1. \label{Equ_PositivitySExp}%
\end{equation}

\begin{remark}
\label{Remark_Positivity_of_StocExpHtilda}As we assume that $L$ is a
subordinator and $Y(0)\geq0$ and $\mu\geq0$, we have that $P(Y(t)\geq
0,t\in\lbrack0,T])=1$, condition $\left(  \ref{Equ_PositivitySExp}\right)  $
is automatically satisfied and $\mathcal{E}(\tilde{H}_{\theta,\beta}),$ is
strictly positive.
\end{remark}

\begin{theorem}
\label{Theo_StochExp_Htilda_martingale}Let $\theta\in D_{L}$ and $\beta
\in\lbrack0,1].$ Then $\mathcal{E}(\tilde{H}_{\theta,\beta})=\{\mathcal{E}%
(\tilde{H}_{\theta,\beta})(t)\}_{t\in\lbrack0,T]}$ is a martingale under $P$.
\end{theorem}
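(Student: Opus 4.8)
The plan is to mirror the Brownian-case argument (Theorem \ref{Theo_StochExp_Gtilda_martingale}) as closely as possible, since the author explicitly signals following Klebaner--Lipster \cite{KleLi11} and reducing to a uniform second-moment bound for $Y$ under auxiliary measures. First I would note that by Proposition \ref{Prop_Htilda_martingale} the process $\tilde{H}_{\theta,\beta}$ is a square integrable martingale, so $\mathcal{E}(\tilde{H}_{\theta,\beta})$ is a local martingale; by Remark \ref{Remark_Positivity_of_StocExpHtilda} it is strictly positive, hence a supermartingale with $\mathbb{E}_{P}[\mathcal{E}(\tilde{H}_{\theta,\beta})(t)]\leq 1$. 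By Remark \ref{Remark_Stoch_Exp} it suffices to show $\mathbb{E}_{P}[\mathcal{E}(\tilde{H}_{\theta,\beta})(T)]=1$. I would introduce the reducing sequence of stopping times $\tau_{n}=\inf\{t:\mathcal{E}(\tilde{H}_{\theta,\beta})(t)>n\}\wedge T$, so that each stopped process $\mathcal{E}(\tilde{H}_{\theta,\beta})^{\tau_{n}}$ is a bounded martingale with $\mathbb{E}_{P}[\mathcal{E}(\tilde{H}_{\theta,\beta})^{\tau_{n}}(T)]=1$. Exactly as before, the identity $\mathbb{E}_{P}[\mathcal{E}(\tilde{H}_{\theta,\beta})(T)]=1$ follows once I establish the uniform integrability of $\{\mathcal{E}(\tilde{H}_{\theta,\beta})^{\tau_{n}}(T)\}_{n\geq 1}$, for which I again use the de la Vall\'ee-Poussin test function $\Lambda(t)=1+t\log(t)$, reducing the task to
\[
\sup_{n\geq 1}\mathbb{E}_{P}[\mathcal{E}(\tilde{H}_{\theta,\beta})^{\tau_{n}}(T)\log(\mathcal{E}(\tilde{H}_{\theta,\beta})^{\tau_{n}}(T))]<\infty.
\]

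Next I would define the auxiliary probability measures $Q_{\theta,\beta}^{n}$ via $\left.\frac{dQ_{\theta,\beta}^{n}}{dP}\right\vert_{\mathcal{F}_{t}}\triangleq\mathcal{E}(\tilde{H}_{\theta,\beta})^{\tau_{n}}(t)=\mathcal{E}(\tilde{H}_{\theta,\beta}^{n})(t)$, where $\tilde{H}_{\theta,\beta}^{n}$ is $\tilde{H}_{\theta,\beta}$ with the integrand truncated by $\boldsymbol{1}_{[0,\tau_{n}]}$. Rewriting the displayed log-moment as an expectation under $Q_{\theta,\beta}^{n}$, I would use the explicit representation \eqref{Equ_StochExp_Explicit} together with the elementary inequality $\log(1+x)\geq x/(1+x)$ (or the simpler $x-\log(1+x)\geq 0$) applied to the jump terms to dominate $\log\mathcal{E}(\tilde{H}_{\theta,\beta})^{\tau_{n}}(T)$. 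The key structural fact is that, since $L$ is a subordinator and $Y\geq 0$, all the relevant jump contributions $\Delta\tilde{H}_{\theta,\beta}$ from \eqref{EquJumpsHtilda} are nonnegative, so the correction terms have a favourable sign and I can bound the logarithm from above by a linear-plus-compensator expression whose $Q_{\theta,\beta}^{n}$-expectation I can control. This step is the jump analogue of the bound $\log\mathcal{E}(\tilde{G}_{\theta,\beta})^{\tau_{n}}(T)\leq\tilde{G}_{\theta,\beta}^{\tau_{n}}(T)$ used in the Brownian case.

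I would then apply Girsanov's theorem for jump processes to pass to the $Q_{\theta,\beta}^{n}$-dynamics: under $Q_{\theta,\beta}^{n}$ the compensator of $N^{L}$ becomes $\boldsymbol{1}_{[0,\tau_{n}]}(t)H_{\theta,\beta}(t,z)\ell(dz)dt$ (off the stopped region it reverts to $\ell(dz)dt$), so the compensated martingale part integrates to zero in expectation and the problem reduces, as announced in the introduction, to a uniform bound of the form
\[
\sup_{n\geq 1}\mathbb{E}_{Q_{\theta,\beta}^{n}}\Bigl[\int_{0}^{T}\int_{0}^{\infty}\boldsymbol{1}_{[0,\tau_{n}]}(t)\,z^{2}e^{\theta z}\,Y(t-)\,\ell(dz)\,dt\Bigr]<\infty,
\]
which in turn follows from a uniform bound on $\sup_{t}\mathbb{E}_{Q_{\theta,\beta}^{n}}[\boldsymbol{1}_{[0,\tau_{n}]}(t)Y(t)]$ (note only a \emph{first} moment of $Y$ is needed here, since the kernel already carries one factor $z$, unlike the second moment of $X$ in the Brownian proof). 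To obtain that bound I would use the $Q_{\theta,\beta}^{n}$-analogue of the explicit Langevin representation \eqref{Equ_Y_Explicit_Q}, namely that under $Q_{\theta,\beta}^{n}$ the process $Y$ solves an OU-type equation with slowed mean-reversion rate $\alpha_{Y}(1-\beta)\geq 0$ and shifted level $\mu_{Y}+\kappa_{L}^{\prime}(\theta)$; taking expectations and using the nonnegativity of $Y$ gives a linear ODE bound uniform in $n$. The integrability $\kappa_{L}^{\prime\prime}(2\theta\vee 0)<\infty$ secured for $\theta\in D_{L}=(-\infty,\Theta_{L}/2)$ in Proposition \ref{Prop_Htilda_martingale} is exactly what makes the $z^{2}e^{\theta z}$ integral finite.

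The main obstacle, as in the Brownian case, is that $\tau_{n}$ has unknown law, so I cannot compute $\mathbb{E}_{Q_{\theta,\beta}^{n}}[\boldsymbol{1}_{[0,\tau_{n}]}(t)Y(t)]$ directly; the resolution is to bound it by the full (unstopped) first moment $\mathbb{E}_{Q_{\theta,\beta}^{n}}[Y(t)]$ using $Y\geq 0$ and then estimate that via the Langevin structure uniformly in $n$. A subtler point specific to the jump setting is that the martingale property of the compensated Poisson integral under $Q_{\theta,\beta}^{n}$ must itself be justified before its expectation can be discarded; I would handle this by verifying square-integrability of the relevant integrand under $Q_{\theta,\beta}^{n}$ along the lines of Proposition \ref{Prop_Htilda_martingale}, now using the slowed-down dynamics, so that the same mean-value-theorem estimate $|e^{\theta z}-1|^{2}\leq\theta^{2}z^{2}e^{(2\theta\vee 0)z}$ and the finiteness of $\kappa_{L}^{\prime\prime}(2\theta\vee 0)$ close the argument.
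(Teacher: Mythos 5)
Your overall architecture matches the paper's proof (localization by $\tau_{n}$, the de la Vall\'ee-Poussin test $\Lambda(t)=1+t\log t$, the bound $\log\mathcal{E}(\tilde{H}_{\theta,\beta})^{\tau_{n}}(T)\leq\tilde{H}_{\theta,\beta}^{\tau_{n}}(T)$ from $x-\log(1+x)\geq0$, the auxiliary measures $Q_{\theta,\beta}^{n}$, Girsanov, Gronwall), but your key reduction step is computed incorrectly, and the error is not cosmetic. The quantity to bound is $\mathbb{E}_{Q_{\theta,\beta}^{n}}[\tilde{H}_{\theta,\beta}^{\tau_{n}}(T)]$. Decomposing $\tilde{N}^{L}$ into its $Q_{\theta,\beta}^{n}$-compensated part plus the drift correction, and using that the $Q_{\theta,\beta}^{n}$-compensator on $[0,\tau_{n}]$ is $H_{\theta,\beta}(t,z)\ell(dz)dt$, the correction term is
\[
\int_{0}^{T}\int_{0}^{\infty}\boldsymbol{1}_{[0,\tau_{n}]}(t)\left(H_{\theta,\beta}(t,z)-1\right)^{2}\ell(dz)\,dt,
\]
i.e.\ the \emph{square} of the kernel, exactly parallel to the Brownian case where the correction is $G_{\theta,\beta}^{2}$. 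Since $H_{\theta,\beta}-1=(e^{\theta z}-1)+\frac{\alpha_{Y}\beta}{\kappa_{L}^{\prime\prime}(\theta)}ze^{\theta z}Y(t-)$ is affine in $Y$, its square contains $\frac{\alpha_{Y}^{2}\beta^{2}}{(\kappa_{L}^{\prime\prime}(\theta))^{2}}z^{2}e^{2\theta z}Y(t-)^{2}$, so the reduction is necessarily to the \emph{second} moment of $Y$ under $Q_{\theta,\beta}^{n}$, together with $\kappa_{L}^{\prime\prime}(2\theta)<\infty$. Your displayed integrand $z^{2}e^{\theta z}Y(t-)$, linear in $Y$ and carrying only $e^{\theta z}$, is not what the computation produces, and your parenthetical claim that ``only a first moment of $Y$ is needed, since the kernel already carries one factor $z$'' misreads the structure: the factor $z$ in the kernel does not absorb the squaring, and the claimed asymmetry with the Brownian proof does not exist (there, too, the kernel is affine in $X$ and the correction $G^{2}$ is quadratic). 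A uniform first-moment bound on $Y$ is therefore insufficient to close the argument.

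There is a telltale sign of the slip: the domain $D_{L}=(-\infty,\Theta_{L}/2)$ exists precisely so that $\kappa_{L}^{\prime\prime}(2\theta)=\int_{0}^{\infty}z^{2}e^{2\theta z}\ell(dz)<\infty$; if the reduction genuinely involved only $z^{2}e^{\theta z}$, the restriction $\theta<\Theta_{L}/2$ would be superfluous and $\theta<\Theta_{L}$ would suffice. The repair is exactly what the paper does: bound $(H_{\theta,\beta}-1)^{2}\leq 2(e^{\theta z}-1)^{2}+2\frac{\alpha_{Y}^{2}\beta^{2}}{(\kappa_{L}^{\prime\prime}(\theta))^{2}}z^{2}e^{2\theta z}Y^{2}$, reduce to $\sup_{n\geq1}\mathbb{E}_{Q_{\theta,\beta}^{n}}[\int_{0}^{T}(Y^{\tau_{n}}(t))^{2}dt]<\infty$, and establish this from the $Q_{\theta,\beta}^{n}$-Langevin representation of $Y^{\tau_{n}}$ plus Gronwall. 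Your Langevin-plus-Gronwall plan does carry over, but it must be run on the second moment, where the compensator term $\int_{0}^{\infty}z^{2}H_{\theta,\beta}(s,z)\ell(dz)$ additionally brings in $\kappa_{L}^{(3)}(\theta)$, as in the paper's estimate.
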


\begin{proof}
As $\tilde{H}_{\theta,\beta}$ is a martingale on $[0,T]$, we have that
$\mathcal{E}(\tilde{H}_{\theta,\beta}),$ is a local martingale on $[0,T].$
Hence, there exists a sequence of increasing stopping times such that
$\tau_{n}\uparrow T,P$-$a.s.$ and the stopped processes $\mathcal{E}(\tilde
{H}_{\theta,\beta})^{\tau_{n}},n\geq1$ are martingales on $[0,T]$. By Remark
\ref{Remark_Stoch_Exp} and the same reasonings as in the proof of Theorem
\ref{Theo_StochExp_Gtilda_martingale}, to show that $\mathcal{E}(\tilde
{H}_{\theta,\beta})$ is a martingale is equivalent to show that $\mathbb{E}%
[\mathcal{E}(\tilde{H}_{\theta,\beta})(T)]=1$ and this is equivalent to prove
that the sequence $\{\mathcal{E}(\tilde{H}_{\theta,\beta})^{\tau_{n}%
}(T)\}_{n\geq1}$ is uniformly integrable. A sufficient condition for the
uniform integrability of $\{\mathcal{E}(\tilde{H}_{\theta,\beta})^{\tau_{n}%
}(T)\}_{n\geq1}$ is given by%
\begin{equation}
\sup_{n\geq1}\mathbb{E}_{P}[\mathcal{E}(\tilde{H}_{\theta,\beta})^{\tau_{n}%
}(T)\log(\mathcal{E}(\tilde{H}_{\theta,\beta})^{\tau_{n}}(T))]<\infty.
\label{Equ_Unif_Int_StochExp_Htilda}%
\end{equation}
By equation $\left(  \ref{Equ_StochExp_Explicit}\right)  $, we get%
\[
\log(\mathcal{E}(\tilde{H}_{\theta,\beta})^{\tau_{n}}(T))\leq\tilde{H}%
_{\theta,\beta}^{\tau_{n}}(T)-\sum_{0\leq t\leq\tau_{n}\wedge T}\Delta
\tilde{H}_{\theta,\beta}(t)-\log(1+\Delta\tilde{H}_{\theta,\beta}%
(t))\leq\tilde{H}_{\theta,\beta}^{\tau_{n}}(T),
\]
because the function $x-\log(1+x)\geq0$ for $x>-1.$ Hence, we can write%
\begin{align}
&  \mathbb{E}_{P}[\mathcal{E}(\tilde{H}_{\theta,\beta})^{\tau_{n}}(T)\tilde
{H}_{\theta,\beta}^{\tau_{n}}(T)]\nonumber\\
&  \qquad=\mathbb{E}_{P}[\left(  1+\int_{0}^{T\wedge\tau_{n}}\mathcal{E}%
(\tilde{H}_{\theta,\beta})(t-)d\tilde{H}_{\theta,\beta}(t)\right)  \tilde
{H}_{\theta,\beta}^{\tau_{n}}(T)]\nonumber\\
&  \qquad=\mathbb{E}_{P}[\left(  1+\int_{0}^{T}\mathcal{E}(\tilde{H}%
_{\theta,\beta})^{\tau_{n}}(t-)d\tilde{H}_{\theta,\beta}^{\tau_{n}}(t)\right)
\tilde{H}_{\theta,\beta}^{\tau_{n}}(T)]\nonumber\\
&  \qquad=\mathbb{E}_{P}[\tilde{H}_{\theta,\beta}^{\tau_{n}}(T)]+\mathbb{E}%
_{P}[\left(  \int_{0}^{T}\mathcal{E}(\tilde{H}_{\theta,\beta})^{\tau_{n}%
}(t-)d\tilde{H}_{\theta,\beta}^{\tau_{n}}(t)\right)  \left(  \int_{0}%
^{T}\boldsymbol{1}_{[0,\tau_{n}]}(t)d\tilde{H}_{\theta,\beta}^{\tau_{n}%
}(t)\right)  ]\nonumber\\
&  \qquad=\int_{0}^{T}\int_{0}^{\infty}\mathbb{E}_{P}[\boldsymbol{1}%
_{[0,\tau_{n}]}(t)\mathcal{E}(\tilde{H}_{\theta,\beta})^{\tau_{n}}(t)\left(
e^{\theta z}-1+\frac{\alpha_{Y}\beta}{\kappa_{L}^{\prime\prime}(\theta
)}e^{\theta z}zY(t)\right)  ^{2}]\ell(dz)dt\nonumber\\
&  \qquad=\mathbb{E}_{P}[\mathcal{E}(\tilde{H}_{\theta,\beta})^{\tau_{n}%
}(T)\int_{0}^{T}\int_{0}^{\infty}\boldsymbol{1}_{[0,\tau_{n}]}(t)\left(
e^{\theta z}-1+\frac{\alpha_{Y}\beta}{\kappa_{L}^{\prime\prime}(\theta
)}e^{\theta z}zY(t)\right)  ^{2}\ell(dz)dt]\nonumber\\
&  \qquad\leq2T\int_{0}^{\infty}\left\vert e^{\theta z}-1\right\vert ^{2}%
\ell(dz)+2\frac{\alpha_{Y}^{2}\kappa_{L}^{\prime\prime}(2\theta)}{(\kappa
_{L}^{\prime\prime}(\theta))^{2}}\mathbb{E}_{P}[\mathcal{E}(\tilde{H}%
_{\theta,\beta})^{\tau_{n}}(T)\int_{0}^{T\wedge\tau_{n}}Y(t)^{2}dt],
\label{Equ_Finite_Terms}%
\end{align}
where we have used that for any stopping time $\tau\leq T$ the process
$\tilde{H}_{\theta,\beta}^{\tau}(T)$ is a $P$-martingale with zero
expectation. In addition, we have used that $\forall n\geq1$\ fixed,
$\mathbb{E}_{P}[\mathcal{E}(\tilde{H}_{\theta,\beta})^{\tau_{n}}(T)]=1$ and%
\begin{align*}
&  \mathbb{E}_{P}[\mathbb{E}_{P}[\mathcal{E}(\tilde{H}_{\theta,\beta}%
)^{\tau_{n}}(T)\boldsymbol{1}_{[0,\tau_{n}]}(t)\left(  e^{\theta z}%
-1+\frac{\alpha_{Y}\beta}{\kappa_{L}^{\prime\prime}(\theta)}e^{\theta
z}zY(t)\right)  ^{2}|\mathcal{F}_{t}]]\\
&  \qquad=\mathbb{E}_{P}[\boldsymbol{1}_{[0,\tau_{n}]}(t)\mathbb{E}%
_{P}[\mathcal{E}(\tilde{H}_{\theta,\beta})^{\tau_{n}}(T)|\mathcal{F}%
_{t}]\left(  e^{\theta z}-1+\frac{\alpha_{Y}\beta}{\kappa_{L}^{\prime\prime
}(\theta)}e^{\theta z}zY(t)\right)  ^{2}]\\
&  \qquad=\mathbb{E}_{P}[\boldsymbol{1}_{[0,\tau_{n}]}(t)\mathcal{E}(\tilde
{H}_{\theta,\beta})^{\tau_{n}}(t)\left(  e^{\theta z}-1+\frac{\alpha_{Y}\beta
}{\kappa_{L}^{\prime\prime}(\theta)}e^{\theta z}zY(t)\right)  ^{2}],
\end{align*}
because $\tau_{n}$ is a reducing sequence for the local martingale
$\mathcal{E}(\tilde{H}_{\theta,\beta}).$ One can reason as in the proof of
Proposition \ref{Prop_Htilda_martingale} to show that the terms $\int
_{0}^{\infty}\left\vert e^{\theta z}-1\right\vert ^{2}\ell(dz)$ and
$\kappa_{L}^{\prime\prime}(2\theta),$ in equation $\left(
\ref{Equ_Finite_Terms}\right)  ,$ are finite. Note that $\int_{0}^{T\wedge
\tau_{n}}Y(t)^{2}dt=\int_{0}^{T\wedge\tau_{n}}Y(t\wedge\tau_{n})^{2}dt\leq
\int_{0}^{T}Y^{\tau_{n}}(t)^{2}dt,$ thus, it just remains to prove that
\[
\sup_{n\geq1}\mathbb{E}_{P}[\mathcal{E}(\tilde{H}_{\theta,\beta})^{\tau_{n}%
}(T)\int_{0}^{T}Y^{\tau_{n}}(t)^{2}dt]<\infty,
\]
to finish the proof. As $\mathcal{E}(\tilde{H}_{\theta,\beta})^{\tau_{n}}$ is
a strictly positive martingale, by Remark
\ref{Remark_Positivity_of_StocExpHtilda}, we can define the probability
measure $Q_{\theta,\beta}^{n}\sim P$ by setting $\left.  \frac{dQ_{\theta
,\beta}^{n}}{dP}\right\vert _{\mathcal{F}_{t}}\triangleq\mathcal{E}(\tilde
{H}_{\theta,\beta})^{\tau_{n}}(t),t\in\lbrack0,T],$ and, hence, it suffices to
prove that $\sup_{n\geq1}\mathbb{E}_{Q_{\theta,\beta}^{n}}[\int_{0}^{T}%
Y^{\tau_{n}}(t)^{2}dt]<\infty.$ Using Girsanov's Theorem with $Q_{\theta
,\beta}^{n}\sim P,n\geq1,$ the process $Y^{\tau_{n}}$ can be written as
\[
Y^{\tau_{n}}(t)=Y(0)+\tilde{B}^{\tau_{n}}(t)+\int_{0}^{t}\int_{0}^{\infty
}\boldsymbol{1}_{[0,\tau_{n}]}(s)z\tilde{N}_{Q_{\theta,\beta}^{n}}%
^{L}(ds,dz)\quad t\in\lbrack0,T],
\]
where%
\begin{align*}
\tilde{B}^{\tau_{n}}(t)  &  =\int_{0}^{t}\boldsymbol{1}_{[0,\tau_{n}]}%
(s)(\mu_{Y}+\kappa_{L}^{\prime}(0)-\alpha_{Y}Y(s))ds+\int_{0}^{t}%
\int_{\mathbb{R}}z\boldsymbol{1}_{[0,\tau_{n}]}(s)(H_{\theta,\beta
}(s,z)-1)\ell(dz)ds\\
&  =\int_{0}^{t}\boldsymbol{1}_{[0,\tau_{n}]}(s)\{(\mu_{Y}+\kappa_{L}^{\prime
}(0)-\alpha_{Y}Y(s))+\int_{\mathbb{R}}z(e^{\theta z}-1)\ell(dz)+\frac
{\alpha_{Y}\beta}{\kappa_{L}^{\prime\prime}(\theta)}\int_{\mathbb{R}}%
z^{2}e^{\theta z}\ell(dz)Y(s)\}ds\\
&  =\int_{0}^{t}\boldsymbol{1}_{[0,\tau_{n}]}(s)\left(  \mu_{Y}+\kappa
_{L}^{\prime}(\theta)-\alpha_{Y}(1-\beta)Y(s)\right)  ds,\quad t\in
\lbrack0,T],
\end{align*}
and $\tilde{N}_{Q_{\theta,\beta}^{n}}^{L}(ds,dz)$ is the compensated version
of the random measure $N_{Q_{\theta,\beta}^{n}}^{L}(ds,dz)$ with
$Q_{\theta,\beta}^{n}$-compensator given by $\tilde{\nu}_{Q_{\theta,\beta}%
^{n}}^{L}(ds,dz)=\{\boldsymbol{1}_{[0,\tau_{n}]}(s)(H_{\theta,\beta
}(s,z)-1)+1\}\ell(dz)ds.$ Hence,%
\begin{align}
\mathbb{E}_{Q_{\theta,\beta}^{n}}[\left(  Y^{\tau_{n}}(t)\right)  ^{2}]  &
\leq4\{Y(0)^{2}+\mathbb{E}_{Q_{\theta,\beta}^{n}}[\left(  \int_{0}%
^{t}\boldsymbol{1}_{[0,\tau_{n}]}(s)(\mu_{Y}+\kappa_{L}^{\prime}%
(\theta)+\alpha_{Y}(1-\beta)Y(s))ds\right)  ^{2}]\nonumber\\
&  \qquad+\mathbb{E}_{Q_{\theta,\beta}^{n}}[\left(  \int_{0}^{t}\int
_{0}^{\infty}\boldsymbol{1}_{[0,\tau_{n}]}(s)z\tilde{N}_{Q_{\theta,\beta}^{n}%
}^{L}(ds,dz)\right)  ^{2}]\}\nonumber\\
&  \leq4\{Y(0)^{2}+T\mathbb{E}_{Q_{\theta,\beta}^{n}}[\int_{0}^{t}%
\boldsymbol{1}_{[0,\tau_{n}]}(s)(\mu_{Y}+\kappa_{L}^{\prime}(\theta
)+\alpha_{Y}(1-\beta)Y^{\tau_{n}}(s))^{2}ds]\nonumber\\
&  \qquad+\mathbb{E}_{Q_{\theta,\beta}^{n}}[\int_{0}^{t}\int_{0}^{\infty
}\boldsymbol{1}_{[0,\tau_{n}]}(s)z^{2}\{\boldsymbol{1}_{[0,\tau_{n}%
]}(s)(H_{\theta,\beta}(s,z)-1)+1\}\ell(dz)ds]\}.\nonumber
\end{align}
On the one hand,%
\begin{align*}
&  \mathbb{E}_{Q_{\theta,\beta}^{n}}[\int_{0}^{t}\boldsymbol{1}_{[0,\tau_{n}%
]}(s)(\mu_{Y}+\kappa_{L}^{\prime}(\theta)+\alpha_{Y}(1-\beta)Y^{\tau_{n}%
}(s))^{2}ds]\\
&  \qquad\qquad\leq2T(\mu_{Y}+\kappa_{L}^{\prime}(\theta))^{2}+2\alpha_{Y}%
^{2}\int_{0}^{t}\mathbb{E}_{Q_{\theta,\beta}^{n}}[\left(  Y^{\tau_{n}%
}(s)\right)  ^{2}]ds.
\end{align*}
On the other hand,%
\begin{align*}
&  \mathbb{E}_{Q_{\theta,\beta}^{n}}[\int_{0}^{t}\int_{0}^{\infty
}\boldsymbol{1}_{[0,\tau_{n}]}(s)z^{2}\{\boldsymbol{1}_{[0,\tau_{n}%
]}(s)(H_{\theta,\beta}(s,z)-1)+1\}\ell(dz)ds]\\
&  \qquad=\mathbb{E}_{Q_{\theta,\beta}^{n}}[\int_{0}^{t}\int_{0}^{\infty
}\boldsymbol{1}_{[0,\tau_{n}]}(s)z^{2}H_{\theta,\beta}(s,z)\ell(dz)ds]\\
&  \qquad=\mathbb{E}_{Q_{\theta,\beta}^{n}}[\int_{0}^{t}\int_{0}^{\infty
}\boldsymbol{1}_{[0,\tau_{n}]}(s)z^{2}\left(  e^{\theta z}+\frac{\alpha
_{Y}\beta}{\kappa_{L}^{\prime\prime}(\theta)}e^{\theta z}zY(s-)\right)
\ell(dz)ds]\\
&  \qquad\leq T\int_{0}^{\infty}z^{2}e^{\theta z}\ell(dz)+\mathbb{E}%
_{Q_{\theta,\beta}^{n}}[\int_{0}^{t}\int_{0}^{\infty}\boldsymbol{1}%
_{[0,\tau_{n}]}(s)\frac{\alpha_{Y}\beta}{\kappa_{L}^{\prime\prime}(\theta
)}e^{\theta z}z^{3}Y^{\tau_{n}}(s)\ell(dz)ds]\\
&  \qquad\leq T\kappa_{L}^{\prime\prime}(\theta)+\frac{\alpha_{Y}\beta}%
{\kappa_{L}^{\prime\prime}(\theta)}\int_{0}^{\infty}z^{3}e^{\theta z}%
\ell(dz)\int_{0}^{t}\mathbb{E}_{Q_{\theta,\beta}^{n}}[Y^{\tau_{n}}(s)]ds\\
&  \qquad\leq T\kappa_{L}^{\prime\prime}(\theta)+\frac{\alpha_{Y}\kappa
_{L}^{(3)}(\theta)}{\kappa_{L}^{\prime\prime}(\theta)}\int_{0}^{t}%
\mathbb{E}_{Q_{\theta,\beta}^{n}}[\left(  Y^{\tau_{n}}(s)\right)  ^{2}]ds.
\end{align*}
To sum up, $\mathbb{E}_{Q_{\theta,\beta}^{n}}[\left(  Y^{\tau_{n}}(t)\right)
^{2}]\leq C_{0}+C_{1}\int_{0}^{t}\mathbb{E}_{Q_{\theta,\beta}^{n}}[\left(
Y^{\tau_{n}}(s)\right)  ^{2}]ds,$ where%
\begin{align*}
C_{0}  &  =C_{0}(Y(0),\mu_{Y},\theta,T)\triangleq4Y(0)^{2}+8T^{2}(\mu
_{Y}+\kappa_{L}^{\prime}(\theta))^{2}+4T\kappa_{L}^{\prime\prime}(\theta),\\
C_{1}  &  =C_{1}(\alpha_{Y},T)\triangleq8T\alpha_{Y}^{2}+4\frac{\alpha
_{Y}\kappa_{L}^{(3)}(\theta)}{\kappa_{L}^{\prime\prime}(\theta)},
\end{align*}
and applying Gronwall's lemma to the function $\mathbb{E}_{Q_{\theta,\beta
}^{n}}[Y^{\tau_{n}}(t)^{2}],$ we get that
\begin{equation}
\mathbb{E}_{Q_{\theta,\beta}^{n}}[Y^{\tau_{n}}(t)^{2}]\leq C_{0}e^{C_{1}T}.
\label{EquGronwall}%
\end{equation}
Finally, using Fubini-Tonelli and inequality $\left(  \ref{EquGronwall}%
\right)  $ we obtain%
\[
\sup_{n\geq1}\mathbb{E}_{Q_{\theta,\beta}^{n}}[\int_{0}^{T}Y^{\tau_{n}}%
(t)^{2}dt]\leq\sup_{n\geq1}\int_{0}^{T}\mathbb{E}_{Q_{\theta,\beta}^{n}%
}[Y^{\tau_{n}}(t)^{2}]dt\leq TC_{0}e^{C_{1}T}<\infty,
\]
and the proof is finished.
\end{proof}

\begin{remark}
\label{Remark_AlternativeChange}If $L$ has finite activity, that is
$\ell((0,\infty))<\infty,$ then one can use the kernel
\[
M_{\theta,\beta}(t,z)\triangleq e^{\theta z}\left(  1+\frac{\alpha_{Y}\beta
}{\kappa_{L}^{\prime}(\theta)}Y(t-)\right)  ,\quad t\in\lbrack0,T],z\in
\mathbb{R},
\]
and the Poisson integral
\[
\tilde{M}_{\theta,\beta}(t)\triangleq\int_{0}^{t}\int_{0}^{\infty}%
(M_{\theta,\beta}(s,z)-1)\tilde{N}^{L}(ds,dz),
\]
to define the change of measure. The results in Proposition
\ref{Prop_Htilda_martingale} and Theorem \ref{Theo_StochExp_Htilda_martingale}%
, below, also hold. Note that the change of measure with $\tilde{M}%
_{\theta,\beta}$ does not work for the infinite activity case. This is
because, in the analogous proofs of the statements in Proposition
\ref{Prop_Htilda_martingale} and Theorem \ref{Theo_StochExp_Htilda_martingale}
using the change of measure induced by $\tilde{M}_{\theta,\beta},$ it appears
the integral $\int_{0}^{\infty}e^{2\theta z}\ell(dz),$ which is divergent if
$\ell((0,\infty))=\infty.$
\end{remark}

\section{Study of the risk premium\label{SecStudyOfRiskPrem}}

We are interested in applying the previous probability measure change to study
the risk premium in electricity markets. As we discussed in the Introduction,
there are two reasonable models for the spot price $S$ in this market: the
arithmetic and the exponential model. We define the \textit{arithmetic spot
price model} by%
\begin{equation}
\label{Equ_Arith_Model}S(t)=\Lambda_{a}(t)+X(t)+Y(t),\quad t\in\lbrack
0,T^{\ast}],
\end{equation}
and the \textit{geometric spot price model} by
\begin{equation}
\label{Equ_Geom_Model}S(t)=\Lambda_{g}(t)\exp(X(t)+Y(t)),\quad t\in
\lbrack0,T^{\ast}],
\end{equation}
where $T^{\ast}>0$ is a fixed time horizon. The processes $\Lambda_{a}$ and
$\Lambda_{g}$ are assumed to be deterministic and they account for the
seasonalities observed in the spot prices.

One of the particularities of electricity markets is that power is a non
storable asset and for that reason is not a directly tradeable asset. This
entails that one can not derive the forward price of electricity from the
classical buy-and-hold hedging arguments. Using a risk-neutral pricing
argument (see Benth, \v{S}altyt\.{e} Benth and Koekebakker~\cite{BSBK}), under
the assumption of deterministic interest rates, the forward price, with time
of delivery $0<T<T^{\ast},$ at time $0<t<T$ is given by $F_{Q}(t,T)\triangleq
\mathbb{E}_{Q}[S(T)|\mathcal{F}_{t}],$ where $Q$ is any probability measure
equivalent to the historical measure $P$ and $\mathcal{F}_{t}$ is the market
information up to time $t$. In what follows we will use the probability
measure $Q$ discussed in the previous sections. However, in electricity
markets, the delivery of the underlying takes place over a period of time
$[T_{1},T_{2}],$ where $0<T_{1}<T_{2}<T^{\ast}.$ We call such contracts swap
contracts and we will denote their price at time $t<T_{1}$ by
\[
F_{Q}(t,T_{1},T_{2})\triangleq\mathbb{E}_{Q}[\frac{1}{T_{2}-T_{1}}\int_{T_{1}%
}^{T_{2}}S(T)dT|\mathcal{F}_{t}].
\]
We can use the stochastic Fubini theorem to relate the price of forwards and
swaps%
\[
F_{Q}(t,T_{1},T_{2})\triangleq\frac{1}{T_{2}-T_{1}}\int_{T_{1}}^{T_{2}}%
F_{Q}(t,T)dT.
\]
The risk premium for forward prices is defined by the following expression
$R_{Q}^{F}(t,T)\triangleq\mathbb{E}_{Q}[S(T)|\mathcal{F}_{t}]-\mathbb{E}%
_{P}[S(T)|\mathcal{F}_{t}],$ and for swap prices by%
\begin{equation}
R_{Q}^{S}(t,T_{1},T_{2})\triangleq F_{Q}(t,T_{1},T_{2})-\mathbb{E}_{Q}%
[\frac{1}{T_{2}-T_{1}}\int_{T_{1}}^{T_{2}}S(T)dT|\mathcal{F}_{t}]=\frac
{1}{T_{2}-T_{1}}\int_{T_{1}}^{T_{2}}R_{Q}^{F}(t,T)dT.
\label{Equ_Generic_Swap_RP}%
\end{equation}
In order to compute the previous quantities we need to know the dynamics of
$S$ (that is, of $X$ and $Y$) under $P$ and under $Q.$ Explicit expressions
for $X$ and $Y$ under $P$ are given in equations $\left(
\ref{Equ_X_Explicit_P}\right)  $ and $\left(  \ref{Equ_Y_Explicit_P}\right)
,$ respectively. In the rest of the paper, $Q=Q_{\bar{\theta},\bar{\beta}%
},\bar{\theta}\in\bar{D}_{L},\bar{\beta}\in\lbrack0,1]^{2}$ defined in
$\left(  \ref{Equ_Def_Q_theta_beta_p}\right)  ,$ and the explicit expressions
for $X$ and $Y$ under $Q$ are given in Remark $\ref{Remark_Dynamics_Q},$
equations $\left(  \ref{Equ_X_Explicit_Q}\right)  $ and $\left(
\ref{Equ_Y_Explicit_Q}\right)  ,$ respectively.

\begin{remark}
We will use the subindices $a$ and $g$ to denote the arithmetic and the
geometric spot models, respectively. That is, we will use the notation
$R_{a,Q}^{F}(t,T),R_{g,Q}^{F}(t,T),R_{a,Q}^{S}(t,T_{1},T_{2})$ and
$R_{g,Q}^{S}(t,T_{1},T_{2}).$
\end{remark}

\begin{remark}
In the discussion to follow, we are interested in finding values of the
parameters $\bar{\theta},\bar{\beta}$ such that some empirical features of the
observed risk premium profiles are reproduced by our pricing measure. In
particular, we show that is possible to have the sign of the risk premium
changing stochastically from positive values on the short end of the market to
negative values on the long end. This is proved for forward contracts in,
both, the arithmetic and geometric model. Equation $\left(
\ref{Equ_Generic_Swap_RP}\right)  $ just tell us that the risk premium for
swaps becomes the average of the risk premium for forwards with
fixed-delivery. Hence, we can obtain stochastic sign change also for these,
depending on the length of delivery. Worth noticing is that contracts in the
short end have short delivery (a day, or a week), while in the long end have
month/quarter/year delivery. Average for negative is negative, for the long
end, and average over short period, dominantly positive, gives positive, in
the short end.
\end{remark}

\subsection{Arithmetic spot price model}

We assume in this section that the spot price $S(t)$ is given by the dynamics
\eqref{Equ_Arith_Model}
%\begin{equation}
%S(t)=\Lambda_{a}(t)+X(t)+Y(t),\quad t\in\lbrack0,T^{\ast}%
%],\label{Equ_Arith_Model}%
%\end{equation}
for $0\leq t\leq T^{\ast}$, $T^{\ast}>0$, with the maturity time of the
forward contract $T$ satisfying $0<T<T^{\ast}.$ Using equations $\left(
\ref{Equ_X_Explicit_P}\right)  $ and $\left(  \ref{Equ_Y_Explicit_P}\right)
$\ and the basic properties of the conditional expectation we get%
\begin{align*}
\mathbb{E}_{P}[S(T)|\mathcal{F}_{t}]  &  =\Lambda_{a}(T)+\mathbb{E}%
_{P}[X(t)e^{-\alpha_{X}(T-t)}+\frac{\mu_{X}}{\alpha_{X}}(1-e^{-\alpha
_{X}(T-t)})|\mathcal{F}_{t}]\\
&  \qquad+\mathbb{E}_{P}[Y(t)e^{-\alpha_{Y}(T-t)}+\frac{\mu_{Y}+\kappa
_{L}^{\prime}(0)}{\alpha_{Y}}(1-e^{-\alpha_{Y}(T-t)})|\mathcal{F}_{t}]\\
&  \qquad+\mathbb{E}_{P}[\sigma_{X}\int_{t}^{T}e^{-\alpha_{X}(T-s)}%
dW(s)+\int_{t}^{T}\int_{0}^{\infty}e^{-\alpha_{Y}(T-s)}z\tilde{N}%
^{L}(ds,dz)|\mathcal{F}_{t}]\\
&  =\Lambda_{a}(T)+X(t)e^{-\alpha_{X}(T-t)}+Y(t)e^{-\alpha_{Y}(T-t)}\\
&  \qquad+\frac{\mu_{X}}{\alpha_{X}}(1-e^{-\alpha_{X}(T-t)})+\frac{\mu
_{Y}+\kappa_{L}^{\prime}(0)}{\alpha_{Y}}(1-e^{-\alpha_{Y}(T-t)})\\
&  \qquad+\mathbb{E}_{P}[\sigma_{x}\int_{t}^{T}e^{-\alpha_{X}(T-s)}%
dW(s)]+\mathbb{E}_{P}[\int_{t}^{T}\int_{0}^{\infty}e^{-\alpha_{Y}(T-u)}%
z\tilde{N}^{L}(ds,dz)]\\
&  =\Lambda_{a}(T)+X(t)e^{-\alpha_{X}(T-t)}+Y(t)e^{-\alpha_{Y}(T-t)}+\frac
{\mu_{X}}{\alpha_{X}}(1-e^{-\alpha_{X}(T-t)})\\
&  \qquad+\frac{\mu_{Y}+\kappa_{L}^{\prime}(0)}{\alpha_{Y}}(1-e^{-\alpha
_{Y}(T-t)}).
\end{align*}
Note that we have also used that $W$ and $\tilde{N}^{L}$ have independent
increments under $P$ to write conditional expectations as expectations. If we
assume that $\alpha\triangleq\alpha_{X}=\alpha_{Y},$ then%
\[
\mathbb{E}_{P}[S(T)|\mathcal{F}_{t}]=\Lambda_{a}(T)+(S(t)-\Lambda
(t))e^{-\alpha(T-t)}+\frac{\mu_{X}+\mu_{Y}+\kappa_{L}^{\prime}(0)}{\alpha
}(1-e^{-\alpha(T-t)}).
\]
This last expression for $\mathbb{E}_{P}[S(T)|\mathcal{F}_{t}]$ is
considerably simpler and depends explicitly on $S(t),$ the spot price at time
$t,$ which is directly observable in the market.

To find a similar expression for $\mathbb{E}_{Q}[S(T)|\mathcal{F}_{t}]$ we
need the following lemma.

\begin{lemma}
We have that $\int_{0}^{t}\int_{0}^{\infty}e^{\alpha_{Y}(1-\beta_{2})s}%
z\tilde{N}_{Q}^{L}(ds,dz)$ is a $Q$-martingale on $[0,T],T>0.$
\end{lemma}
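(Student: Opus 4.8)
The plan is to reduce the statement to the standard integrability criterion for stochastic integrals against the compensated random measure $\tilde N_Q^L$ under $Q$. Since $W$ and $L$ are independent and the density $\mathcal E(\tilde G_{\theta_1,\beta_1}+\tilde H_{\theta_2,\beta_2})$ factorises by Yor's formula, the joint law of $(Y,L)$ under $Q=Q_{\bar\theta,\bar\beta}$ coincides with its law under $Q_{\theta_2,\beta_2}$, and $\tilde N_Q^L$ has $Q$-compensator $v_Q^L(ds,dz)=H_{\theta_2,\beta_2}(s,z)\ell(dz)\,ds$; so it suffices to argue under the jump change of measure alone. The integrand $f(s,z)=e^{\alpha_Y(1-\beta_2)s}z$ is deterministic, hence predictable, so by the theory of integration with respect to (the compensated) integer-valued random measures (Jacod and Shiryaev \cite{JaSh03}) it is enough to establish
\[
I\triangleq\mathbb E_Q\Big[\int_0^T\!\!\int_0^\infty e^{2\alpha_Y(1-\beta_2)s}z^2 H_{\theta_2,\beta_2}(s,z)\,\ell(dz)\,ds\Big]<\infty,
\]
and this in fact yields a square-integrable $Q$-martingale.

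First I would carry out the inner integration. Using the definition of $H_{\theta_2,\beta_2}$ and Remark \ref{Remark_D_Cumulants},
\[
\int_0^\infty z^2 H_{\theta_2,\beta_2}(s,z)\,\ell(dz)=\kappa_L''(\theta_2)+\frac{\alpha_Y\beta_2\,\kappa_L^{(3)}(\theta_2)}{\kappa_L''(\theta_2)}\,Y(s-),
\]
where $\kappa_L''(\theta_2)$ and $\kappa_L^{(3)}(\theta_2)$ are finite because $\theta_2\in D_L\subset(-\infty,\Theta_L)$. Since $\beta_2\in[0,1]$, the weight obeys $e^{2\alpha_Y(1-\beta_2)s}\le e^{2\alpha_Y T}$ on $[0,T]$. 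As $Y(s-)=Y(s)$ for Lebesgue-a.e.\ $s$, the finiteness of $I$ reduces to that of $\sup_{s\in[0,T]}\mathbb E_Q[Y(s)]$, recalling that $Y\ge 0$ $\;Q$-a.s.\ by Remark \ref{Remark_Positivity_of_StocExpHtilda} and $Q\sim P$.

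The crucial point, and the main obstacle, is to bound $\mathbb E_Q[Y(s)]$ without circular reasoning: the explicit representation \eqref{Equ_Y_Explicit_Q} of $Y$ under $Q$ already presupposes that an integral of exactly this form has zero mean, so it cannot be used here. Instead I would recycle the localisation from the proof of Theorem \ref{Theo_StochExp_Htilda_martingale}. Let $\{\tau_n\}$ be the reducing sequence for $\mathcal E(\tilde H_{\theta_2,\beta_2})$, and let $Q^n$ denote the measures with densities $\mathcal E(\tilde H_{\theta_2,\beta_2})^{\tau_n}$. Because $\mathcal E(\tilde H_{\theta_2,\beta_2})$ is a true $P$-martingale, optional sampling gives $\mathbb E_P[\mathcal E(\tilde H_{\theta_2,\beta_2})(T)\mid\mathcal F_{\tau_n}]=\mathcal E(\tilde H_{\theta_2,\beta_2})(\tau_n)$, so $Q$ and $Q^n$ coincide on $\mathcal F_{\tau_n}$. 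Since $Y(t\wedge\tau_n)=Y^{\tau_n}(t)$ is nonnegative and $\mathcal F_{\tau_n}$-measurable, this gives
\[
\mathbb E_Q[Y(t\wedge\tau_n)]=\mathbb E_{Q^n}[Y^{\tau_n}(t)]\le\big(\mathbb E_{Q^n}[Y^{\tau_n}(t)^2]\big)^{1/2}\le\big(C_0 e^{C_1 T}\big)^{1/2},
\]
the last inequality being \eqref{EquGronwall}, uniform in $n$ and $t$.

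Finally, since $\tau_n\uparrow T$ $\;Q$-a.s.\ we have $Y(t\wedge\tau_n)\to Y(t)$ $\;Q$-a.s., and Fatou's lemma yields $\sup_{t\in[0,T]}\mathbb E_Q[Y(t)]\le(C_0 e^{C_1 T})^{1/2}<\infty$. Combined with the reduction above this shows $I<\infty$, and hence the stochastic integral is a square-integrable $Q$-martingale on $[0,T]$. Everything apart from the transfer of the second-moment bound from $Q^n$ to $Q$ is a routine cumulant computation together with the integrability criterion; that transfer, effected through the agreement of $Q$ and $Q^n$ on $\mathcal F_{\tau_n}$ and Fatou's lemma, is the only genuinely delicate step and the one that keeps the argument free of circularity.
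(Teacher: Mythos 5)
Your proof is correct, and its skeleton is the same as the paper's: verify the compensator-integrability criterion for integrals against $\tilde N_Q^L$, compute the integral of the kernel against $H_{\theta_2,\beta_2}(s,z)\ell(dz)$, and thereby reduce the lemma to the bound $\sup_{t\in[0,T]}\mathbb{E}_Q[Y(t)]<\infty$. The differences are in execution, and they are worth recording. First, you invoke the second-moment ($L^2$) criterion, which brings in $\kappa_L''(\theta_2)$ and $\kappa_L^{(3)}(\theta_2)$, while the paper uses the first-moment criterion $\mathbb{E}_Q[\int_0^T\int_0^\infty e^{\alpha_Y(1-\beta_2)s}z\,v_Q^L(ds,dz)]<\infty$, which only needs $\kappa_L'(\theta_2)$; both are valid, and yours yields square-integrability as a by-product. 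Second, and more substantively, for the moment bound the paper only says it ``follows the same lines as the last part of Theorem \ref{Theo_StochExp_Htilda_martingale}'': from the representation \eqref{Equ_GirsanovY} one writes $\mathbb{E}_Q[Y(t)]\le C_0+C_1\int_0^t\mathbb{E}_Q[Y(s)]ds$ and applies Gronwall. As you correctly observe, taking $Q$-expectations there and discarding the compensated-integral term presupposes control of exactly the kind of object the lemma is about, so a localization step is implicitly being used; the paper's intended route is to rerun the Gronwall argument under $Q$ along a localizing sequence and pass to the limit. Your alternative — noting via optional sampling that $Q$ and $Q^n$ agree on $\mathcal{F}_{\tau_n}$, importing the uniform bound \eqref{EquGronwall} already established under $Q^n$ in the proof of Theorem \ref{Theo_StochExp_Htilda_martingale}, and concluding by Fatou as $\tau_n\uparrow T$ — is a clean, fully rigorous way of doing the same thing, with the advantage that it reuses \eqref{EquGronwall} verbatim instead of repeating the Gronwall estimate under yet another measure. (Only cosmetic caveat: the a.s.\ convergence $Y(t\wedge\tau_n)\to Y(t)$ at $t=T$ uses that a L\'evy process has no fixed-time jumps, which is standard.)
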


\begin{proof}
We have to prove that $\mathbb{E}_{Q}[\int_{0}^{t}\int_{0}^{\infty}%
e^{\alpha_{Y}(1-\beta_{2})s}zv_{Q}^{L}(ds,dz)]<\infty.$ One has that
\begin{align*}
\mathbb{E}_{Q}[\int_{0}^{t}\int_{0}^{\infty}e^{\alpha_{Y}(1-\beta_{2})s}%
zv_{Q}^{L}(ds,dz)]  &  =\mathbb{E}_{Q}[\int_{0}^{t}\int_{0}^{\infty}%
e^{\alpha_{Y}(1-\beta_{2})s}zH_{\theta_{2},\beta_{2}}(s,z)\ell(dz)ds]\\
&  =\mathbb{E}_{Q}[\int_{0}^{t}\int_{0}^{\infty}e^{\alpha_{Y}(1-\beta_{2}%
)s}z\left(  e^{\theta_{2}z}+\frac{\alpha_{Y}\beta_{2}}{\kappa_{L}%
^{\prime\prime}(\theta_{2})}e^{\theta_{2}z}zY(s)\right)  \ell(dz)ds]\\
&  \leq e^{\alpha_{Y}T}\{T\kappa_{L}^{\prime}(\theta_{2})+\alpha_{Y}%
T\sup_{0\leq t\leq T}\mathbb{E}_{Q}[Y(t)]\},
\end{align*}
and $\kappa_{L}^{\prime}(\theta_{2})<\infty$ because $\theta_{2}\in D_{L}.$
The proof that $\sup_{0\leq s\leq T}\mathbb{E}_{Q}[Y(s)]$ is finite follows
the same lines as the last part of Theorem
\ref{Theo_StochExp_Htilda_martingale}. Using the semimartingale representation
of $Y,$ equation $\left(  \ref{Equ_GirsanovY}\right)  ,$ we obtain that there
exist constants $C_{0}$ and $C_{1}$ such that $\mathbb{E}_{Q}[Y(t)]\leq
C_{0}+C_{1}\int_{0}^{t}\mathbb{E}_{Q}[Y(s)]ds.$ Applying Gronwall's Lemma we
get that $\mathbb{E}_{Q}[Y(t)]\leq C_{0}e^{C_{1}T}$ and the result follows.
\end{proof}

\begin{remark}
\label{Remark_Q-Martingale}We need the previous lemma because Girsanov's
Theorem just ensures that
\begin{equation}
\int_{0}^{t}\int_{0}^{\infty}e^{\alpha_{Y}(1-\beta_{2})s}z\tilde{N}_{Q}%
^{L}(ds,dz) \label{Equ_ExpN^L_Q_Martingale}%
\end{equation}
is a $Q$-local martingale. We want $\left(  \ref{Equ_ExpN^L_Q_Martingale}%
\right)  $ to be a $Q$-martingale because then it follows trivially that
\[
\mathbb{E}_{Q}[\int_{t}^{T}\int_{0}^{\infty}e^{\alpha_{Y}(1-\beta_{2}%
)s}z\tilde{N}_{Q}^{L}(ds,dz)|\mathcal{F}_{t}]=0.
\]
Note that we can not reduce the previous conditional expectation (unless
$\beta_{2}=0,$ which coincides with the Esscher change of measure) to an
expectation because the compensator of $N_{Q}^{L}$ depends on $Y$ and,
therefore, $\tilde{N}_{Q}^{L}$ does not has independent increments.
\end{remark}

Using the basic properties of the conditional expectation, Remark
\ref{Remark_Q-Martingale} and equations $\left(  \ref{Equ_X_Explicit_Q}%
\right)  $ and $\left(  \ref{Equ_Y_Explicit_Q}\right)  $ we get%
\begin{align*}
\mathbb{E}_{Q}[S(T)|\mathcal{F}_{t}]  &  =\Lambda_{a}(T)+\mathbb{E}%
_{Q}[X(t)e^{-\alpha_{X}(1-\beta_{1})(T-t)}+\frac{\mu_{X}+\theta_{1}}%
{\alpha_{X}(1-\beta_{1})}(1-e^{-\alpha_{X}(1-\beta_{1})(T-t)})|\mathcal{F}%
_{t}]\\
&  +\mathbb{E}_{Q}[Y(t)e^{-\alpha_{Y}(1-\beta_{2})(T-t)}+\frac{\mu_{Y}%
+\kappa_{L}^{\prime}(\theta_{2})}{\alpha_{Y}(1-\beta_{2})}(1-e^{-\alpha
_{Y}(1-\beta_{2})(T-t)})|\mathcal{F}_{t}]\\
&  +\mathbb{E}_{Q}[\sigma_{X}\int_{t}^{T}e^{-\alpha_{X}(1-\beta_{1}%
)(T-s)}dW_{Q}(s)|\mathcal{F}_{t}]\\
&  +\mathbb{E}_{Q}[\int_{t}^{T}\int_{0}^{\infty}e^{-\alpha_{Y}(1-\beta
_{2})(T-s)}z\tilde{N}_{Q}^{L}(ds,dz)|\mathcal{F}_{t}]\\
&  =\Lambda_{a}(T)+X(t)e^{-\alpha_{X}(1-\beta_{1})(T-t)}+\frac{\mu_{X}%
+\theta_{1}}{\alpha_{X}(1-\beta_{1})}(1-e^{-\alpha_{X}(1-\beta_{1})(T-t)})\\
&  +Y(t)e^{-\alpha_{Y}(1-\beta_{2})(T-t)}+\frac{\mu_{Y}+\kappa_{L}^{\prime
}(\theta_{2})}{\alpha_{Y}(1-\beta_{2})}(1-e^{-\alpha_{Y}(1-\beta_{2})(T-t)})\\
&  +\mathbb{E}_{Q}[\sigma_{X}\int_{t}^{T}e^{-\alpha_{X}(1-\beta_{1}%
)(T-s)}dW_{Q}(s)]\\
&  +e^{-\alpha_{Y}(1-\beta_{2})T}\mathbb{E}_{Q}[\int_{t}^{T}\int_{0}^{\infty
}e^{\alpha_{Y}(1-\beta_{2})s}z\tilde{N}_{Q}^{L}(ds,dz)|\mathcal{F}_{t}]\\
&  =\Lambda_{a}(T)+X(t)e^{-\alpha_{X}(1-\beta_{1})(T-t)}+Y(t)e^{-\alpha
_{Y}(1-\beta_{2})(T-t)}\\
&  +\frac{\mu_{X}+\theta_{1}}{\alpha_{X}(1-\beta_{1})}(1-e^{-\alpha
_{X}(1-\beta_{1})(T-t)})+\frac{\mu_{Y}+\kappa_{L}^{\prime}(\theta_{2})}%
{\alpha_{Y}(1-\beta_{2})}(1-e^{-\alpha_{Y}(1-\beta_{2})(T-t)}).
\end{align*}
Therefore, we have proved the following result.

\begin{proposition}
\label{Prop_forwardprice_arithmetic} The forward price $F_{Q}(t,T)$ in the
arithmetic spot model \eqref{Equ_Arith_Model} is given by
\begin{align*}
F_{Q}(t,T)  &  =\Lambda_{a}(T)+X(t)e^{-\alpha_{X}(1-\beta_{1})(T-t)}%
+Y(t)e^{-\alpha_{Y}(1-\beta_{2})(T-t)}\\
&  \qquad+\frac{\mu_{X}+\theta_{1}}{\alpha_{X}(1-\beta_{1})}(1-e^{-\alpha
_{X}(1-\beta_{1})(T-t)})+\frac{\mu_{Y}+\kappa_{L}^{\prime}(\theta_{2})}%
{\alpha_{Y}(1-\beta_{2})}(1-e^{-\alpha_{Y}(1-\beta_{2})(T-t)}).
\end{align*}

\end{proposition}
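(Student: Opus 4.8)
The plan is to start from the defining relation $F_Q(t,T)=\mathbb{E}_Q[S(T)\mid\mathcal{F}_t]$ and exploit the additive structure of the arithmetic model \eqref{Equ_Arith_Model}. By linearity of conditional expectation, $F_Q(t,T)=\Lambda_a(T)+\mathbb{E}_Q[X(T)\mid\mathcal{F}_t]+\mathbb{E}_Q[Y(T)\mid\mathcal{F}_t]$, so it suffices to compute the two factor contributions separately. For each factor I would substitute the explicit $Q$-dynamics from Remark \ref{Remark_Dynamics_Q}, namely \eqref{Equ_X_Explicit_Q} and \eqref{Equ_Y_Explicit_Q} evaluated with initial time $s=t$. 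This writes $X(T)$ and $Y(T)$ each as the sum of an $\mathcal{F}_t$-measurable term (the $X(t)$- and $Y(t)$-dependent exponentials), a purely deterministic mean-reversion term, and a stochastic integral over $(t,T]$. The first two types of term pass through the conditional expectation immediately, producing exactly the closed-form expressions appearing in the statement; the entire content of the proof is therefore to show that the conditional expectations of the two stochastic integrals vanish.

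The Brownian contribution is the easy half. Under $Q$ the process $W_{Q_{\bar\theta,\bar\beta}}$ is a genuine $Q$-Brownian motion by Girsanov's theorem (Remark \ref{Remark_Dynamics_Q}), so $\sigma_X\int_t^T e^{-\alpha_X(1-\beta_1)(T-u)}\,dW_{Q}(u)$ is a Wiener integral of a bounded deterministic integrand against $W_Q$. Its increment over $(t,T]$ is independent of $\mathcal{F}_t$ and centered, whence its $\mathcal{F}_t$-conditional expectation is zero. This reproduces the reasoning already used under $P$ and poses no difficulty.

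The hard part will be the jump contribution, because I cannot argue by independence of increments. Under $Q$ the compensator of $N^L$ is $H_{\theta_2,\beta_2}(t,z)\,\ell(dz)\,dt$, which depends on $Y(t-)$, so $\tilde N^L_{Q}$ does \emph{not} have independent increments (as flagged in Remark \ref{Remark_Q-Martingale}), and Girsanov by itself only guarantees that $\int_0^{\cdot}\int_0^{\infty} e^{\alpha_Y(1-\beta_2)s}z\,\tilde N^L_{Q}(ds,dz)$ is a $Q$-\emph{local} martingale. To kill the conditional expectation of $\int_t^T\int_0^\infty e^{-\alpha_Y(1-\beta_2)(T-u)}z\,\tilde N^L_{Q}(du,dz)$ I would first factor out the deterministic $e^{-\alpha_Y(1-\beta_2)T}$ and then invoke the preceding Lemma, which upgrades this exponentially weighted integral to a \emph{true} $Q$-martingale (its proof controlling $\sup_{0\le s\le T}\mathbb{E}_Q[Y(s)]$ via Gronwall, exactly as in Theorem \ref{Theo_StochExp_Htilda_martingale}). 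The true-martingale property gives $\mathbb{E}_Q\bigl[\int_t^T\int_0^\infty e^{\alpha_Y(1-\beta_2)s}z\,\tilde N^L_{Q}(ds,dz)\mid\mathcal{F}_t\bigr]=0$, and reinstating the factored exponential yields the vanishing of the jump increment. Collecting the surviving $\mathcal{F}_t$-measurable and deterministic terms from both factors then gives the claimed formula for $F_Q(t,T)$.
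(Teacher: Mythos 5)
Your proposal is correct and follows essentially the same route as the paper: decompose $\mathbb{E}_Q[S(T)\mid\mathcal{F}_t]$ factor by factor, substitute the explicit $Q$-dynamics \eqref{Equ_X_Explicit_Q}--\eqref{Equ_Y_Explicit_Q} started at $s=t$, kill the Brownian term by independence of increments of $W_Q$, and handle the jump term by factoring out $e^{-\alpha_Y(1-\beta_2)T}$ and invoking the lemma (proved via the Gronwall bound on $\sup_{0\le s\le T}\mathbb{E}_Q[Y(s)]$) that upgrades the exponentially weighted integral against $\tilde N_Q^L$ to a true $Q$-martingale. You also correctly identify the one genuinely delicate point, flagged in Remark \ref{Remark_Q-Martingale}, namely that the state-dependent compensator destroys independence of increments, which is precisely why the true-martingale lemma is needed.
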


In Lucia and Schwartz~\cite{LS} a two-factor model (among others) is proposed
as the dynamics for power spot prices in the Nordic electricity market
NordPool. Following the model of Schwartz and Smith~\cite{SS}, they consider a
non-stationary long term variation factor together with a stationary short
term variation factor. In our context, one could let the mean reversion in $X$
be zero, to obtain a non-stationary factor as a drifted Brownian motion under
the pricing measure $Q$. After doing a measure transform with $\beta_{1}=1$,
we can price forwards as in Proposition~\ref{Prop_forwardprice_arithmetic} to
find
\begin{align*}
F_{Q}(t,T)  &  =\Lambda_{a}(T)+X(t)+Y(t)e^{-\alpha_{Y}(1-\beta_{2})(T-t)}%
+(\mu_{X}+\theta_{1})(T-t)\\
&  \qquad+\frac{\mu_{Y}+\kappa_{L}^{\prime}(\theta_{2})}{\alpha_{Y}%
(1-\beta_{2})}(1-e^{-\alpha_{Y}(1-\beta_{2})(T-t)}).
\end{align*}
When $T-t$ becomes large, i.e. when we are far out on the forward curve, we
see that
\begin{equation}
F_{Q}(t,T)\sim\Lambda_{a}(T)+X(t)+(\mu_{X}+\theta_{1})(T-t)+\frac{\mu
_{Y}+\kappa_{L}^{\prime}(\theta_{2})}{\alpha_{Y}(1-\beta_{2})}\,.
\label{eq_LS-forward-asymptotic}%
\end{equation}
Thus, the forward curve moves stochastically as the non-stationary factor $X$.
If one, on the other hand, let $X$ be stationary, we find that the forward
price in Proposition ~\ref{Prop_forwardprice_arithmetic} will behave for large
time to maturities $T-t$ as
\[
F_{Q}(t,T)\sim\Lambda_{a}(T)+\frac{\mu_{X}+\theta_{1}}{\alpha_{X}(1-\beta
_{1})}+\frac{\mu_{Y}+\kappa_{L}^{\prime}(\theta_{2})}{\alpha_{Y}(1-\beta_{2}%
)}\,.
\]
The forward prices becomes constant after subtracting the seasonal function,
with no stochastic movements. This is not what is observed for forward data in
the market. However, following the empirical study in Barndorff-Nielsen, Benth
and Veraart~\cite{BNBV-spot}, electricity spot prices on the German power
exchange EEX are stationary. One way to have a stationary spot dynamics, and
still maintain forward prices which moves randomly in the long end, is to
apply our measure change to slow down the mean reversion in one or more
factors of the (stationary) spot. In the extreme case, we can let $\beta
_{1}=1$, and obtain a non-stationary factor $X$ under the pricing measure, in
which case we obtain the same long term asymptotic behaviour as in the
generalization of the Lucia and Schwartz model
\eqref{eq_LS-forward-asymptotic}. In conclusion, our pricing measure allows
for a stationary spot dynamics and a forward price dynamics which is not
constant in the long end.

Let us return back to the risk premium, which in view of
Prop.~\ref{Prop_forwardprice_arithmetic} becomes:

\begin{proposition}
\label{Prop_FRP_Arithmetic}The risk premium $R_{a,Q}^{F}(t,T)$ for the forward
price in the arithmetic spot model $\left(  \ref{Equ_Arith_Model}\right)  $ is
given by%
\begin{align*}
R_{a,Q}^{F}(t,T)  &  =X(t)e^{-\alpha_{X}(T-t)}(e^{\alpha_{X}\beta_{1}%
(T-t)}-1)+Y(t)e^{-\alpha_{Y}(T-t)}(e^{\alpha_{Y}\beta_{2}(T-t)}-1)\\
&  \qquad+\frac{\mu_{X}+\theta_{1}}{\alpha_{X}(1-\beta_{1})}(1-e^{-\alpha
_{X}(1-\beta_{1})(T-t)})+\frac{\mu_{Y}+\kappa_{L}^{\prime}(\theta_{2})}%
{\alpha_{Y}(1-\beta_{2})}(1-e^{-\alpha_{Y}(1-\beta_{2})(T-t)})\\
&  \qquad-\frac{\mu_{X}}{\alpha_{X}}(1-e^{-\alpha_{X}(T-t)})-\frac{\mu
_{Y}+\kappa_{L}^{\prime}(0)}{\alpha_{Y}}(1-e^{-\alpha_{Y}(T-t)}).
\end{align*}

\end{proposition}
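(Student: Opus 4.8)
The plan is to obtain $R_{a,Q}^{F}(t,T)$ directly from its definition, namely $R_{a,Q}^{F}(t,T)=\mathbb{E}_{Q}[S(T)|\mathcal{F}_{t}]-\mathbb{E}_{P}[S(T)|\mathcal{F}_{t}]$, exploiting the fact that \emph{both} conditional expectations have already been produced above. The $Q$-expectation is precisely the forward price $F_{Q}(t,T)$ recorded in Proposition~\ref{Prop_forwardprice_arithmetic}, while the $P$-expectation was computed in the display immediately preceding that proposition. Consequently the argument reduces to a subtraction followed by an algebraic regrouping, and there is no new analytic input to supply.

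First I would place the two formulas side by side and subtract. Writing $\mathbb{E}_{Q}[S(T)|\mathcal{F}_{t}]=F_{Q}(t,T)$ and inserting the expression from Proposition~\ref{Prop_forwardprice_arithmetic}, the deterministic seasonal terms $\Lambda_{a}(T)$ cancel at once. Next I would collect the contributions carrying the observable current values $X(t)$ and $Y(t)$. The only elementary fact needed here is the identity $e^{-\alpha(1-\beta)(T-t)}=e^{-\alpha(T-t)}\,e^{\alpha\beta(T-t)}$, which factors $X(t)e^{-\alpha_{X}(1-\beta_{1})(T-t)}-X(t)e^{-\alpha_{X}(T-t)}=X(t)e^{-\alpha_{X}(T-t)}(e^{\alpha_{X}\beta_{1}(T-t)}-1)$, and identically for the $Y$-term with $\alpha_{Y},\beta_{2}$ in place of $\alpha_{X},\beta_{1}$. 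The remaining purely deterministic mean-reversion-level contributions are simply retained as the difference between the $Q$-levels (featuring $\theta_{1}$, $\kappa_{L}^{\prime}(\theta_{2})$ and the slowed speeds $\alpha(1-\beta)$) and the $P$-levels (featuring $\mu_{X}$, $\kappa_{L}^{\prime}(0)$ and speeds $\alpha$); these reproduce the four drift terms of the claimed formula verbatim.

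The genuinely substantive content is not the subtraction but the two inputs it relies upon, and this is where I would locate the only real difficulty. The $P$-side is routine because $W$ and $\tilde{N}^{L}$ have independent increments under $P$, so the stochastic-integral parts over $[t,T]$ have vanishing $\mathcal{F}_{t}$-conditional mean. The $Q$-side is more delicate: under $Q_{\bar{\theta},\bar{\beta}}$ the compensator of $N^{L}$ depends on $Y$, so $\tilde{N}_{Q}^{L}$ is no longer a process with independent increments, and one cannot naively drop the conditional expectation of the jump integral. This is exactly the role of the preceding lemma, which upgrades $\int_{0}^{t}\int_{0}^{\infty}e^{\alpha_{Y}(1-\beta_{2})s}z\,\tilde{N}_{Q}^{L}(ds,dz)$ from a $Q$-local martingale (all Girsanov's theorem guarantees) to a true $Q$-martingale, so that its increment over $[t,T]$ does have zero $\mathcal{F}_{t}$-conditional expectation, as used in Remark~\ref{Remark_Q-Martingale}. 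Granting those two expectation computations, which are supplied before the statement, the proposition then follows immediately by inspection of the difference.
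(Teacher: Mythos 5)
Your proposal is correct and takes essentially the same route as the paper: the paper offers no separate proof, presenting the proposition as an immediate consequence (``in view of Prop.~\ref{Prop_forwardprice_arithmetic}'') of the forward price formula and the preceding computation of $\mathbb{E}_{P}[S(T)|\mathcal{F}_{t}]$, i.e.\ exactly your subtraction with the regrouping $e^{-\alpha(1-\beta)(T-t)}=e^{-\alpha(T-t)}e^{\alpha\beta(T-t)}$. You also correctly locate the only substantive input in the lemma upgrading $\int_{0}^{t}\int_{0}^{\infty}e^{\alpha_{Y}(1-\beta_{2})s}z\,\tilde{N}_{Q}^{L}(ds,dz)$ from a $Q$-local martingale to a true $Q$-martingale, which is precisely the role it plays in the paper's derivation of the $Q$-side conditional expectation.
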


We analyse different cases for the risk premium in the next subsection.

\subsubsection{Discussion on the risk premium}

The first remarkable property of this measure change is that, as long as the
parameter $\bar{\beta}\neq(0,0),$ the risk premium is stochastic. This might
be a desirable feature in view of the discussion in the Introduction where we
referred to the economical and empirical evidence in Geman and
Vasicek~\cite{GV}, Bessembinder and Lemon~\cite{BL} and Benth, Cartea and
Kiesel \cite{BCK}.
%as this stochastic characteristic of the risk premium is
%actually observed in the market.
Note that when $\bar{\beta}=(0,0),$ our measure change coincides with the
Esscher transform (see Benth, \v{S}altyt\.{e} Benth and Koekebakker
\cite{BSBK}). In the Esscher case, the risk premium has a deterministic
evolution given by%
\begin{equation}
R_{a,Q}^{F}(t,T)=\frac{\theta_{1}}{\alpha_{X}}(1-e^{-\alpha_{X}(T-t)}%
)+\frac{\kappa_{L}^{\prime}(\theta_{2})-\kappa_{L}^{\prime}(0)}{\alpha_{Y}%
}(1-e^{-\alpha_{Y}(T-t)}), \label{Equ_RPA_Esscher}%
\end{equation}
an already known result, see Benth and Sgarra \cite{BeSg12}.

Another interesting feature of the empirical risk premium is that its sign
might change from positive to negative when the time to maturity
$\tau\triangleq T-t$ increases. Hence, we are interested in theoretical models
that allow to reproduce such empirical property. From now on we shall rewrite
the expressions for the risk premium in terms of the time to maturity $\tau$
and, slightly abusing the notation, we will write $R_{a,Q}^{F}(t,\tau)$
instead of $R_{a,Q}^{F}(t,t+\tau).$ We fix the parameters of the model under
the historical measure $P,$ i.e., $\mu_{X},\alpha_{X},\sigma_{X},\mu_{Y},$ and
$\alpha_{Y},$ and study the possible sign of $R_{a,Q}^{F}(t,\tau)$ in terms of
the change of measure parameters, i.e., $\bar{\beta}=(\beta_{1},\beta_{2})$
and $\bar{\theta}=(\theta_{1},\theta_{2})$ and the time to maturity $\tau.$
Note that the present time just enters into the picture through the stochastic
components $X$ and $Y.$ We are going to assume $\mu_{X}=\mu_{Y}=0.$ This
assumption is justified, from a modeling point of view, because we want the
processes $X$ and $Y$ to revert toward zero. In this way, the seasonality
function $\Lambda_{a}$ accounts completely for the mean price level. On the
other hand it is also reasonable to expect that $\alpha_{X}<\alpha_{Y},$ which
means that the component accounting for the jumps reverts the fastest (e.g.,
being the factor modelling the spikes). The factor $X$ is referred to as the
base component, modelling the normal price variations when the market is not
under particular stress. The expression for $R_{a,Q}^{F}(t,\tau)$ given in
Proposition \ref{Prop_FRP_Arithmetic} allows for a quite rich behaviour. We
are going to study the cases $\bar{\theta}=(0,0),\bar{\beta}=(0,0)$ and the
general case separately. Moreover, in order to graphically illustrate the
discussion we plot the risk premium profiles obtained assuming that the
subordinator $L$ is a compound Poisson process with jump intensity
$c/\lambda>0$ and exponential jump sizes with mean $\lambda.$ That is, $L$
will have the L\'{e}vy measure given in Example \ref{Example_Subordinators}.
We shall measure the time to maturity $\tau$ in days and plot $R_{a,Q}%
^{F}(t,\tau)$ for $\tau\in\lbrack0,360],$ roughly one year. We fix the values
of the following parameters%
\[
\alpha_{X}=0.099,\alpha_{Y}=0.3466,c=0.4,\lambda=2.
\]
The speed of mean reversion for the base component $\alpha_{X}$ yields a
half-life of seven days, while the one for the spikes $\alpha_{Y}$ yields a
half-life of two days (see e.g., Benth, Saltyte Benth and Koekebakker
\cite{BSBK} for the concept of half-life). The values for $c$ and $\lambda$
give jumps with mean $0.5$ and frequency of $5$ spikes a month.

The following lemma will help us in the discussion to follow.

\begin{lemma}
\label{Lemma_Main_RPA}If $\mu_{X}=\mu_{Y}=0$ and $\alpha_{X}<\alpha_{Y},$ we
have that the risk premium $R_{a,Q}^{F}(t,\tau)$ satisfies%
\begin{align}
R_{a,Q}^{F}(t,\tau)  &  =X(t)e^{-\alpha_{X}\tau}(e^{\alpha_{X}\beta_{1}\tau
}-1)+Y(t)e^{-\alpha_{Y}\tau}(e^{\alpha_{Y}\beta_{2}\tau}%
-1)\label{Equ_RiskPremium_Arithm_ZeroDrifts}\\
&  +\frac{\theta_{1}}{\alpha_{X}(1-\beta_{1})}(1-e^{-\alpha_{X}(1-\beta
_{1})\tau})+\frac{\kappa_{L}^{\prime}(\theta_{2})-\kappa_{L}^{\prime}%
(0)}{\alpha_{Y}(1-\beta_{2})}(1-e^{-\alpha_{Y}(1-\beta_{2})\tau})\nonumber\\
&  +\frac{\kappa_{L}^{\prime}(0)}{\alpha_{Y}}\Lambda(\alpha_{Y}\tau
,1-\beta_{2}),\nonumber
\end{align}
where%
\begin{align*}
\Lambda(x,y)  &  =\frac{1-e^{-xy}}{y}-(1-e^{-x}),\quad x\in\mathbb{R}_{+}%
,y\in\lbrack0,1],\\
\lim_{x\rightarrow\infty}\Lambda(x,y)  &  =\frac{1-y}{y},\\
\lim_{x\rightarrow0}\frac{\partial}{\partial x}\Lambda(x,y)  &  =0,
\end{align*}
is a non-negative function. Moreover,%
\begin{align}
\lim_{\tau\rightarrow\infty}R_{a,Q}^{F}(t,\tau)  &  =\frac{\theta_{1}}%
{\alpha_{X}(1-\beta_{1})}+\frac{\kappa_{L}^{\prime}(\theta_{2})-\kappa
_{L}^{\prime}(0)}{\alpha_{Y}(1-\beta_{2})}+\frac{\kappa_{L}^{\prime}%
(0)}{\alpha_{Y}}\frac{\beta_{2}}{1-\beta_{2}}%
,\label{Equ_RPremium_Infinity_Arithmetic}\\
\lim_{\tau\rightarrow0}\frac{\partial}{\partial\tau}R_{a,Q}^{F}(t,\tau)  &
=X(t)\alpha_{X}\beta_{1}+Y(t)\alpha_{Y}\beta_{2}+\theta_{1}+\kappa_{L}%
^{\prime}(\theta_{2})-\kappa_{L}^{\prime}(0).
\label{Equ_DRiskprem_Zero_Arithmetic}%
\end{align}

\end{lemma}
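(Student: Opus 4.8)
The plan is to derive everything from Proposition~\ref{Prop_FRP_Arithmetic} by elementary algebra and calculus; once that forward-price-based formula is in hand there is no probabilistic content left. First I would set $\mu_X=\mu_Y=0$ and write $\tau=T-t$ in the expression of Proposition~\ref{Prop_FRP_Arithmetic}, obtaining the two stochastic terms together with
\begin{equation*}
\frac{\theta_1}{\alpha_X(1-\beta_1)}(1-e^{-\alpha_X(1-\beta_1)\tau})+\frac{\kappa_L'(\theta_2)}{\alpha_Y(1-\beta_2)}(1-e^{-\alpha_Y(1-\beta_2)\tau})-\frac{\kappa_L'(0)}{\alpha_Y}(1-e^{-\alpha_Y\tau}).
\end{equation*}
The only manipulation needed to reach \eqref{Equ_RiskPremium_Arithm_ZeroDrifts} is to split $\kappa_L'(\theta_2)=(\kappa_L'(\theta_2)-\kappa_L'(0))+\kappa_L'(0)$ in the middle term: the piece carrying $\kappa_L'(\theta_2)-\kappa_L'(0)$ reproduces the fourth summand of \eqref{Equ_RiskPremium_Arithm_ZeroDrifts}, while combining the remaining $\kappa_L'(0)$ piece with the last term and factoring out $\kappa_L'(0)/\alpha_Y$ produces exactly $\Lambda(\alpha_Y\tau,1-\beta_2)$ with $x=\alpha_Y\tau$, $y=1-\beta_2$. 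This bookkeeping step is the heart of the rewriting and the only place where care with signs and denominators is required.

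Next I would record the properties of $\Lambda$. Differentiating its definition gives the clean identity $\partial_x\Lambda(x,y)=e^{-xy}-e^{-x}$, from which $\lim_{x\to0}\partial_x\Lambda(x,y)=0$ is immediate. Letting $x\to\infty$ in $\Lambda(x,y)=(1-e^{-xy})/y-(1-e^{-x})$ yields $1/y-1=(1-y)/y$. For non-negativity I would observe that $\Lambda(0,y)=0$ and that, for $y\in[0,1]$ and $x\ge0$, one has $xy\le x$, hence $e^{-xy}\ge e^{-x}$ and $\partial_x\Lambda(x,y)\ge0$; thus $\Lambda(\cdot,y)$ is non-decreasing from its initial value $0$ and so stays non-negative. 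The degenerate case $y=0$ is handled by the limiting expression $\Lambda(x,0)=x-(1-e^{-x})\ge0$.

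Finally I would read off the two limits of $R^F_{a,Q}(t,\tau)$ term by term from \eqref{Equ_RiskPremium_Arithm_ZeroDrifts}, using the implicit restriction $\beta_1,\beta_2\in[0,1)$ so that the denominators in \eqref{Equ_RPremium_Infinity_Arithmetic} make sense. As $\tau\to\infty$ the two stochastic terms vanish (each behaves like $e^{-\alpha(1-\beta)\tau}$), the two ``$1-e^{-(\cdot)}$'' terms converge to their prefactors, and the $\Lambda$-term tends to $\frac{\kappa_L'(0)}{\alpha_Y}\cdot\frac{1-(1-\beta_2)}{1-\beta_2}=\frac{\kappa_L'(0)}{\alpha_Y}\frac{\beta_2}{1-\beta_2}$ by the first $\Lambda$-limit, giving \eqref{Equ_RPremium_Infinity_Arithmetic}. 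For the slope at the origin I would differentiate \eqref{Equ_RiskPremium_Arithm_ZeroDrifts} in $\tau$: writing the first term as $X(t)(e^{-\alpha_X(1-\beta_1)\tau}-e^{-\alpha_X\tau})$, its derivative at $\tau=0$ is $X(t)\alpha_X\beta_1$ (and likewise $Y(t)\alpha_Y\beta_2$ for the second), the third and fourth terms contribute $\theta_1$ and $\kappa_L'(\theta_2)-\kappa_L'(0)$, and the $\Lambda$-term contributes $\kappa_L'(0)\,\partial_x\Lambda(0,1-\beta_2)=0$ by the chain rule and the second $\Lambda$-limit, yielding \eqref{Equ_DRiskprem_Zero_Arithmetic}. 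I do not anticipate any genuine obstacle: the computation is routine, and the only things to watch are the sign/denominator bookkeeping in the first step and the tacit assumption $\beta_i\ne1$ underlying the $\tau\to\infty$ statement.
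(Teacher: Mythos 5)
Your proposal is correct and follows exactly the route the paper takes: its proof of Lemma \ref{Lemma_Main_RPA} consists of the single remark that the claim ``follows trivially from Proposition \ref{Prop_FRP_Arithmetic} and the assumptions on the coefficients,'' and your computation — setting $\mu_X=\mu_Y=0$, splitting $\kappa_L'(\theta_2)=(\kappa_L'(\theta_2)-\kappa_L'(0))+\kappa_L'(0)$ to produce the $\Lambda(\alpha_Y\tau,1-\beta_2)$ term, and reading off the two limits term by term — is precisely the algebra the authors leave implicit. Your additional care with the degenerate case $\beta_i=1$ (where the $\tau\to\infty$ limit requires $\beta_i<1$) is a sensible clarification the paper glosses over.
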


\begin{proof}
It follows trivially from Proposition \ref{Prop_FRP_Arithmetic} and the
assumptions on the coefficients $\mu_{X},\mu_{Y},\alpha_{X}$ and $\alpha_{Y}.$
\end{proof}

\begin{figure}[t]
\centering
\subfigure[$\theta_1=0.075,\theta_2=0$]{\includegraphics[width=2.90in]{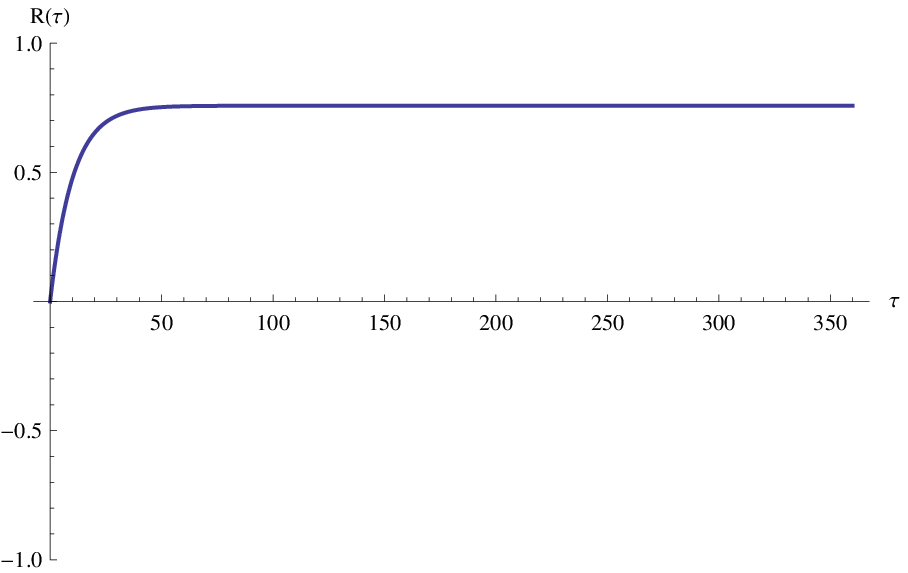}\label{FigBeta0_1_a}}
\subfigure[$\theta_1=-0.075,\theta_2=0$]{\includegraphics[width=2.90in]{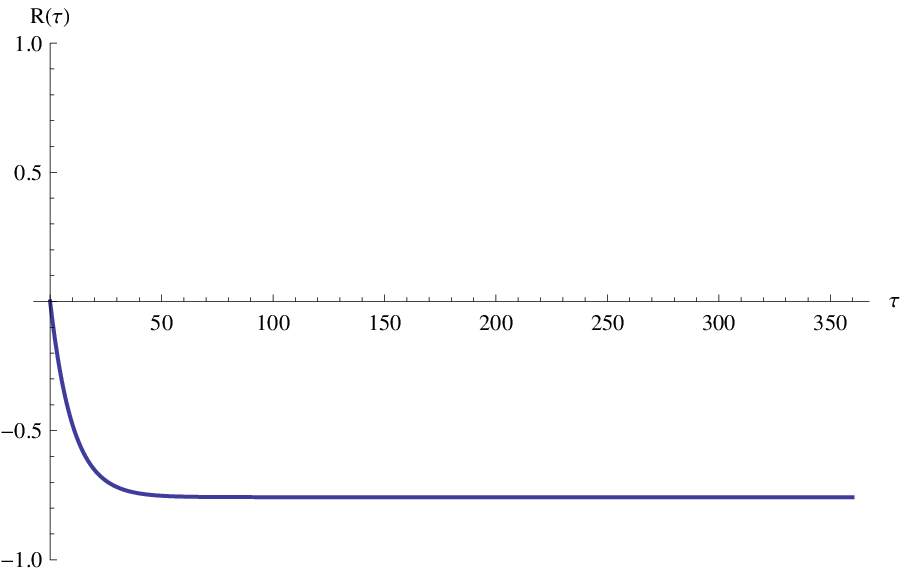}\label{FigBeta0_1_b}}\newline%
\subfigure[$\theta_1=0,\theta_2=0.75$]{\includegraphics[width=2.90in]{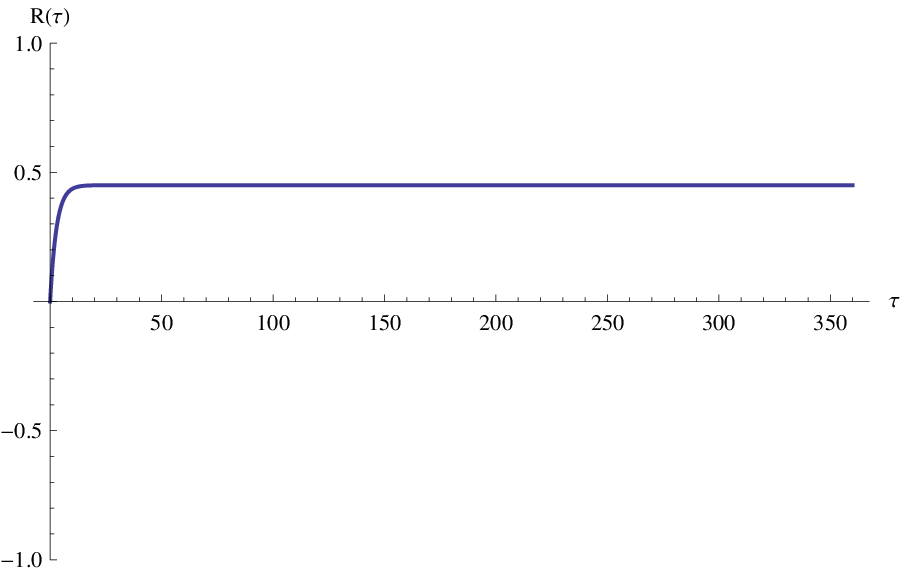}\label{FigBeta0_1_c}}
\subfigure[$\theta_1=0,\theta_2=-0.75$]{\includegraphics[width=2.90in]{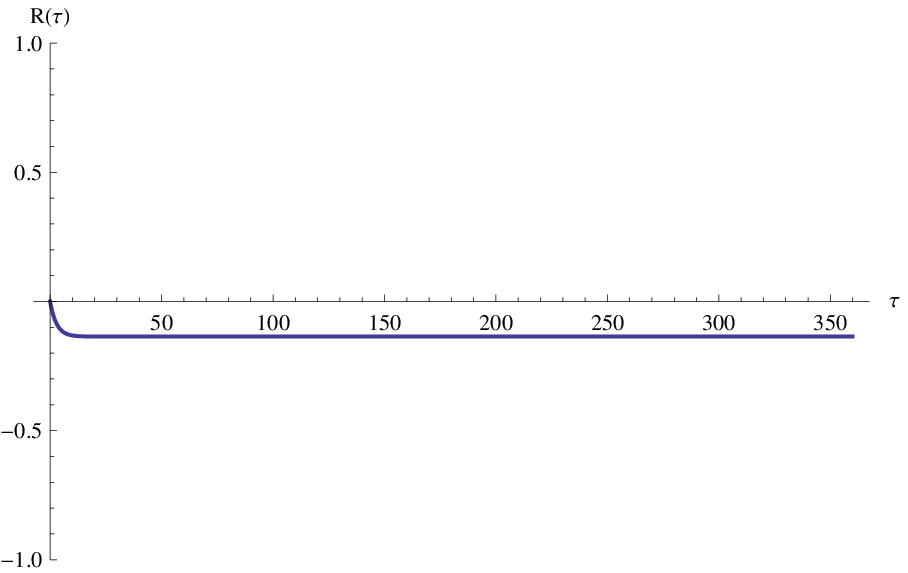}\label{FigBeta0_1_d}}
\caption{Risk premium profiles when $L$ is a compound Poisson process with
exponentially distributed jumps. Esscher transform: case $\bar{\beta}=(0,0).$
Arithmetic spot price model}%
\label{FigBeta0_1}%
\end{figure}

\begin{remark}
\label{Remark_Limit_Cond_RP_Arithm}The previous Lemma shows that the risk
premium $R_{a,Q}^{F}(t,\tau)$ vanishes with rate given by equation $\left(
\ref{Equ_DRiskprem_Zero_Arithmetic}\right)  $ at the short end of the forward
curve, when $\tau$ converges to zero, and\ approaches the value given in
equation $(\ref{Equ_RPremium_Infinity_Arithmetic})$ at long end of the forward
curve, when $\tau$ tends to infinity. It follows that the sign of $R_{a,Q}%
^{F}(t,\tau)$ in the short end of the forward curve will be positive if
$\left(  \ref{Equ_DRiskprem_Zero_Arithmetic}\right)  $ is positive and
negative if $\left(  \ref{Equ_DRiskprem_Zero_Arithmetic}\right)  $ is
negative. Hence, a sufficient condition to obtain the empirically observed
risk premium profiles (with positive values in the short end and negative
values in the long end of the forward curve) is to choose the values of the
parameters $\bar{\theta}\in\bar{D}_{L}$ and $\bar{\beta}\in\lbrack0,1]^{2}$
such that the following two conditions are simultaneously satisfied%
\begin{align*}
\frac{\theta_{1}}{\alpha_{X}(1-\beta_{1})}+\frac{\kappa_{L}^{\prime}%
(\theta_{2})-\kappa_{L}^{\prime}(0)}{\alpha_{Y}(1-\beta_{2})}+\frac{\kappa
_{L}^{\prime}(0)}{\alpha_{Y}}\frac{\beta_{2}}{1-\beta_{2}}  &  <0,\\
X(t)\alpha_{X}\beta_{1}+Y(t)\alpha_{Y}\beta_{2}+\theta_{1}+\kappa_{L}^{\prime
}(\theta_{2})-\kappa_{L}^{\prime}(0)  &  >0.
\end{align*}
We also recall here that, according to Remark \ref{Remark_D_Cumulants},
$\kappa^{\prime}(\theta)$ is positive, increasing function, so the sign of
$\kappa_{L}^{\prime}(\theta_{2})-\kappa_{L}^{\prime}(0)$ is equal to the sign
of $\theta_{2}.$ Moreover, it is easy to see that
\[
-\kappa_{L}^{\prime}(0)<\kappa_{L}^{\prime}(\theta_{2})-\kappa_{L}^{\prime
}(0)<\kappa_{L}^{\prime}(\Theta_{L}/2)-\kappa_{L}^{\prime}(0)<\infty.
\]

\end{remark}

\begin{itemize}
\item \textbf{Changing the level of mean reversion (Esscher transform)},
$\bar{\beta}=(0,0):$ Setting $\bar{\beta}=(0,0),$ the probability measure $Q$
only changes the level of mean reversion (which is assumed to be zero under
the historical measure $P$). On the other hand, the risk premium is
deterministic and cannot change with changing market conditions. From equation
$\left(  \ref{Equ_RPA_Esscher}\right)  ,$ we get that if we set $\theta
_{2}=0,$ which means that we just change the level of the regular factor $X,$
the sign of $R_{a,Q}^{F}(t,\tau)$ is the same for any time to maturity $\tau$
and it is equal to the sign of $\theta_{1},$ see Figures \ref{FigBeta0_1_a}
and \ref{FigBeta0_1_b}. The situation is similar if we set $\theta_{1}=0,$
then the sign of $R_{a,Q}^{F}(t,\tau)$ is constant over the time to maturity
$\tau$ end equal to the sign of $\kappa_{L}^{\prime}(\theta_{2})-\kappa
_{L}^{\prime}(0),$ that is to the sign of $\theta_{2},$ see Figures
\ref{FigBeta0_1_c} and \ref{FigBeta0_1_d}.

\begin{figure}[t]
\centering
\subfigure[$\theta_1=-0.1,\theta_2=0.95$]{\includegraphics[width=2.90in]{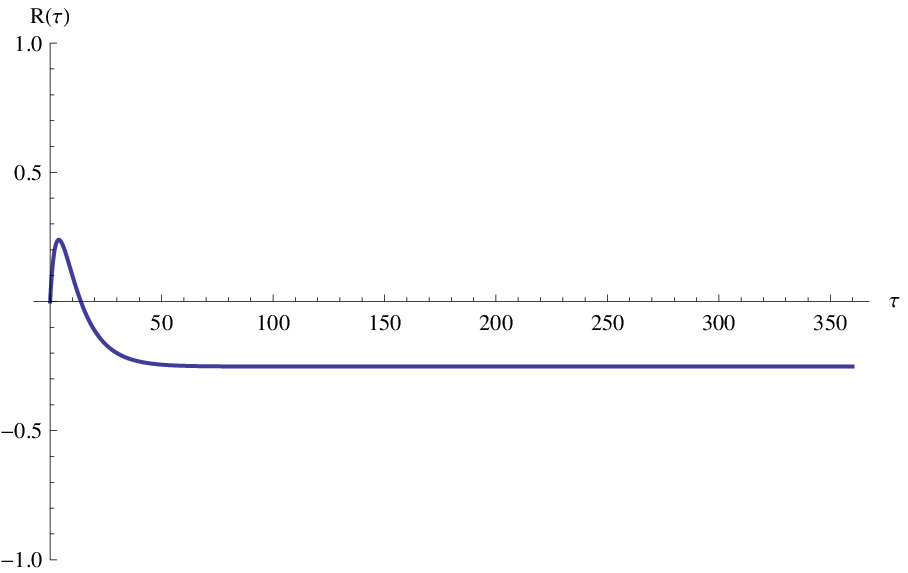}\label{FigBeta0_2_a}}
\subfigure[$\theta_1=0.02,\theta_2=-0.95$]{\includegraphics[width=2.90in]{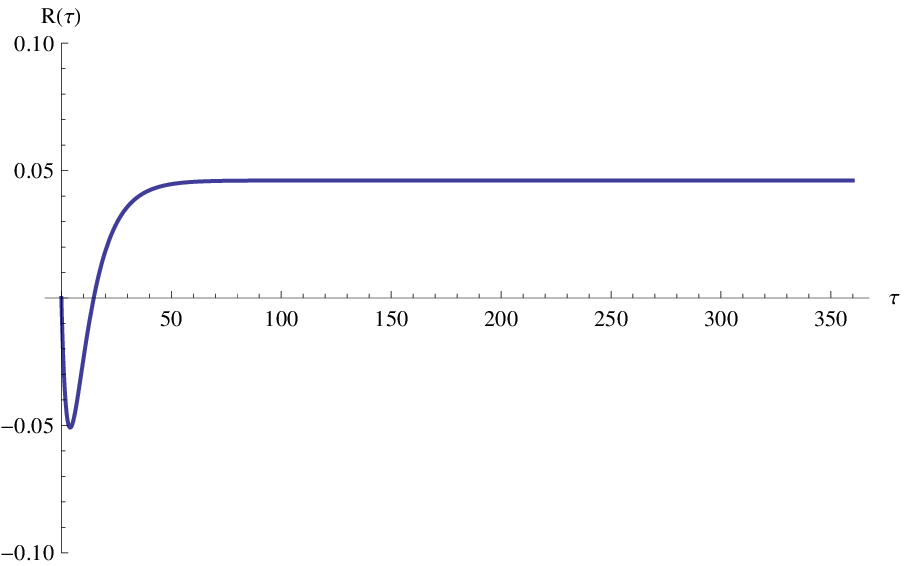}\label{FigBeta0_2_b}}\newline%
\subfigure[$\theta_1=-0.05,\theta_2=0.95$]{\includegraphics[width=2.90in]{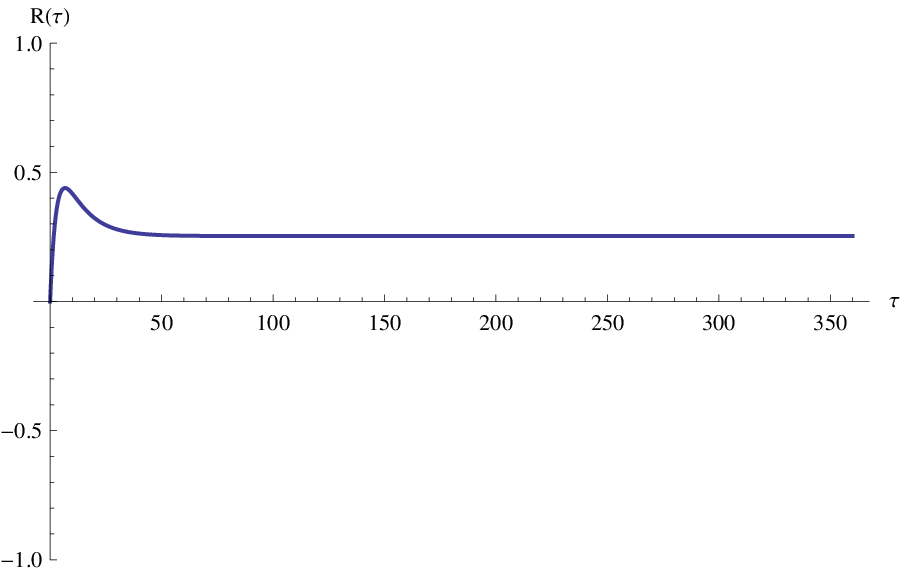}\label{FigBeta0_2_c}}
\subfigure[$\theta_1=-0.075,\theta_2=0.15$]{\includegraphics[width=2.90in]{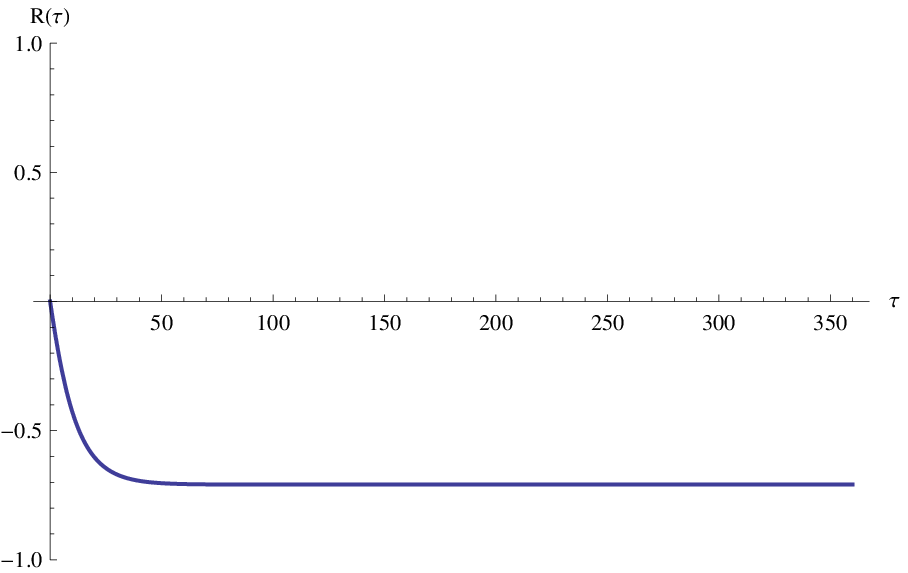}\label{FigBeta0_2_d}}\caption{Risk
premium profiles when $L$ is a compound Poisson process with exponentially
distributed jumps. Esscher transform: case $\bar{\beta}=(0,0).$ Arithmetic
spot price model}%
\label{FigBeta0_2}%
\end{figure}

When both $\theta_{1}$ and $\theta_{2}$ are different from zero the situation
is more interesting, the sign of $R_{a,Q}^{F}(t,\tau)$ may change depending on
the time to maturity. By Remark \ref{Remark_Limit_Cond_RP_Arithm} it suffices
to choose $\theta_{1}<0$ and $\theta_{2}>0$ satisfying%
\begin{align}
\frac{\theta_{1}}{\alpha_{X}}+\frac{\kappa_{L}^{\prime}(\theta_{2})-\kappa
_{L}^{\prime}(0)}{\alpha_{Y}}  &  <0,\label{Equ_RPA_Esscher_1}\\
\theta_{1}+\kappa_{L}^{\prime}(\theta_{2})-\kappa_{L}^{\prime}(0)  &  >0,
\label{Equ_RPA_Esscher_2}%
\end{align}
(these exist because $\alpha_{X}<\alpha_{Y}$ and $\kappa_{L}^{\prime}(\theta)$
is increasing) to get that $R_{a,Q}^{F}(t,\tau)>0$ for $\tau$ close to zero
and $R_{a,Q}^{F}(t,\tau)<0$ for $\tau$ large enough, see Figure
\ref{FigBeta0_2_a}. This corresponds to the situation of a premium induced
from consumers' hedging pressure on short-term contracts and long term hedging
of producers. We can also choose values for $\theta_{1}>0$ and $\theta_{2}<0$
such that equations \ref{Equ_RPA_Esscher_1} and \ref{Equ_RPA_Esscher_2} are
satisfied but with inverted inequalities. In this way, we can get that
$R_{a,Q}^{F}(t,\tau)<0$ for $\tau$ close to zero and $R_{a,Q}^{F}(t,\tau)>0$
for $\tau$ large enough, see Figure \ref{FigBeta0_2_b}. Risk premium profiles
with constant sign can also be generated, see Figures \ref{FigBeta0_2_c} and
\ref{FigBeta0_2_d}.

\item \textbf{Changing the speed of mean reversion,} $\bar{\theta}=(0,0):$
Setting $\bar{\theta}=(0,0),$ the probability measure $Q$ only changes speed
of mean reversion. Note that in this case the risk premium is stochastic and
it changes with market conditions. By Lemma \ref{Lemma_Main_RPA} we have that
the risk premium is given by
\begin{align*}
R_{a,Q}^{F}(t,\tau)  &  =X(t)e^{-\alpha_{X}\tau}(e^{\alpha_{X}\beta_{1}\tau
}-1)+Y(t)e^{-\alpha_{Y}\tau}(e^{\alpha_{Y}\beta_{2}\tau}-1)\\
&  +\frac{\kappa_{L}^{\prime}(0)}{\alpha_{Y}}\Lambda(\alpha_{Y}\tau
,1-\beta_{2}),
\end{align*}
and%
\begin{align*}
\lim_{\tau\rightarrow\infty}R_{a,Q}^{F}(t,\tau)  &  =\frac{\kappa_{L}^{\prime
}(0)}{\alpha_{Y}}\frac{\beta_{2}}{1-\beta_{2}}\geq0,\\
\lim_{\tau\rightarrow0}\frac{\partial}{\partial\tau}R_{a,Q}^{F}(t,\tau)  &
=X(t)\alpha_{X}\beta_{1}+Y(t)\alpha_{Y}\beta_{2}.
\end{align*}
Hence the risk premium will approach to a non negative value in the long end
of the market. In the short end, it can be both positive or negative and
stochastically varying with $X(t)$ and $Y(t),$ but $Y(t)$ will always
contribute to a positive sign. Actually, as the function $\Lambda(x,y)$ is
non-negative and $\kappa_{L}^{\prime}(0)$ is strictly positive, the only
negative contribution to $R_{a,Q}^{F}(t,\tau)$ comes from the term due to the
base component $X$. Hence, if $\beta_{1}=0$ or $X(t)\geq0,$ then $R_{a,Q}%
^{F}(t,\tau)$ will be positive for all times to maturity. Some of the possible
risk profiles that can be obtained are plotted in Figure \ref{FigTheta0_3}.

\begin{figure}[t]
\centering
\subfigure[$\beta_1=0.25,\beta_2=0.75,X(t)=2.5,Y(t)=2.5$]{\includegraphics[width=2.90in]{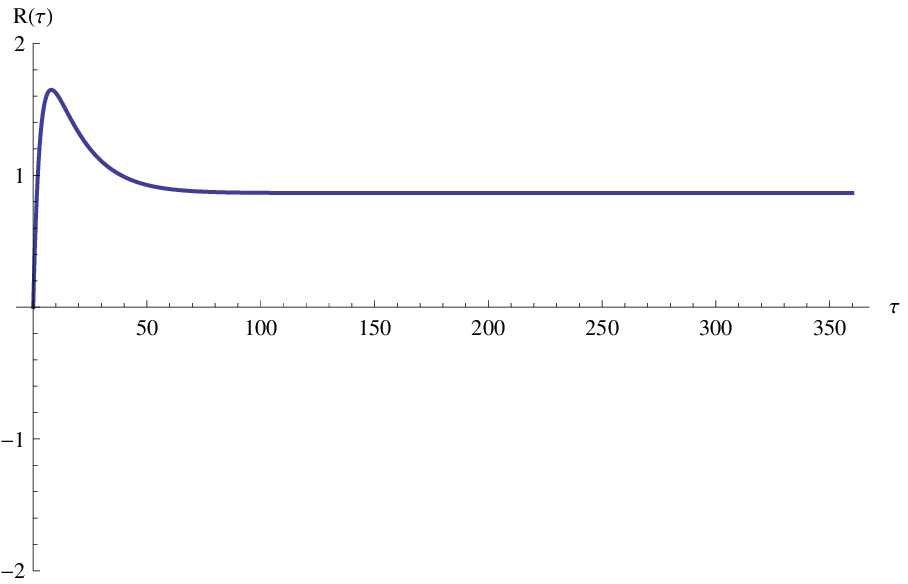}\label{FigTheta0_3_a}}
\subfigure[$\beta_1=0.75,\beta_2=0,X(t)=-2.5,Y(t)=2.5$]{\includegraphics[width=2.90in]{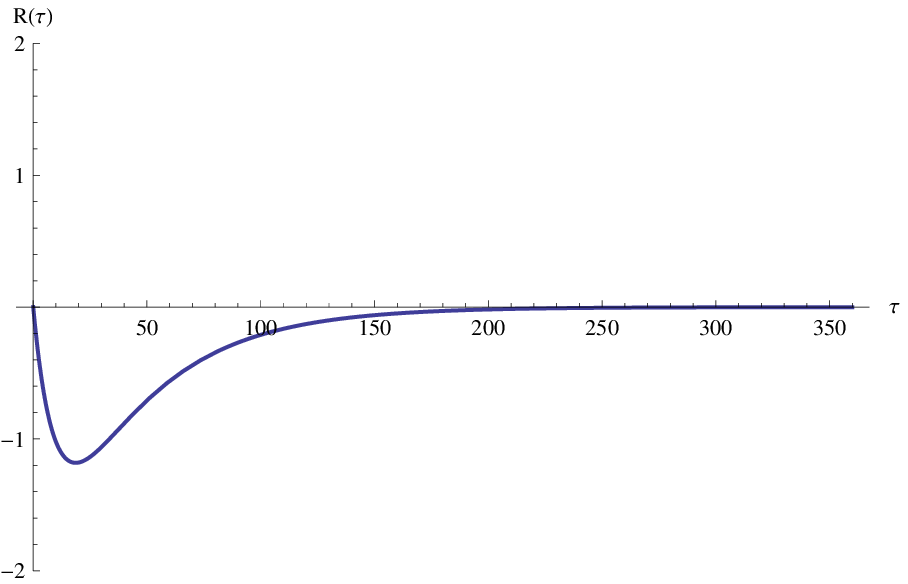}\label{FigTheta0_3_b}}\newline%
\subfigure[$\beta_1=0.75,\beta_2=0.75,X(t)=-2.5,Y(t)=0$]{\includegraphics[width=2.90in]{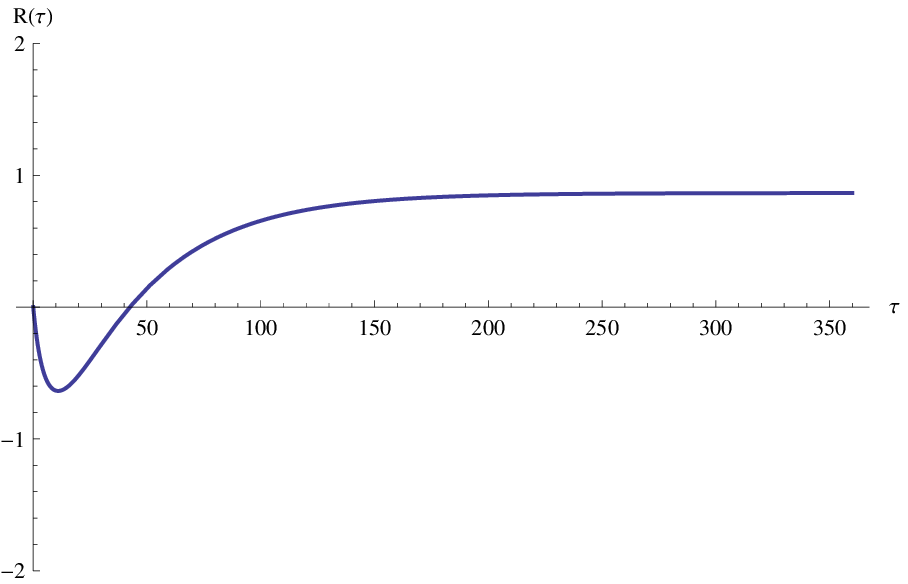}\label{FigTheta0_3_c}}
\subfigure[$\beta_1=0.5,\beta_2=0.5,X(t)=-2.5,Y(t)=2.5$]{\includegraphics[width=2.90in]{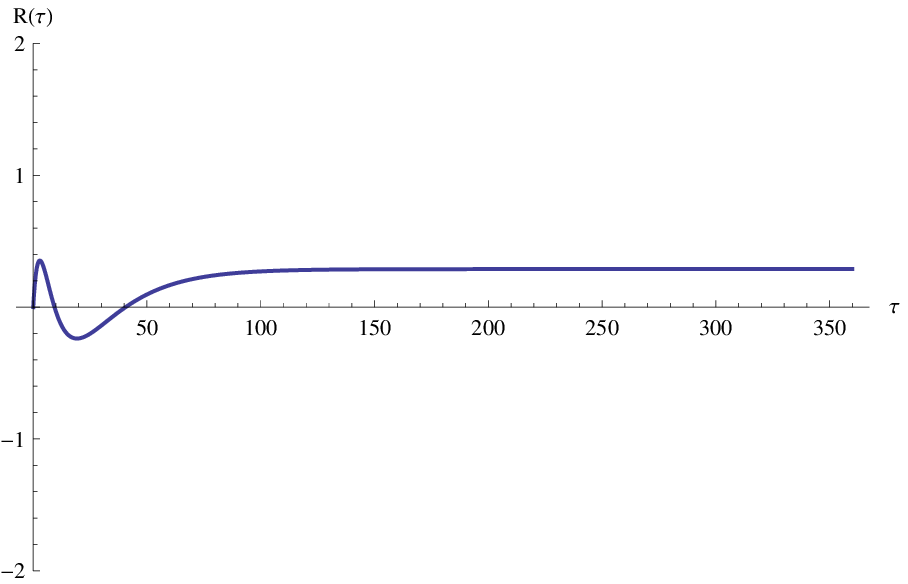}\label{FigTheta0_3_d}}\caption{Risk
premium profiles when $L$ is a compound Poisson process with exponentially
distributed jumps. Case $\bar{\theta}=(0,0).$ Arithmetic spot price model}%
\label{FigTheta0_3}%
\end{figure}\begin{figure}[t]
%\captionsetup[subfigure]{labelformat=empty}
\centering
\subfigure[$\beta_1=0,\beta_2=0.88,\theta_1=-0.5,\theta_2=0.5,X(t)=\mathbb{R},Y(t)=5$]{\includegraphics[width=2.90in]{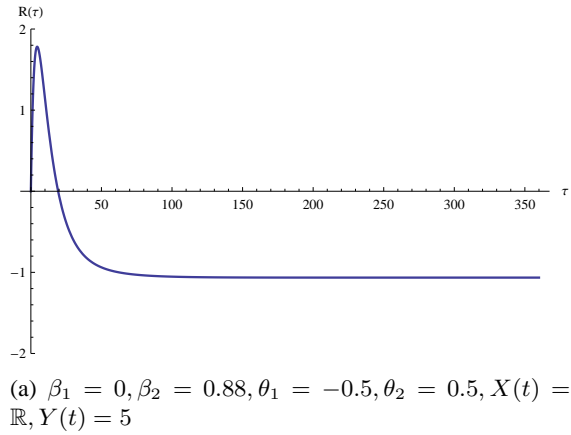}\label{Fig_4_a}}\caption{Risk
premium profiles when $L$ is a compound Poisson process with exponentially
distributed jumps. Arithmetic spot price model}%
\label{Fig_4}%
\end{figure}

\item \textbf{Changing the level and speed of mean reversion simultaneously}:
The general case is quite complex to analyse. As we are more interested in how
the change of measure $Q$ influence the component $Y(t)$, responsible for the
spikes in the prices, we are going to assume that $\beta_{1}=0.$ This means
that $Q$ may change the level of mean reversion of the regular component
$X(t),$ but not the speed at which this component reverts to that level. The
first implication of this assumption is that the possible stochastic component
in $R_{a,Q}^{F}(t,\tau)$ due to $X(t)$ vanish. This simplifies the analysis as
this term could be positive or negative. By Lemma \ref{Lemma_Main_RPA} we get
that
\begin{align*}
R_{a,Q}^{F}(t,\tau)  &  =Y(t)e^{-\alpha_{Y}\tau}(e^{\alpha_{Y}\beta_{2}\tau
}-1)+\frac{\theta_{1}}{\alpha_{X}}(1-e^{-\alpha_{X}\tau})\\
&  +\frac{\kappa_{L}^{\prime}(\theta_{2})}{\alpha_{Y}(1-\beta_{2}%
)}(1-e^{-\alpha_{Y}(1-\beta_{2})\tau})-\frac{\kappa_{L}^{\prime}(0)}%
{\alpha_{Y}}(1-e^{-\alpha_{Y}\tau}).
\end{align*}
and
\begin{align}
\lim_{\tau\rightarrow\infty}R_{a,Q}^{F}(t,\tau)  &  =\frac{\theta_{1}}%
{\alpha_{X}}+\frac{\kappa_{L}^{\prime}(\theta_{2})-\kappa_{L}^{\prime}%
(0)}{\alpha_{Y}(1-\beta_{2})}+\frac{\kappa_{L}^{\prime}(0)}{\alpha_{Y}}%
\frac{\beta_{2}}{1-\beta_{2}},\label{Equ_RPA_General_1}\\
\lim_{\tau\rightarrow0}\frac{\partial}{\partial\tau}R_{a,Q}^{F}(t,\tau)  &
=Y(t)\alpha_{Y}\beta_{2}+\theta_{1}+\kappa_{L}^{\prime}(\theta_{2})-\kappa
_{L}^{\prime}(0). \label{Equ_RPA_General_2}%
\end{align}
Note that we can make equation $\left(  \ref{Equ_RPA_General_1}\right)  $
negative by simply choosing $\theta_{1}$
\begin{equation}
\theta_{1}<-\frac{\alpha_{X}}{\alpha_{Y}(1-\beta_{2})}\left(  \kappa
_{L}^{\prime}(\theta_{2})-\kappa_{L}^{\prime}(0)+\beta_{2}\kappa_{L}^{\prime
}(0)\right)  . \label{Equ_RPA_General_3}%
\end{equation}
On the other hand, to make equation $(\ref{Equ_RPA_General_2})$ positive, we
have to choose $\theta_{1}$ satisfying%
\begin{equation}
\theta_{1}>-(\kappa_{L}^{\prime}(\theta_{2})-\kappa_{L}^{\prime}%
(0))-Y(t)\alpha_{Y}\beta_{2}. \label{Equ_RPA_General_4}%
\end{equation}
Equations $\left(  \ref{Equ_RPA_General_3}\right)  $ and $\left(
\ref{Equ_RPA_General_4}\right)  $ are compatible if the following inequality
is satisfied%
\begin{equation}
\kappa_{L}^{\prime}(\theta_{2})-\kappa_{L}^{\prime}(0)+Y(t)\alpha_{Y}\beta
_{2}>\frac{\alpha_{X}}{\alpha_{Y}(1-\beta_{2})}\left(  \kappa_{L}^{\prime
}(\theta_{2})-\kappa_{L}^{\prime}(0)+\beta_{2}\kappa_{L}^{\prime}(0)\right)  .
\label{Equ_RPA_General_5}%
\end{equation}
For any $\theta_{2}>0,$ which yields $\kappa_{L}^{\prime}(\theta_{2}%
)-\kappa_{L}^{\prime}(0)>0$ (and $\theta_{1}<0$), we have that there exists
$\beta_{2}^{\ast}\in(0,1)$ such that if $\beta_{2}<\beta_{2}^{\ast}$ equation
$\left(  \ref{Equ_RPA_General_5}\right)  $ is satisfied. Actually, the larger
the value of $Y(t),$ the larger the value of $\beta_{2}^{\ast}$. If $Y(t)$ is
close to $\kappa_{L}^{\prime}(0)/\alpha_{Y},$ then $\beta_{2}^{\ast}$ is close
to $(\alpha_{Y}-\alpha_{X})/\alpha_{Y}.$ This just says that if the speed of
mean reversion of the spikes component is large (in absolute value and
relatively to the speed of mean reversion of the base component) one can
choose $\beta_{2}$ close to one. Even in the case that $Y(t)=0$, equation
$\left(  \ref{Equ_RPA_General_5}\right)  $ is satisfied by choosing $\beta
_{2}$ small enough. To sum up, we can create a measure $Q$ that can have a
positive premium in the short end of the forward market due to sudden positive
spikes in the price (that is, $Y$ increases), whereas in the long end of the
market these spikes are not influential and we have a negative premium, see
Figure \ref{Fig_4}.
\end{itemize}

\subsection{Geometric spot price model}

We assume in this section that the spot price $S(t)$ follows the geometric
model $\eqref{Equ_Geom_Model}$ for $0\leq t\leq T^{\ast}$, $T^{\ast}>0$ and
with the maturity of the forward contract being $0<T<T^{\ast}$. In our
setting, the geometric model is harder to deal with than the arithmetic one.
The results obtained are fair less explicit and some additional integrability
conditions on $L$ are required. A first, natural, additional assumption on $L$
is that the constant $\Theta_{L}$ appearing in Assumption
\ref{Assumption_Exp_Theta_L} to be bigger than $1.$ This condition is
reasonable to expect because it just states that $\mathbb{E}[e^{L(t)}%
]<\infty,$ for all $t\in\mathbb{R}$ , and if we want $\mathbb{E}[e^{Y(t)}]$ to
be finite it seems a\ minimal assumption. Note, however that this is not
entirely obvious because the process $Y$ has a mean reversion structure that
$L$ does not have. On the other hand, the complex probabilistic structure of
the spike factor $Y$ under the new probability measure $Q,$ makes the
computations much more difficult. Still, it is possible to compute the risk
premium analytically in some cases. In general, one has to rely on numerical techniques.

In what follows, we shall compute the conditional expectations involved under
$Q$ (note that $Q=P,$ when $\theta_{1}=\theta_{2}=\beta_{1}=\beta_{2}=0$).
First, we show that the problem can be reduced to the study of the spike
component $Y.$ Due to the independence of $X$ and $Y,$ we have that
\begin{align*}
\mathbb{E}_{Q}[S(T)]  &  =\Lambda_{g}(T)\mathbb{E}_{Q}[\exp(X(T)+Y(T))]\\
&  =\Lambda_{g}(T)\mathbb{E}_{Q}[\exp(X(T))]\mathbb{E}_{Q}[\exp(Y(T))],
\end{align*}
which is finite if and only $\mathbb{E}_{Q}[\exp(X(T))]<\infty$ and
$\mathbb{E}_{Q}[\exp(Y(T))]<\infty.$ As $X(T)$ is a Gaussian random variable
it has finite exponential moments. To determine whether $\mathbb{E}_{Q}%
[\exp(Y(T))]$ is finite or not is not as straightforward. Let us assume, for
now, that it is finite. Then, it makes sense to compute the following
conditional expectation%
\begin{align*}
\mathbb{E}_{Q}[S(T)|\mathcal{F}_{t}]  &  =\Lambda_{g}(T)\mathbb{E}_{Q}%
[\exp(X(T)+Y(T))|\mathcal{F}_{t}]\\
&  =\Lambda_{g}(T)\mathbb{E}_{Q}[\mathbb{E}_{Q}[\exp(X(T))|\mathcal{F}_{t}%
\vee\sigma(\{Y(t)\}_{0\leq t\leq T})]\exp(Y(T))|\mathcal{F}_{t}].
\end{align*}
Using $\left(  \ref{Equ_X_Explicit_Q}\right)  $, the fact that $X$ is
independent of $\sigma(\{Y(t)\}_{0\leq t\leq T})$ and basic properties of the
conditional expectation we get that
\begin{align*}
&  \mathbb{E}_{Q}[\exp(X(T))|\mathcal{F}_{t}\vee\sigma(\{Y(t)\}_{0\leq t\leq
T})]\\
&  =\exp\left(  X(t)e^{-\alpha_{X}(1-\beta_{1})(T-t)}+\frac{\mu_{X}+\theta
_{1}}{\alpha_{X}(1-\beta_{1})}(1-e^{-\alpha_{X}(1-\beta_{1})(T-t)})\right) \\
&  \qquad\times\mathbb{E}_{Q}[\exp\left(  \sigma_{X}\int_{t}^{T}e^{-\alpha
_{X}(1-\beta_{1})(T-s)}dW_{Q}(s)\right)  ]\\
&  =\exp\left(  X(t)e^{-\alpha_{X}(1-\beta_{1})(T-t)}+\frac{\mu_{X}+\theta
_{1}}{\alpha_{X}(1-\beta_{1})}(1-e^{-\alpha_{X}(1-\beta_{1})(T-t)})\right) \\
&  \qquad\times\exp\left(  \frac{\sigma_{X}^{2}}{4\alpha_{X}(1-\beta_{1}%
)}(1-e^{-2\alpha_{X}(1-\beta_{1})(T-t)})\right)  .
\end{align*}
Hence, we have reduced the problem to the study of $\mathbb{E}_{Q}%
[\exp(Y(T))|\mathcal{F}_{t}].$

Let us start with the Esscher case $Q=Q_{\theta_{2},\beta_{2}}$ with
$\theta_{2}\in D_{L}$ and $\beta_{2}=0.$ We have that
\begin{align*}
\mathbb{E}_{Q}[\exp(Y(T))]  &  =\exp\left\{  Y(0)e^{-\alpha_{Y}T}+\frac
{\mu_{Y}+\kappa_{L}^{\prime}(\theta_{2})}{\alpha_{Y}}(1-e^{-\alpha_{Y}%
T})\right\} \\
&  \qquad\times\mathbb{E}_{Q}[\exp\left(  \int_{0}^{T}\int_{0}^{\infty
}ze^{-\alpha_{Y}(T-s)}\tilde{N}_{Q}^{L}(ds,dz)\right)  ]\\
&  =\exp\left\{  Y(0)e^{-\alpha_{Y}T}+\frac{\mu_{Y}}{\alpha_{Y}}%
(1-e^{-\alpha_{Y}T})\right\} \\
&  \qquad\times\mathbb{E}_{Q}[\exp\left(  \int_{0}^{T}\int_{0}^{\infty
}ze^{-\alpha_{Y}(T-s)}N_{Q}^{L}(ds,dz)\right)  ]\\
&  =\exp\left\{  Y(0)e^{-\alpha_{Y}T}+\frac{\mu_{Y}}{\alpha_{Y}}%
(1-e^{-\alpha_{Y}T})\right\} \\
&  \qquad\times\exp\left\{  \int_{0}^{T}\int_{0}^{\infty}(e^{ze^{-\alpha
_{Y}(T-s)}}-1)e^{\theta_{2}z}\ell(dz)ds\right\}  ,
\end{align*}
where we have used that the compensator of $L$ under $Q$ is $v_{Q}%
^{L}(ds,dz)=e^{\theta_{2}z}\ell(dz)ds$ (note that $e^{\theta_{2}z}\ell(dz)$ is
a L\'{e}vy measure) and Proposition 3.6 in Cont and Tankov \cite{CoTa04}. Of
course, the previous result holds as long as the integral in the exponential
is finite. A sufficient condition for the integrability of $\exp(Y(T))$
follows from%
\begin{align*}
&  \int_{0}^{T}\int_{0}^{\infty}(e^{ze^{-\alpha_{Y}(T-s)}}-1)e^{\theta_{2}%
z}\ell(dz)ds\\
&  \qquad=\int_{0}^{T}\int_{0}^{\infty}ze^{-\alpha_{Y}(T-s)}\left(  \int
_{0}^{1}e^{\lambda ze^{-\alpha_{Y}(T-s)}}d\lambda\right)  e^{\theta_{2}z}%
\ell(dz)ds\\
&  \qquad\leq\int_{0}^{T}\int_{0}^{\infty}ze^{-\alpha_{Y}(T-s)}e^{z\left(
\theta_{2}+e^{-\alpha_{Y}(T-s)}\right)  }\ell(dz)ds\leq T\kappa_{L}^{\prime
}(\theta_{2}+1).
\end{align*}
As $\theta_{2}\in D_{L},$ to have $\kappa_{L}^{\prime}(\theta_{2}+1)<\infty$
yields the condition
\[
\theta_{2}\in D_{L}^{g}\triangleq D_{L}\cap(-\infty,\Theta_{L}-1)=(-\infty
,(\Theta_{L}-1)\wedge(\Theta_{L}/2)).
\]
Note that for $\theta_{2}\in D_{L}^{g}$ to be strictly positive and,
therefore, include the case $Q=P$, we need to have $\Theta_{L}>1.$ This, of
course, is a restriction on the structure of the jumps. For instance, if $L$
is a compound Poisson process with exponentially distributed jump sizes,
Example \ref{Example_Subordinators} (Case 2), we have that the jump sizes must
have a mean less than one. Note also that, if $\Theta_{L}>2$ then $D_{L}%
^{g}=D_{L}.$

Using expression $\left(  \ref{Equ_Y_Explicit_Q}\right)  $ and repeating the
previous arguments we obtain
\begin{align*}
\mathbb{E}_{Q}[\exp(Y(T))|\mathcal{F}_{t}]  &  =\exp\left\{  Y(t)e^{-\alpha
_{Y}(T-t)}+\frac{\mu_{Y}}{\alpha_{Y}}(1-e^{-\alpha_{Y}(T-t)})\right\} \\
&  \qquad\times\exp\left\{  \int_{0}^{T-t}\int_{0}^{\infty}(e^{ze^{-\alpha
_{Y}s}}-1)e^{\theta_{2}z}\ell(dz)ds\right\}  .
\end{align*}
Hence we have proved the following result:

\begin{proposition}
\label{Prop_Esscher_Geom}In the Esscher case for the spike component $Y$,
i.e., $\theta_{2}\in D_{L}^{g},$ $\beta_{2}=0$, and assuming $\Theta_{L}>1,$
the forward price $F_{Q}(t,T)$ in the geometric spot model $\left(
\ref{Equ_Geom_Model}\right)  $ is given by
\begin{align*}
F_{Q}(t,T)  &  =\Lambda_{g}(T)\exp\left(  X(t)e^{-\alpha_{X}(1-\beta
_{1})(T-t)}+Y(t)e^{-\alpha_{Y}(T-t)}\right) \\
&  \qquad\times\exp\left(  \frac{\mu_{X}+\theta_{1}}{\alpha_{X}(1-\beta_{1}%
)}(1-e^{-\alpha_{X}(1-\beta_{1})(T-t)})+\frac{\mu_{Y}}{\alpha_{Y}%
}(1-e^{-\alpha_{Y}(T-t)})\right) \\
&  \qquad\times\exp\left(  \frac{\sigma_{X}^{2}}{4\alpha_{X}(1-\beta_{1}%
)}(1-e^{-2\alpha_{X}(1-\beta_{1})(T-t)})+\int_{0}^{T-t}\int_{0}^{\infty
}(e^{ze^{-\alpha_{Y}s}}-1)e^{\theta_{2}z}\ell(dz)ds\right)  .
\end{align*}
and the risk premium for the forward price $R_{g,Q}^{F}(t,T)$ is given by%
\begin{align*}
R_{g,Q}^{F}(t,T)  &  =\mathbb{E}_{P}[S(T)|\mathcal{F}_{t}]\{\exp(R_{a,Q}%
^{F}(t,T))\\
&  \qquad\times\exp\left(  \frac{\sigma_{X}^{2}}{4\alpha_{X}(1-\beta_{1}%
)}(1-e^{-2\alpha_{X}(1-\beta_{1})(T-t)}-\frac{\sigma_{X}^{2}}{4\alpha_{X}%
}(1-e^{-2\alpha_{X}(T-t)}\right) \\
&  \qquad\times\exp\left(  -\frac{\kappa_{L}^{\prime}(\theta_{2})-\kappa
_{L}^{\prime}(0)}{\alpha_{Y}}(1-e^{-\alpha_{Y}(T-t)})\right) \\
&  \qquad\times\exp\left(  \int_{0}^{T-t}\int_{0}^{\infty}(e^{ze^{-\alpha
_{Y}s}}-1)(e^{\theta_{2}z}-1)\ell(dz)ds\right)  -1\},
\end{align*}
where $R_{a,Q}^{F}(t,T)$ is also understood under the assumption $\beta
_{2}=0.$
\end{proposition}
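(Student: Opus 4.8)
The plan is to obtain the forward price by assembling the two conditional-expectation factors already computed immediately above the statement, and then to rearrange the result into the risk-premium form. By the independence of $X$ and $Y$ and the tower property, $F_{Q}(t,T)=\Lambda_{g}(T)\,\mathbb{E}_{Q}[\exp(X(T))\mid\mathcal{F}_{t}]\,\mathbb{E}_{Q}[\exp(Y(T))\mid\mathcal{F}_{t}]$. The $X$-factor is the Gaussian computation from $\left(\ref{Equ_X_Explicit_Q}\right)$, producing the exponential with the $X(t)e^{-\alpha_{X}(1-\beta_{1})(T-t)}$ term, the $(\mu_{X}+\theta_{1})$-drift term and the $\sigma_{X}^{2}/(4\alpha_{X}(1-\beta_{1}))$ variance term. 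The $Y$-factor, in the Esscher case $\beta_{2}=0$, comes from $\left(\ref{Equ_Y_Explicit_Q}\right)$ together with Proposition 3.6 in Cont--Tankov \cite{CoTa04}, giving the exponential with $Y(t)e^{-\alpha_{Y}(T-t)}$, the $\mu_{Y}$-drift and the integral $\int_{0}^{T-t}\int_{0}^{\infty}(e^{ze^{-\alpha_{Y}s}}-1)e^{\theta_{2}z}\ell(dz)ds$, whose finiteness is guaranteed precisely by $\theta_{2}\in D_{L}^{g}$ (with $\Theta_{L}>1$) as verified just before the statement. Multiplying the two factors yields the formula for $F_{Q}(t,T)$.

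For the risk premium I would start from the definition $R_{g,Q}^{F}(t,T)=F_{Q}(t,T)-\mathbb{E}_{P}[S(T)\mid\mathcal{F}_{t}]$ and write it as $\mathbb{E}_{P}[S(T)\mid\mathcal{F}_{t}]\bigl(F_{Q}(t,T)/\mathbb{E}_{P}[S(T)\mid\mathcal{F}_{t}]-1\bigr)$. The denominator is the specialisation $\theta_{1}=\theta_{2}=\beta_{1}=\beta_{2}=0$ (so $Q=P$) of the forward formula just obtained. Forming the ratio, the prefactor $\Lambda_{g}(T)$ cancels; because $\beta_{2}=0$, the terms $Y(t)e^{-\alpha_{Y}(T-t)}$ and $\mu_{Y}/\alpha_{Y}$ also cancel, and the $Y$-part contributes only the double-difference jump integral $\int_{0}^{T-t}\int_{0}^{\infty}(e^{ze^{-\alpha_{Y}s}}-1)(e^{\theta_{2}z}-1)\ell(dz)ds$, while the $X$-part contributes the $\sigma_{X}^{2}$ variance difference together with the surviving $X(t)$- and $(\mu_{X},\theta_{1})$-terms.

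The only genuine bookkeeping step is to recognise these surviving $X(t)$- and $(\mu_{X},\theta_{1})$-terms as (most of) $R_{a,Q}^{F}(t,T)$. Using $X(t)\bigl(e^{-\alpha_{X}(1-\beta_{1})(T-t)}-e^{-\alpha_{X}(T-t)}\bigr)=X(t)e^{-\alpha_{X}(T-t)}\bigl(e^{\alpha_{X}\beta_{1}(T-t)}-1\bigr)$ matches the $X$-term of Proposition \ref{Prop_FRP_Arithmetic} (whose $Y$-term vanishes at $\beta_{2}=0$), and the $(\mu_{X},\theta_{1})$-terms coincide verbatim. The remaining summand of the arithmetic premium, namely $\tfrac{\kappa_{L}^{\prime}(\theta_{2})-\kappa_{L}^{\prime}(0)}{\alpha_{Y}}(1-e^{-\alpha_{Y}(T-t)})$, is absent from the geometric ratio, so subtracting it off converts the $X$-contribution into $R_{a,Q}^{F}(t,T)$ at the cost of the explicit correction factor $\exp\bigl(-\tfrac{\kappa_{L}^{\prime}(\theta_{2})-\kappa_{L}^{\prime}(0)}{\alpha_{Y}}(1-e^{-\alpha_{Y}(T-t)})\bigr)$. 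Collecting the four exponential factors and exponentiating then gives exactly the claimed expression for $R_{g,Q}^{F}(t,T)$. I do not expect any analytic obstacle here beyond the convergence of the integrals, which is already secured by $\Theta_{L}>1$ and $\theta_{2}\in D_{L}^{g}$; the content of the proof is the algebraic matching just described.
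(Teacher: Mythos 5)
Your proposal is correct and follows essentially the same route as the paper: the forward price is obtained exactly as in the text preceding the statement, namely by multiplying the Gaussian $X$-factor computed from $\left(\ref{Equ_X_Explicit_Q}\right)$ with the $Y$-factor computed from $\left(\ref{Equ_Y_Explicit_Q}\right)$ and Proposition 3.6 of Cont--Tankov, with finiteness secured by $\theta_{2}\in D_{L}^{g}$ and $\Theta_{L}>1$. The risk-premium identity, which the paper leaves as an implicit algebraic consequence, is exactly the ratio-and-matching computation you spell out (cancellation of the $Y(t)$ and $\mu_{Y}$ terms at $\beta_{2}=0$, identification of the surviving $X$-terms with $R_{a,Q}^{F}(t,T)$ modulo the $\tfrac{\kappa_{L}^{\prime}(\theta_{2})-\kappa_{L}^{\prime}(0)}{\alpha_{Y}}(1-e^{-\alpha_{Y}(T-t)})$ correction), and your bookkeeping checks out.
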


\begin{corollary}
Setting $\theta_{2}=0$ in Proposition \ref{Prop_Esscher_Geom} we get%
\begin{align*}
\mathbb{E}_{P}[S(T)|\mathcal{F}_{t}]  &  =\Lambda_{g}(T)\exp\left(
X(t)e^{-\alpha_{X}(T-t)}+Y(t)e^{-\alpha_{Y}(T-t)}\right) \\
&  \qquad\times\exp\left(  \frac{\mu_{X}}{\alpha_{X}}(1-e^{-\alpha_{X}%
(T-t)})+\frac{\mu_{Y}}{\alpha_{Y}}(1-e^{-\alpha_{Y}(T-t)})\right) \\
&  \qquad\times\exp\left(  \frac{\sigma_{X}^{2}}{4\alpha_{X}}(1-e^{-2\alpha
_{X}(1-\beta_{1})(T-t)})+\int_{0}^{T-t}\int_{0}^{\infty}(e^{ze^{-\alpha_{Y}s}%
}-1)\ell(dz)ds\right)  .
\end{align*}

\end{corollary}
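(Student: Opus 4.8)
The plan is to read this off directly from Proposition~\ref{Prop_Esscher_Geom} by specializing to the measure change that is the identity. The text already records that $Q=P$ exactly when $\theta_1=\theta_2=\beta_1=\beta_2=0$; since $F_Q(t,T)=\mathbb{E}_Q[S(T)\mid\mathcal{F}_t]$ by definition, setting all four parameters to zero gives $F_P(t,T)=\mathbb{E}_P[S(T)\mid\mathcal{F}_t]$. Thus the corollary is obtained by substituting $\theta_1=\theta_2=\beta_1=\beta_2=0$ into the forward-price formula of Proposition~\ref{Prop_Esscher_Geom}, and no new analytic work is needed beyond checking that $\theta_2=0$ is an admissible Esscher parameter.

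First I would carry out the substitution term by term. Setting $\theta_2=0$ makes $e^{\theta_2 z}=1$, so the jump integral $\int_0^{T-t}\int_0^\infty(e^{ze^{-\alpha_Y s}}-1)e^{\theta_2 z}\,\ell(dz)\,ds$ collapses to $\int_0^{T-t}\int_0^\infty(e^{ze^{-\alpha_Y s}}-1)\,\ell(dz)\,ds$, while $\kappa_L^{\prime}(\theta_2)$ becomes $\kappa_L^{\prime}(0)$; since $\beta_2=0$ already in the Esscher case, the $Y$-drift reduces to $\tfrac{\mu_Y}{\alpha_Y}(1-e^{-\alpha_Y(T-t)})$. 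Setting $\theta_1=\beta_1=0$ turns the Gaussian factor into $X(t)e^{-\alpha_X(T-t)}$, the $X$-drift into $\tfrac{\mu_X}{\alpha_X}(1-e^{-\alpha_X(T-t)})$, and the variance contribution into $\tfrac{\sigma_X^2}{4\alpha_X}(1-e^{-2\alpha_X(T-t)})$; in particular the factor $(1-\beta_1)$ in the exponent of the variance term is to be read at $\beta_1=0$, so it disappears. Collecting the pieces yields the displayed expression.

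As an independent check I would verify the formula by direct computation under $P$, which also isolates the only genuine subtlety. By independence of $X$ and $Y$ one factors $\mathbb{E}_P[S(T)\mid\mathcal{F}_t]=\Lambda_g(T)\,\mathbb{E}_P[e^{X(T)}\mid\mathcal{F}_t]\,\mathbb{E}_P[e^{Y(T)}\mid\mathcal{F}_t]$. The first conditional Laplace transform follows from Remark~\ref{Remark_X_Gaussian} (conditional on $\mathcal{F}_t$, $X(T)$ is Gaussian), producing the two $X$-terms together with the variance contribution $\tfrac12\cdot\tfrac{\sigma_X^2}{2\alpha_X}(1-e^{-2\alpha_X(T-t)})$. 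For the second factor I would insert the explicit form~\eqref{Equ_Y_Explicit_P}, write $\tilde N^L=N^L-ds\,\ell(dz)$, and apply the exponential formula for Poisson random measures (Proposition~3.6 in Cont and Tankov~\cite{CoTa04}); the deterministic compensator term $-\tfrac{\kappa_L^{\prime}(0)}{\alpha_Y}(1-e^{-\alpha_Y(T-t)})$ then cancels precisely the $\kappa_L^{\prime}(0)$ piece of the $Y$-drift in~\eqref{Equ_Y_Explicit_P}, leaving $\tfrac{\mu_Y}{\alpha_Y}(1-e^{-\alpha_Y(T-t)})$ and the jump integral above.

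The one point that needs care — and the reason this is stated as a corollary of the Esscher case rather than as an unconditional fact — is integrability of $e^{Y(T)}$. This is not automatic: I would reuse the estimate from the proof of Proposition~\ref{Prop_Esscher_Geom}, which at $\theta_2=0$ bounds the jump integral by $T\kappa_L^{\prime}(1)$, finite precisely when $1\in(-\infty,\Theta_L)$, i.e. under the standing hypothesis $\Theta_L>1$ (equivalently $0\in D_L^g$). Granting this, every expression above is finite and the identity holds.
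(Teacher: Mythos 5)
Your proposal is correct and matches the paper's (implicit) proof: the corollary is nothing more than the substitution $\theta_{1}=\theta_{2}=\beta_{1}=\beta_{2}=0$ in Proposition \ref{Prop_Esscher_Geom}, under which $Q=P$, and your term-by-term substitution, the integrability remark (that $0\in D_{L}^{g}$ is equivalent to $\Theta_{L}>1$, the standing hypothesis of the Proposition), and the independent verification under $P$ via the Gaussian Laplace transform and the exponential formula for Poisson random measures (with the $\kappa_{L}^{\prime}(0)$ cancellation) are all sound. You also correctly flag that the factor $(1-\beta_{1})$ surviving in the variance exponent of the displayed corollary must be read at $\beta_{1}=0$ (it is a leftover from the Proposition's formula), since $\mathbb{E}_{P}[S(T)|\mathcal{F}_{t}]$ cannot depend on the measure-change parameters.
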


The previous result is as far as one can go using "basic" martingale
techniques. In the general case, in order to find conditions under which
$\mathbb{E}_{Q}[\exp(Y(T))]<\infty$, and also to compute $\mathbb{E}_{Q}%
[\exp(Y(T))|\mathcal{F}_{t}],$ it is convenient to look at $Y$ as an affine
$Q$-semimartingale process with state space $\mathbb{R}_{+}$. In the sequel we
follow the notation in Kallsen and Muhle-Karbe \cite{KaMu-ka10}, but taking
into account that in our case the L\'{e}vy characteristics do not depend on
the time parameter. The L\'{e}vy-Kintchine triplets of $Y$ are
\begin{align*}
(\beta_{0}^{1},\gamma_{0}^{11},\varphi_{0}(dz))  &  =(\mu_{Y}+\kappa
_{L}^{\prime}(\theta_{2}),0,\boldsymbol{1}_{(0,\infty)}e^{\theta_{2}z}%
\ell(dz))\\
(\beta_{1}^{1},\gamma_{1}^{11},\varphi_{1}(dz))  &  =(-\alpha_{Y}(1-\beta
_{2}),0,\frac{\alpha_{Y}\beta_{2}}{\kappa_{L}^{\prime\prime}(\theta_{2}%
)}\boldsymbol{1}_{(0,\infty)}ze^{\theta_{2}z}\ell(dz)),
\end{align*}
which, according to Definition 2.4 in Kallsen and Muhle-Karbe \cite{KaMu-ka10}%
, are (strongly) admissible. Note that, as the triplets do not depend on $t,$
we can choose any truncation function. Moreover, as $Y$ is a special
$Q$-semimartingale, we choose the (pseudo) truncation function $h(x)=x.$
Associated to the previous L\'{e}vy-Kintchine triplets we have the following
L\'{e}vy exponents%
\begin{align*}
\Lambda_{0}^{\theta_{2},\beta_{2}}(u)  &  =\left(  \mu_{Y}+\kappa_{L}^{\prime
}(\theta_{2})\right)  u+\int_{0}^{\infty}(e^{uz}-1-uz)e^{\theta_{2}z}%
\ell(dz)\\
&  =\mu_{Y}u+\int_{0}^{\infty}(e^{uz}-1)e^{\theta_{2}z}\ell(dz)\\
&  =\mu_{Y}u+\kappa_{L}(u+\theta_{2})-\kappa_{L}(\theta_{2}),\\
\Lambda_{1}^{\theta_{2},\beta_{2}}(u)  &  =-\alpha_{Y}(1-\beta_{2}%
)u+\frac{\alpha_{Y}\beta_{2}}{\kappa_{L}^{\prime\prime}(\theta_{2})}\int
_{0}^{\infty}(e^{uz}-1-uz)ze^{\theta_{2}z}\ell(dz)\\
&  =-\alpha_{Y}u+\frac{\alpha_{Y}\beta_{2}}{\kappa_{L}^{\prime\prime}%
(\theta_{2})}\int_{0}^{\infty}(e^{uz}-1)ze^{\theta_{2}z}\ell(dz)\\
&  =-\alpha_{Y}u+\frac{\alpha_{Y}\beta_{2}}{\kappa_{L}^{\prime\prime}%
(\theta_{2})}\left(  \kappa_{L}^{\prime}(u+\theta_{2})-\kappa_{L}^{\prime
}(\theta_{2})\right)  .
\end{align*}
We have the following result.

\begin{theorem}
\label{Theo_Exp_Affine}Let $\bar{\beta}\in\lbrack0,1]^{2},\bar{\theta}\in
\bar{D}_{L}^{g}\triangleq\mathbb{R}\times D_{L}^{g}$. Assume $\Theta_{L}>1,$
that $\Psi_{\theta_{2},\beta_{2}}^{0},\Psi_{\theta_{2},\beta_{2}}^{1}\in
C^{1}([0,T],\mathbb{R})$ satisfy the ODE%
\begin{equation}%
\begin{array}
[c]{ccc}%
\frac{d}{dt}\Psi_{\theta_{2},\beta_{2}}^{1}(t)=\Lambda_{1}^{\theta_{2}%
,\beta_{2}}(\Psi_{\theta_{2},\beta_{2}}^{1}(t)), &  & \Psi_{\theta_{2}%
,\beta_{2}}^{1}(0)=1,\\
\frac{d}{dt}\Psi_{\theta_{2},\beta_{2}}^{0}(t)=\Lambda_{0}^{\theta_{2}%
,\beta_{2}}(\Psi_{\theta_{2},\beta_{2}}^{1}(t)), &  & \Psi_{\theta_{2}%
,\beta_{2}}^{0}(0)=0,
\end{array}
\label{Equ_ODE}%
\end{equation}
and that the integrability condition%
\begin{equation}
\kappa_{L}^{\prime\prime}(\theta_{2}+\sup_{t\in\lbrack0,T]}\Psi_{\theta
_{2},\beta_{2}}^{1}(t))=\int_{0}^{\infty}z^{2}\exp\{(\theta_{2}+\sup
_{t\in\lbrack0,T]}\Psi_{\theta_{2},\beta_{2}}^{1}(t))z\}\ell(dz)<\infty,
\label{Equ_Integrability_ODE}%
\end{equation}
holds. Then, we have that the forward price $F_{Q}(t,T)$ in the geometric spot
model $\left(  \ref{Equ_Geom_Model}\right)  $ is given by
\begin{align*}
F_{Q}(t,T)  &  =\Lambda_{g}(T)\exp\left(  X(t)e^{-\alpha_{X}(1-\beta
_{1})(T-t)}+Y(t)\Psi_{\theta_{2},\beta_{2}}^{1}(T-t)+\Psi_{\theta_{2}%
,\beta_{2}}^{0}(T-t)\right) \\
&  \qquad\times\exp\left(  \frac{\mu_{X}+\theta_{1}}{\alpha_{X}(1-\beta_{1}%
)}(1-e^{-\alpha_{X}(1-\beta_{1})(T-t)})+\frac{\sigma_{X}^{2}}{4\alpha
_{X}(1-\beta_{1})}(1-e^{-2\alpha_{X}(1-\beta_{1})(T-t)})\right)  ,
\end{align*}
and the risk premium for the forward price $R_{g,Q}^{F}(t,T)$ is given by%
\begin{align*}
R_{g,Q}^{F}(t,T)  &  =\mathbb{E}_{P}[S(T)|\mathcal{F}_{t}]\{\exp
(X(t)e^{-\alpha_{X}(T-t)}(e^{\alpha_{X}\beta_{1}(T-t)}-1))\\
&  \qquad\times\exp(Y(t)(\Psi_{\theta_{2},\beta_{2}}^{1}(T-t)-e^{-\alpha
_{Y}(T-t)}))\\
&  \qquad\times\exp\left(  \frac{\mu_{X}+\theta_{1}}{\alpha_{X}(1-\beta_{1}%
)}(1-e^{-\alpha_{X}(1-\beta_{1})(T-t)})-\frac{\mu_{X}}{\alpha_{X}%
}(1-e^{-\alpha_{X}(T-t)})\right) \\
&  \qquad\times\exp\left(  \frac{\sigma_{X}^{2}}{4\alpha_{X}(1-\beta_{1}%
)}(1-e^{-2\alpha_{X}(1-\beta_{1})(T-t)})-\frac{\sigma_{X}^{2}}{4\alpha_{X}%
}(1-e^{-2\alpha_{X}(T-t)})\right) \\
&  \qquad\times\exp\left(  \Psi_{\theta_{2},\beta_{2}}^{0}(T-t)-\frac{\mu_{Y}%
}{\alpha_{Y}}(1-e^{-\alpha_{Y}(T-t)})-\int_{0}^{T-t}\int_{0}^{\infty
}(e^{ze^{-\alpha_{Y}s}}-1)\ell(dz)ds\right)  -1\}.
\end{align*}

\end{theorem}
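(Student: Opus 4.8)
The plan is to exploit the affine structure of $Y$ under $Q$ and transfer the exponential-affine transform result of Kallsen and Muhle-Karbe \cite{KaMu-ka10}. The computation preceding the statement has already reduced the evaluation of $F_Q(t,T)=\mathbb{E}_Q[S(T)|\mathcal{F}_t]$ to the product of an explicit Gaussian factor coming from $X$ and the conditional Laplace transform of the spike component. Hence the entire theorem hinges on establishing the representation
\begin{equation*}
\mathbb{E}_Q[\exp(Y(T))|\mathcal{F}_t]=\exp\left(Y(t)\Psi_{\theta_2,\beta_2}^1(T-t)+\Psi_{\theta_2,\beta_2}^0(T-t)\right),
\end{equation*}
everything else being algebraic bookkeeping. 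I would record first that, under $Q$, $Y$ is a special semimartingale on the state space $\mathbb{R}_+$ whose differential characteristics are affine in $Y$, carrying the two L\'{e}vy--Khintchine triplets $(\beta_0^1,\gamma_0^{11},\varphi_0)$ and $(\beta_1^1,\gamma_1^{11},\varphi_1)$ exhibited just before the statement. These triplets are (strongly) admissible in the sense of Definition 2.4 of \cite{KaMu-ka10}, and the associated L\'{e}vy exponents $\Lambda_0^{\theta_2,\beta_2}$, $\Lambda_1^{\theta_2,\beta_2}$ have been written in closed form through $\kappa_L$. Since the characteristics are time-homogeneous, $Y$ is a time-homogeneous affine process, and the natural candidate for the conditional transform is the exponential-affine process $\exp(\Psi^0(T-t)+Y(t)\Psi^1(T-t))$, with $(\Psi^0,\Psi^1)$ solving the generalized Riccati system $\left(\ref{Equ_ODE}\right)$ started at $(0,1)$, the terminal value $1$ reflecting that we compute the first exponential moment.

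Next I would apply It\^{o}'s formula to this candidate. Because the characteristics are affine in $Y$, the generator applied to $\exp(uy)$ produces exactly $\left(\Lambda_0^{\theta_2,\beta_2}(u)+y\,\Lambda_1^{\theta_2,\beta_2}(u)\right)\exp(uy)$ with $u=\Psi^1(T-t)$; combined with the time-derivative term this forces the drift to cancel precisely when $\Psi^0,\Psi^1$ satisfy $\left(\ref{Equ_ODE}\right)$, so the candidate is a local $Q$-martingale. The crux is then to upgrade this local martingale to a true martingale, and this is where the integrability hypothesis $\left(\ref{Equ_Integrability_ODE}\right)$ enters. The function $\Psi^1$ stays bounded on $[0,T]$, and the requirement $\kappa_L''(\theta_2+\sup_{t\in[0,T]}\Psi^1(t))<\infty$ guarantees both that $\Lambda_0^{\theta_2,\beta_2}(\Psi^1(\cdot))$ and $\Lambda_1^{\theta_2,\beta_2}(\Psi^1(\cdot))$ remain finite along the trajectory and that $\exp(Y(t))$ has a finite second moment under $Q$; the $z^2$ weight in $\left(\ref{Equ_Integrability_ODE}\right)$ is tuned to supply exactly this second-moment control. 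I would then reuse the Gronwall argument from the last part of the proof of Theorem \ref{Theo_StochExp_Htilda_martingale} to bound the relevant moments uniformly in the localization and conclude uniform integrability, identifying the local martingale with the conditional expectation. This true-martingale verification is the main obstacle: since $Y$ is \emph{not} a L\'{e}vy process under $Q$ (its compensator depends on $Y(t-)$), one cannot invoke independent increments and must lean entirely on the affine second-moment estimate enforced by $\left(\ref{Equ_Integrability_ODE}\right)$.

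Finally, with the affine representation established, I would multiply by the explicit Gaussian contribution of $X$ computed earlier to read off the stated formula for $F_Q(t,T)$. The risk premium then follows from $R_{g,Q}^F(t,T)=\mathbb{E}_Q[S(T)|\mathcal{F}_t]-\mathbb{E}_P[S(T)|\mathcal{F}_t]$ by factoring out $\mathbb{E}_P[S(T)|\mathcal{F}_t]$, the latter being the specialization $\theta_1=\theta_2=\beta_1=\beta_2=0$ recorded in the Corollary following Proposition \ref{Prop_Esscher_Geom}. Grouping the $X$-terms against $X(t)e^{-\alpha_X(T-t)}(e^{\alpha_X\beta_1(T-t)}-1)$, the $Y$-terms against $Y(t)(\Psi^1_{\theta_2,\beta_2}(T-t)-e^{-\alpha_Y(T-t)})$, and collecting the differences of the $\sigma_X$- and $\mu_Y$-dependent exponents together with $\Psi^0_{\theta_2,\beta_2}(T-t)$ minus the $P$-integral $\int_0^{T-t}\int_0^\infty(e^{ze^{-\alpha_Y s}}-1)\ell(dz)ds$, yields the displayed expression for $R_{g,Q}^F(t,T)$.
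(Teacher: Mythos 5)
Your proposal is correct in outline but takes a genuinely different route from the paper. The paper's proof is essentially a citation: after the change of variable $t\rightarrow T-t$, it verifies that the hypotheses of Theorem 5.1 in Kallsen and Muhle-Karbe \cite{KaMu-ka10} hold (the integrability condition \eqref{Equ_Integrability_ODE} yields their conditions 1 and 5, and condition 4 is trivial since $Y(0)$ is deterministic), reads off the exponential-affine formula \eqref{Equ_Exp_Affine_Formula}, and then does the same algebraic bookkeeping you describe. You instead re-derive the affine transform formula from scratch: It\^{o}'s formula on the candidate $\exp\left(\Psi^{0}_{\theta_2,\beta_2}(T-t)+Y(t)\Psi^{1}_{\theta_2,\beta_2}(T-t)\right)$ to get a local $Q$-martingale (your generator computation matches $\Lambda_0^{\theta_2,\beta_2}$ and $\Lambda_1^{\theta_2,\beta_2}$ exactly), followed by a localization/measure-tilting/Gronwall argument in the style of the paper's own Theorem \ref{Theo_StochExp_Htilda_martingale} to upgrade to a true martingale. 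This buys self-containedness — the delicate martingale verification is handled by the same technique the paper already uses for its density process — whereas the paper's route is shorter and outsources precisely that step. One caveat on your justification of the key step: the claim that \eqref{Equ_Integrability_ODE} gives a finite \emph{second moment of} $\exp(Y(t))$ under $Q$ is not right (that would require exponential moments of order roughly $2(\theta_2+\sup_t\Psi^1_{\theta_2,\beta_2}(t))$, which the hypothesis does not provide). The actual role of the $z^{2}$ weight is different: since $\log$ of the candidate is affine in $Y$, the $M\log M$ uniform-integrability test reduces to bounding $\mathbb{E}_{Q^{n}}[Y^{\tau_n}(t)]$ under the tilted measures $dQ^{n}=M^{\tau_n}(T)\,dQ$, and under $Q^{n}$ the compensator of $N^{L}$ is tilted by $e^{\Psi^{1}z}$ against the kernel $H_{\theta_2,\beta_2}$, which carries a factor $zY(t-)$; the first-moment integrand is therefore of order $z^{2}e^{(\theta_2+\Psi^{1})z}$, and \eqref{Equ_Integrability_ODE} is exactly what makes this finite so that Gronwall closes. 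With that correction your plan goes through, so the imprecision is in the explanation rather than in the architecture of the argument.
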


\begin{proof}
We apply Theorem 5.1 in Kallsen and Muhle-Karbe \cite{KaMu-ka10}. Note that
making the change of variable $t\rightarrow T-t$ the ODE $\left(
\ref{Equ_ODE}\right)  $ is reduced to the one appearing in items 2. and 3. of
Theorem 5.1. The integrability assumption $\left(  \ref{Equ_Integrability_ODE}%
\right)  $ implies conditions 1. and 5., in Theorem 5.1, and condition 4. is
trivially satisfied because $Y(0)$ is deterministic. Hence, the conclusion of
that theorem, with $p=1,$ holds and we get%
\begin{equation}
\mathbb{E}_{Q}[\exp(Y(T))|\mathcal{F}_{t}]=\exp\left(  Y(t)\Psi_{\theta
_{2},\beta_{2}}^{1}(T-t)+\Psi_{\theta_{2},\beta_{2}}^{0}(T-t)\right)  ,\quad
t\in\lbrack0,T]. \label{Equ_Exp_Affine_Formula}%
\end{equation}
The result now follows easily.
\end{proof}

\begin{remark}
\label{RemarkPsi0}Equation $\left(  \ref{Equ_ODE}\right)  $ is called a
generalised Riccati equation in the literature. Note that the equation for
$\Psi_{\theta_{2},\beta_{2}}^{0}(t)$ is trivially solved, once we know
$\Psi_{\theta_{2},\beta_{2}}^{1}(t),$ by
\[
\Psi_{\theta_{2},\beta_{2}}^{0}(t)=\int_{0}^{t}\Lambda_{0}^{\theta_{2}%
,\beta_{2}}(\Psi_{\theta_{2},\beta_{2}}^{1}(s))ds.
\]
Hence, the problem is really reduced to study the equation for $\Psi
_{\theta_{2},\beta_{2}}^{1}(t).$
\end{remark}

\begin{remark}
The Esscher case can be obtained from Theorem \ref{Theo_Exp_Affine}, as
$\Psi_{\theta_{2},0}^{1}(t)=e^{-\alpha_{Y}t}$ and
\[
\Psi_{\theta_{2},0}^{0}(t)=\frac{\mu_{Y}}{\alpha_{Y}}(1-e^{-\alpha_{Y}t}%
)+\int_{0}^{t}\int_{0}^{\infty}(e^{ze^{-\alpha_{Y}s}}-1)e^{\theta_{2}z}%
\ell(dz)ds,
\]
solve
\[%
\begin{array}
[c]{ccc}%
\frac{d}{dt}\Psi_{\theta_{2},0}^{1}(t)=-\alpha_{Y}\Psi_{\theta_{2},0}%
^{1}(t), &  & \Psi_{\theta_{2},0}^{1}(0)=1,\\
\frac{d}{dt}\Psi_{\theta_{2},0}^{0}(t)=\mu_{Y}\Psi_{\theta_{2},0}%
^{1}(t)+\kappa_{L}(\Psi_{\theta_{2},0}^{1}(t)+\theta_{2})-\kappa_{L}%
(\theta_{2}), &  & \Psi_{\theta_{2},0}^{0}(0)=0.
\end{array}
\]
As $\sup_{t\in\lbrack0,T]}\Psi_{\theta_{2},0}^{1}(t)=1,$ the integrability
condition $\left(  \ref{Equ_Integrability_ODE}\right)  $ is satisfied because
$\theta_{2}\in D_{L}^{g}.$
\end{remark}

In general, one cannot find explicit solutions for the non-linear differential
equation $\left(  \ref{Equ_ODE}\right)  $ in Theorem \ref{Theo_Exp_Affine} and
has to rely on numerical techniques. However, the main problem that we find is
that the maximal domain of definition of $\Psi_{\theta_{2},\beta_{2}}^{0}$ and
$\Psi_{\theta_{2},\beta_{2}}^{1}$ may be a proper subset of $[0,\infty),$ in
particular when $\beta_{2}$ is close to $1$. As we are particularly interested
in the solution of $\left(  \ref{Equ_ODE}\right)  $ for large $T$, we shall
give a general sufficient criterion for global (defined for any $t>0$)
existence and uniqueness of the solution of $\left(  \ref{Equ_ODE}\right)  $.
The next theorem classifies the behaviour of the solutions of $\left(
\ref{Equ_ODE}\right)  $.

\begin{theorem}
\label{Theorem_Main_Geom}Assume that $\Theta_{L}>1.$ For any $\delta>0,$ the
system of ODEs ($\ref{Equ_ODE}$) with $\beta_{2}\in(0,1)$ and
\[
\theta_{2}\in D_{L}^{g}(\delta)\triangleq(-\infty,(\Theta_{L}-1-\delta
)\wedge(\Theta_{L}/2))
\]
admits a unique local solution $\Psi_{\theta_{2},\beta_{2}}^{0}(t)$ and
$\Psi_{\theta_{2},\beta_{2}}^{1}(t)$. In addition, let $u^{\ast}(\theta
_{2},\beta_{2})$ be the unique strictly positive solution of the following
equation%
\begin{equation}
u=\frac{\beta_{2}}{\kappa_{L}^{\prime\prime}(\theta_{2})}\left(  \kappa
_{L}^{\prime}(u+\theta_{2})-\kappa_{L}^{\prime}(\theta_{2})\right)  .
\label{Equ_ZeroVecField_Lambda1}%
\end{equation}
The behaviour of $\Psi_{\theta_{2},\beta_{2}}^{0}(t)$ and $\Psi_{\theta
_{2},\beta_{2}}^{1}(t)$ is characterised as follows:

\begin{enumerate}
\item If $u^{\ast}(\theta_{2},\beta_{2})>1,$ then $\Psi_{\theta_{2},\beta_{2}%
}^{0}(t)$ and $\Psi_{\theta_{2},\beta_{2}}^{1}(t)$ are globally defined,
satisfy
\[
0<\Psi_{\theta_{2},\beta_{2}}^{1}(t)\leq1,\qquad0\leq\Psi_{\theta_{2}%
,\beta_{2}}^{0}(t)\leq\int_{0}^{\infty}\Lambda_{0}^{\theta_{2},\beta_{2}}%
(\Psi_{\theta_{2},\beta_{2}}^{1}(s))ds<\infty,
\]
and%
\begin{align}
\lim_{t\rightarrow\infty}\frac{1}{t}\log(\Psi_{\theta_{2},\beta_{2}}^{1}(t))
&  =-\alpha_{Y}(1-\beta_{2}),\label{Equ_Exp_Psi1}\\
\lim_{t\rightarrow\infty}\Psi_{\theta_{2},\beta_{2}}^{0}(t)  &  =\int
_{0}^{\infty}\Lambda_{0}^{\theta_{2},\beta_{2}}(\Psi_{\theta_{2},\beta_{2}%
}^{1}(s))ds<\infty. \label{Equ_Infinity_Psi0}%
\end{align}

\item If $u^{\ast}(\theta_{2},\beta_{2})=1,$ then $\Psi_{\theta_{2},\beta_{2}%
}^{1}(t)\equiv1$ and $\Psi_{\theta_{2},\beta_{2}}^{0}(t)=\{\mu_{Y}+\kappa
_{L}(1+\theta_{2})-\kappa_{L}(\theta_{2})\}t.$

\item If $u^{\ast}(\theta_{2},\beta_{2})<1$, then the maximal domain of
definition of $\Psi_{\theta_{2},\beta_{2}}^{0}(t)$ and $\Psi_{\theta_{2}%
,\beta_{2}}^{1}(t)$ is $[0,t_{\infty}),$ where%
\[
0<t_{\infty}=\int_{1}^{\Theta_{L}-\theta_{2}}(\Lambda_{1}^{\theta_{2}%
,\beta_{2}}(u))^{-1}du<\infty.
\]
In addition,%
\[
\lim_{t\uparrow t_{\infty}}\Psi_{\theta_{2},\beta_{2}}^{1}(t)=\Theta
_{L}-\theta_{2},\quad\lim_{t\uparrow t_{\infty}}\Psi_{\theta_{2},\beta_{2}%
}^{0}(t)=\int_{0}^{t_{\infty}}\Lambda_{0}^{\theta_{2},\beta_{2}}(\Psi
_{\theta_{2},\beta_{2}}^{1}(s))ds,
\]
where the previous integral is non negative and may be finite or infinite.
\end{enumerate}
\end{theorem}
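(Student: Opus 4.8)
The plan is to exploit, as in Remark~\ref{RemarkPsi0}, that the system \eqref{Equ_ODE} decouples: $\Psi_{\theta_2,\beta_2}^0$ is recovered from $\Psi_{\theta_2,\beta_2}^1$ by the quadrature $\Psi_{\theta_2,\beta_2}^0(t)=\int_0^t\Lambda_0^{\theta_2,\beta_2}(\Psi_{\theta_2,\beta_2}^1(s))\,ds$, so the theorem reduces to a phase--line analysis of the scalar autonomous equation $\dot\psi=\Lambda_1^{\theta_2,\beta_2}(\psi)$, $\psi(0)=1$, where I write $\psi\equiv\Psi_{\theta_2,\beta_2}^1$. First I would record the geometry of the field. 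Writing $\Lambda_1^{\theta_2,\beta_2}(u)=\alpha_Y\bigl(g(u)-u\bigr)$ with $g(u)=\tfrac{\beta_2}{\kappa_L''(\theta_2)}\bigl(\kappa_L'(u+\theta_2)-\kappa_L'(\theta_2)\bigr)$, Remarks~\ref{Remark_Cumulants} and \ref{Remark_D_Cumulants} give that $g$ is real-analytic on $(-\infty,\Theta_L-\theta_2)$, with $g(0)=0$, $g'(0)=\beta_2\in(0,1)$, and $g''=\tfrac{\beta_2}{\kappa_L''(\theta_2)}\kappa_L'''(\cdot+\theta_2)>0$; thus $g$ is strictly convex and $h\triangleq g-\mathrm{id}$ is strictly convex with $h(0)=0$, $h'(0)=\beta_2-1<0$. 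Consequently $\Lambda_1^{\theta_2,\beta_2}$ vanishes at $0$ and at exactly one strictly positive point, the root $u^*(\theta_2,\beta_2)$ of \eqref{Equ_ZeroVecField_Lambda1}: uniqueness is immediate from strict convexity, and existence (i.e. $h$ returning to $0$ before the boundary) follows from $\kappa_L'(\theta)\to+\infty$ as $\theta\uparrow\Theta_L$, which holds in the present setting. Analyticity yields local existence and uniqueness by Picard--Lindel\"{o}f, while $\tfrac{d}{du}\Lambda_1^{\theta_2,\beta_2}(0)=-\alpha_Y(1-\beta_2)<0$ marks $0$ as attracting and $u^*$ as repelling.

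The sign pattern then drives everything: $\Lambda_1^{\theta_2,\beta_2}>0$ on $(-\infty,0)\cup(u^*,\Theta_L-\theta_2)$ and $<0$ on $(0,u^*)$, so the three cases correspond exactly to the position of $\psi(0)=1$ relative to $u^*$. In Case (1), $u^*>1$ puts $\psi(0)$ in $(0,u^*)$ where the field is negative; hence $\psi$ is strictly decreasing, trapped in the forward-invariant interval $(0,1]$ (it cannot reach $0$ by uniqueness), so it is global and decreases monotonically to $0$, giving $0<\psi\le1$. The rate \eqref{Equ_Exp_Psi1} follows from $\tfrac{d}{dt}\log\psi=\Lambda_1^{\theta_2,\beta_2}(\psi)/\psi\to\tfrac{d}{du}\Lambda_1^{\theta_2,\beta_2}(0)=-\alpha_Y(1-\beta_2)$ as $\psi\to0$, together with a Ces\`{a}ro argument and $\log\psi(0)=0$. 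For $\Psi_{\theta_2,\beta_2}^0$ I would use that $\Lambda_0^{\theta_2,\beta_2}(u)=\mu_Y u+\kappa_L(u+\theta_2)-\kappa_L(\theta_2)\ge0$ for $u\ge0$ (since $\mu_Y\ge0$ and $\kappa_L$ is nondecreasing); thus $\Psi_{\theta_2,\beta_2}^0$ is nondecreasing, and because $\psi\to0$ exponentially while $\psi\le1$ keeps $\theta_2+\psi<\Theta_L$ the integrand is integrable, whence $\Psi_{\theta_2,\beta_2}^0(t)\uparrow\int_0^\infty\Lambda_0^{\theta_2,\beta_2}(\psi(s))\,ds<\infty$, yielding \eqref{Equ_Infinity_Psi0} and the stated bounds. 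Case (2) is immediate: $u^*=1$ is an equilibrium, so $\psi\equiv1$ and $\Psi_{\theta_2,\beta_2}^0(t)=\Lambda_0^{\theta_2,\beta_2}(1)\,t=\{\mu_Y+\kappa_L(1+\theta_2)-\kappa_L(\theta_2)\}t$.

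The substantive case, and the one I expect to be the main obstacle, is Case (3), $u^*<1$. Now $\psi(0)=1$ lies in $(u^*,\Theta_L-\theta_2)$ where $\Lambda_1^{\theta_2,\beta_2}>0$, so $\psi$ is strictly increasing and, barred from crossing the equilibrium $u^*$ beneath it, can only escape toward the right boundary $\Theta_L-\theta_2$ of the domain of the field. Separating variables gives $t=\int_1^{\psi(t)}\bigl(\Lambda_1^{\theta_2,\beta_2}(u)\bigr)^{-1}du$, whence the maximal time is $t_\infty=\int_1^{\Theta_L-\theta_2}\bigl(\Lambda_1^{\theta_2,\beta_2}(u)\bigr)^{-1}du$ and $\psi(t)\uparrow\Theta_L-\theta_2$ as $t\uparrow t_\infty$; the formula for $\Psi_{\theta_2,\beta_2}^0$ then follows by the same quadrature, with a limit that may be finite or infinite. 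The delicate point is $0<t_\infty<\infty$. Positivity is clear, and for finiteness I would note that, since $h$ is increasing on $(u^*,\infty)$, one has $\Lambda_1^{\theta_2,\beta_2}(u)\ge\Lambda_1^{\theta_2,\beta_2}(1)>0$ on all of $[1,\Theta_L-\theta_2)$, so the integrand is bounded; when $\Theta_L<\infty$ the interval has finite length and $t_\infty<\infty$ is immediate. When $\Theta_L=\infty$ the interval is infinite and one must use growth instead: because $\ell$ charges some interval bounded away from $0$, $\kappa_L'(u+\theta_2)$, and hence $\Lambda_1^{\theta_2,\beta_2}(u)\sim\tfrac{\alpha_Y\beta_2}{\kappa_L''(\theta_2)}\kappa_L'(u+\theta_2)$, grows at least exponentially in $u$, making $1/\Lambda_1^{\theta_2,\beta_2}$ integrable at $+\infty$. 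Securing this endpoint integrability, and upstream the very existence of $u^*$ that makes the trichotomy exhaustive, is where the convexity and growth of the cumulant $\kappa_L$ carry the proof.
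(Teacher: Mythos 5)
Your proposal is correct and takes essentially the same route as the paper: the same reduction via the quadrature $\Psi^0_{\theta_2,\beta_2}(t)=\int_0^t\Lambda_0^{\theta_2,\beta_2}(\Psi^1_{\theta_2,\beta_2}(s))\,ds$, the same convexity/phase-line analysis of $\Lambda_1^{\theta_2,\beta_2}$ yielding the trichotomy around $u^*$, and the same separation of variables for $t_\infty$ in Case 3; your Ces\`{a}ro argument for \eqref{Equ_Exp_Psi1} and your exponential lower bound $\kappa_L'(u+\theta_2)\geq a\,\ell([a,\infty))e^{(u+\theta_2)a}$ for integrability of $1/\Lambda_1^{\theta_2,\beta_2}$ at $+\infty$ are harmless (indeed somewhat more elementary) variants of the paper's H\^{o}pital arguments. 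One caveat: your assertion that $\kappa_L'(\theta)\to\infty$ as $\theta\uparrow\Theta_L$ ``holds in the present setting'' is not true in general when $\Theta_L<\infty$ (e.g. $\ell(dz)=e^{-2z}z^{-4}\boldsymbol{1}_{[1,\infty)}(z)\,dz$ has $\Theta_L=2>1$ yet $\kappa_L'(\Theta_L^-)<\infty$, and then for small $\beta_2$ equation \eqref{Equ_ZeroVecField_Lambda1} has no strictly positive root), so existence of $u^*$ is not automatic; however, the paper's own proof asserts this existence from bare convexity with no further justification, so this is a gap you share with, rather than introduce relative to, the published argument.
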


\begin{proof}
We have to study the vector field%
\[
\Lambda_{1}^{\theta_{2},\beta_{2}}(u)=-\alpha_{Y}u+\frac{\alpha_{Y}\beta_{2}%
}{\kappa_{L}^{\prime\prime}(\theta_{2})}\int_{0}^{\infty}(e^{uz}%
-1)ze^{\theta_{2}z}\ell(dz),\quad\beta_{2}\in\lbrack0,1],\theta_{2}\in
D_{L}^{g}.
\]
Consider
\[
\mathcal{D}(\Lambda_{1}^{\theta_{2},\beta_{2}})\triangleq\mathrm{int}%
(\{u\in\mathbb{R}:\Lambda_{1}^{\theta_{2},\beta_{2}}(u)<\infty\})=\mathrm{int}%
(\{u\in\mathbb{R}:\kappa_{L}^{\prime}(u+\theta_{2})<\infty\})=(-\infty
,\Theta_{L}-\theta_{2}),
\]
and, for any $\delta>0,$ define
\begin{align*}
\mathcal{D}_{\delta}  &  \triangleq\mathrm{int}(%
%TCIMACRO{\dbigcap \limits_{\beta_{2}\in\lbrack0,1],\theta_{2}\in D_{L}%
%^{g}(\delta)}}%
%BeginExpansion
{\displaystyle\bigcap\limits_{\beta_{2}\in\lbrack0,1],\theta_{2}\in D_{L}%
^{g}(\delta)}}
%EndExpansion
\mathcal{D}(\Lambda_{1}^{\theta_{2},\beta_{2}}))=(-\infty,\Theta_{L}%
-((\Theta_{L}-1-\delta)\wedge(\Theta_{L}/2)))\\
&  =(-\infty,(1+\delta)\vee(\Theta_{L}/2))).
\end{align*}
On the other hand, for $u,v\in\mathcal{D}(\Lambda_{1}^{\theta_{2},\beta_{2}%
}),$ one has that
\[
\left\vert \Lambda_{1}^{\theta_{2},\beta_{2}}(u)-\Lambda_{1}^{\theta_{2}%
,\beta_{2}}(v)\right\vert \leq\alpha_{Y}\left\vert u-v\right\vert
+\frac{\alpha_{Y}\beta_{2}}{\kappa_{L}^{\prime\prime}(\theta_{2})}\int
_{0}^{\infty}\left\vert e^{uz}-e^{vz}\right\vert ze^{\theta_{2}z}\ell(dz),
\]
and%
\[
\int_{0}^{\infty}\left\vert e^{uz}-e^{vz}\right\vert ze^{\theta_{2}z}%
\ell(dz)\leq|u-v|\int_{0}^{\infty}e^{(u\vee v+\theta_{2})z}z^{2}\ell(dz),
\]
Moreover, note that
\[
\mathrm{int}(\{u\in\mathbb{R}:\int_{0}^{\infty}z^{2}e^{(u+\theta_{2})z}%
\ell(dz)<\infty\})=(-\infty,\Theta_{L}-\theta_{2})=\mathcal{D}(\Lambda
_{1}^{\theta_{2},\beta_{2}}).
\]
Hence, the vector field $\Lambda_{1}^{\theta_{2},\beta_{2}}(u),\theta_{2}\in
D_{L}^{g}(\delta),\beta_{2}\in\lbrack0,1]$ is well defined (i.e., finite) and
locally Lipschitz in $\mathcal{D}_{\delta}.$ As the initial condition for
$\Psi_{\theta_{2},\beta_{2}}^{1}(t)$ is $\Psi_{\theta_{2},\beta_{2}}^{1}%
(0)=1$, it is natural to require that $1\in\mathcal{D}_{\delta}$ and this is
precisely the role of $\delta>0.$ Then, by Picard-Lindel\"{o}f Theorem, see
Theorem 3.1, pag. 18, in Hale \cite{Ha69}, we have local existence and
uniqueness for $\Psi_{\theta_{2},\beta_{2}}^{1}(t)$ and $\Psi_{\theta
_{2},\beta_{2}}^{1}(0)\in\mathcal{D}_{\delta}.$ In addition, we have that
$0\in\mathcal{D}_{\delta}$ and, hence, we have local existence and uniqueness
for solutions of $\Psi_{\theta_{2},\beta_{2}}^{1}(t)$ with $\Psi_{\theta
_{2},\beta_{2}}^{1}(0)=0.$ As $\Lambda_{1}^{\theta_{2},\beta_{2}}(0)=0,$ we
have that $\Psi_{\theta_{2},\beta_{2}}^{1}(t)\equiv0$ is the unique global
solution of equation $\left(  \ref{Equ_ODE}\right)  $ starting at 0. As a
consequence, it is sufficient to study the vector field $\Lambda_{1}%
^{\theta_{2},\beta_{2}}(u)$ for $u\geq0,$ because any solution of equation
$\left(  \ref{Equ_ODE}\right)  $ with $\Psi_{\theta_{2},\beta_{2}}^{1}(0)=1$
cannot cross to the negative real line without contradicting the uniqueness
result at $0.$ The unicity of $\Psi_{\theta_{2},\beta_{2}}^{0}(t)$ trivially
follows from that of $\Psi_{\theta_{2},\beta_{2}}^{1}(t).$ The next step is to
study the zeros of $\Lambda_{1}^{\theta_{2},\beta_{2}}(u),u\in\mathcal{D}%
_{\delta}\cap\lbrack0,\infty).$ We have to solve the non-linear equation%
\begin{equation}
0=\Lambda_{1}^{\theta_{2},\beta_{2}}(u)=-\alpha_{Y}u+\frac{\alpha_{Y}\beta
_{2}}{\kappa_{L}^{\prime\prime}(\theta_{2})}\left(  \kappa_{L}^{\prime
}(u+\theta_{2})-\kappa_{L}^{\prime}(\theta_{2})\right)  .
\label{Equ_VectorField_Lamb1}%
\end{equation}

Note that equation $\left(  \ref{Equ_VectorField_Lamb1}\right)  $ has the
trivial solution $u=0.$ As the first and second derivatives of $\Lambda
_{1}^{\theta_{2},\beta_{2}}(u)$ are
\begin{align*}
\frac{d}{du}\Lambda_{1}^{\theta_{2},\beta_{2}}(u)  &  =-\alpha_{Y}%
+\frac{\alpha_{Y}\beta_{2}}{\kappa_{L}^{\prime\prime}(\theta_{2})}\kappa
_{L}^{\prime\prime}(u+\theta_{2}),\\
\frac{d^{2}}{du^{2}}\Lambda_{1}^{\theta_{2},\beta_{2}}(u)  &  =\frac
{\alpha_{Y}\beta_{2}}{\kappa_{L}^{\prime\prime}(\theta_{2})}\kappa_{L}%
^{(3)}(u+\theta_{2})>0,
\end{align*}
we have that there exists a unique $0<u^{\ast}(\theta_{2},\beta_{2}%
)<\Theta_{L}-\theta_{2}$ for $\theta_{2}\in D_{L}^{g}(\delta)$ and $\beta
_{2}\in(0,1)$ such that equation $\left(  \ref{Equ_VectorField_Lamb1}\right)
$ is satisfied. Moreover $\Lambda_{1}^{\theta_{2},\beta_{2}}(u)<0$ for
$u\in(0,u^{\ast}(\theta_{2},\beta_{2}))$ and $\Lambda_{1}^{\theta_{2}%
,\beta_{2}}(u)>0$ for $(u^{\ast}(\theta_{2},\beta_{2}),\Theta_{L}-\theta
_{2}).$ When $\beta_{2}\downarrow0,u^{\ast}(\theta_{2},\beta_{2})$ converges
to $\Theta_{L}-\theta_{2}.$ On the other hand, when $\beta_{2}\uparrow
1,u^{\ast}(\theta_{2},\beta_{2})$ converges to zero. Therefore, we have three
possible cases to discuss

\begin{itemize}
\item \textbf{Case }$\boldsymbol{1}:$ If $u^{\ast}(\theta_{2},\beta_{2})>1,$
then $\Psi_{\theta_{2},\beta_{2}}^{1}(t)$ will monotonically converge to $0$
and, by uniqueness of solutions, it will take an infinite amount of time to
reach $0.$ Hence, $\Psi_{\theta_{2},\beta_{2}}^{1}(t)$ will be a globally
defined bounded solution. The exponential rate of convergence of $\Psi
_{\theta_{2},\beta_{2}}^{1}(t)$ to zero, equation \ref{Equ_Exp_Psi1}, follows
by applying H\^{o}pital's rule to
\begin{align*}
\lim_{t\rightarrow\infty}t^{-1}\log(\Psi_{\theta_{2},\beta_{2}}^{1}(t))  &
=\lim_{t\rightarrow\infty}\frac{\frac{d}{dt}\Psi_{\theta_{2},\beta_{2}}%
^{1}(t)}{\Psi_{\theta_{2},\beta_{2}}^{1}(t)}\\
&  =\lim_{t\rightarrow\infty}\frac{\Lambda_{1}^{\theta_{2},\beta_{2}}%
(\Psi_{\theta_{2},\beta_{2}}^{1}(t))}{\Psi_{\theta_{2},\beta_{2}}^{1}(t)}\\
&  =\lim_{t\rightarrow\infty}\frac{-\alpha_{Y}\Psi_{\theta_{2},\beta_{2}}%
^{1}(t)+\frac{\alpha_{Y}\beta_{2}}{\kappa_{L}^{\prime\prime}(\theta_{2}%
)}\{\kappa_{L}^{\prime}(\Psi_{\theta_{2},\beta_{2}}^{1}(t)+\theta_{2}%
)-\kappa_{L}^{\prime}(+\theta_{2})\}}{\Psi_{\theta_{2},\beta_{2}}^{1}(t)}\\
&  =-\alpha_{Y}+\frac{\alpha_{Y}\beta_{2}}{\kappa_{L}^{\prime\prime}%
(\theta_{2})}\lim_{t\rightarrow\infty}\frac{\int_{0}^{1}\kappa_{L}%
^{\prime\prime}(\theta_{2}+\lambda\Psi_{\theta_{2},\beta_{2}}^{1}%
(t))d\lambda\Psi_{\theta_{2},\beta_{2}}^{1}(t)}{\Psi_{\theta_{2},\beta_{2}%
}^{1}(t)}\\
&  =-\alpha_{Y}(1-\beta_{2}).
\end{align*}
It follows that $\Psi_{\theta_{2},\beta_{2}}^{0}(t)$ will be also globally
defined and, as $\Lambda_{0}^{\theta_{2},\beta_{2}}(u)=\mu_{Y}u+\int_{0}%
^{1}\kappa_{L}^{\prime}(\theta_{2}+\lambda u)d\lambda>0$ for $u\in(0,1),$ by
monotone convergence%
\[
\lim_{t\rightarrow\infty}\Psi_{\theta_{2},\beta_{2}}^{0}(t)=\int_{0}^{\infty
}\Lambda_{0}^{\theta_{2},\beta_{2}}(\Psi_{\theta_{2},\beta_{2}}^{1}(s))ds.
\]
To show that the previous integral is actually finite, it suffices to prove
that $\Lambda_{0}^{\theta_{2},\beta_{2}}(\Psi_{\theta_{2},\beta_{2}}^{1}(t))$
converges to zero faster than $t^{-(1+\varepsilon)},$ for some $\varepsilon
>0,$ when $t$ tends to infinity. We have that \
\begin{align*}
\Lambda_{0}^{\theta_{2},\beta_{2}}(\Psi_{\theta_{2},\beta_{2}}^{1}(t))  &
=\mu_{Y}\Psi_{\theta_{2},\beta_{2}}^{1}(t)+\kappa_{L}(\Psi_{\theta_{2}%
,\beta_{2}}^{1}(t)+\theta_{2})-\kappa_{L}(\theta_{2})\\
&  =\{\mu_{Y}+\int_{0}^{1}\kappa_{L}^{\prime}(\theta_{2}+\lambda\Psi
_{\theta_{2},\beta_{2}}^{1}(t))d\lambda\}\Psi_{\theta_{2},\beta_{2}}^{1}(t),
\end{align*}
and%
\begin{align*}
\frac{d}{dt}\left(  \Lambda_{0}^{\theta_{2},\beta_{2}}(\Psi_{\theta_{2}%
,\beta_{2}}^{1}(t))\right)   &  =\left.  \frac{d}{du}\Lambda_{0}^{\theta
_{2},\beta_{2}}(u)\right\vert _{u=\Psi_{\theta_{2},\beta_{2}}^{1}(t)}\frac
{d}{dt}\Psi_{\theta_{2},\beta_{2}}^{1}(t)\\
&  =\{\mu_{Y}+\kappa_{L}^{\prime}(\theta_{2}+\Psi_{\theta_{2},\beta_{2}}%
^{1}(t))\}\\
&  \qquad\times\{-\alpha_{Y}\Psi_{\theta_{2},\beta_{2}}^{1}(t)+\frac
{\alpha_{Y}\beta_{2}}{\kappa_{L}^{\prime\prime}(\theta_{2})}\left(  \kappa
_{L}^{\prime}(\Psi_{\theta_{2},\beta_{2}}^{1}(t)+\theta_{2})-\kappa
_{L}^{\prime}(\theta_{2})\right)  \}\\
&  =\{\mu_{Y}+\kappa_{L}^{\prime}(\theta_{2}+\Psi_{\theta_{2},\beta_{2}}%
^{1}(t))\}\\
&  \qquad\times\{-\alpha_{Y}+\frac{\alpha_{Y}\beta_{2}}{\kappa_{L}%
^{\prime\prime}(\theta_{2})}\int_{0}^{1}\kappa_{L}^{\prime\prime}(\theta
_{2}+\lambda\Psi_{\theta_{2},\beta_{2}}^{1}(t))d\lambda\}\Psi_{\theta
_{2},\beta_{2}}^{1}(t).
\end{align*}
By H\^{o}pital's rule and equation $\left(  \ref{Equ_Exp_Psi1}\right)  $%
\begin{align*}
\lim_{t\rightarrow\infty}t^{(1+\varepsilon)}\Lambda_{0}^{\theta_{2},\beta_{2}%
}(\Psi_{\theta_{2},\beta_{2}}^{1}(t))  &  =\lim_{t\rightarrow\infty
}(1+\varepsilon)t^{\varepsilon}\left(  \frac{-\frac{d}{dt}\Lambda_{0}%
^{\theta_{2},\beta_{2}}(\Psi_{\theta_{2},\beta_{2}}^{1}(t))}{\left(
\Lambda_{0}^{\theta_{2},\beta_{2}}(\Psi_{\theta_{2},\beta_{2}}^{1}(t))\right)
^{2}}\right)  ^{-1}\\
&  =(1+\varepsilon)\frac{\mu_{Y}+\kappa_{L}^{\prime}(\theta_{2})}{\alpha
_{Y}(1-\beta_{2})}\lim_{t\rightarrow\infty}t^{\varepsilon}\Psi_{\theta
_{2},\beta_{2}}^{1}(t)=0,
\end{align*}
and we can conclude that equation $\left(  \ref{Equ_Infinity_Psi0}\right)  $ holds.

\item \textbf{Case }$\boldsymbol{2}:$ If $u^{\ast}(\theta_{2},\beta_{2})=1,$
then $\Psi_{\theta_{2},\beta_{2}}^{1}(t)\equiv1,$ will be the unique global
solution and
\[
\Psi_{\theta_{2},\beta_{2}}^{0}(t)=\int_{0}^{t}\Lambda_{0}^{\theta_{2}%
,\beta_{2}}(\Psi_{\theta_{2},\beta_{2}}^{1}(s))ds=\{\mu_{Y}+\kappa
_{L}(1+\theta_{2})-\kappa_{L}(\theta_{2})\}t.
\]

\item \textbf{Case }$\boldsymbol{3}:$ If $u^{\ast}(\theta_{2},\beta_{2})<1,$
then $\Psi_{\theta_{2},\beta_{2}}^{1}(t)$ will increase monotonically to
$\Theta_{L}-\theta_{2},$ because the vector field $\Lambda_{1}^{\theta
_{2},\beta_{2}}$ is strictly positive in $[1,\Theta_{L}-\theta_{2})$.
Separating variables an integrating the equation for $\Psi_{\theta_{2}%
,\beta_{2}}^{1}(t)$ with $\Psi_{\theta_{2},\beta_{2}}^{1}(0)=1$ we get that
the maximal domain of definition of $\Psi_{\theta_{2},\beta_{2}}^{1}(t)$ is
$[0,t_{\infty})$ with
\[
t_{\infty}\triangleq\int_{1}^{\Theta_{L}-\theta_{2}}(\Lambda_{1}^{\theta
_{2},\beta_{2}}(u))^{-1}du.
\]
To show that $t_{\infty}$ is actually finite we have to distinguish between
the case $\Theta_{L}<\infty$ and $\Theta_{L}=\infty.$ If $\Theta_{L}<\infty,$
then $(\Lambda_{1}^{\theta_{2},\beta_{2}}(u))^{-1}$ is bounded in
$[1,\Theta_{L}-\theta_{2})$ and the integral is obviously finite. If
$\Theta_{L}=\infty$ we have to ensure that $(\Lambda_{1}^{\theta_{2},\beta
_{2}}(u))^{-1}$ converges to zero fast enough when $u$ tends to infinity. Note
that, by monotone convergence, one has that
\begin{align*}
\lim_{\theta\rightarrow\infty}\kappa_{L}(\theta) &  =\int_{0}^{\infty}%
\lim_{\theta\rightarrow\infty}(e^{\theta z}-1)\ell(dz)=\infty,\\
\lim_{\theta\rightarrow\infty}\kappa_{L}^{(n)}(\theta) &  =\int_{0}^{\infty
}\lim_{\theta\rightarrow\infty}z^{n}e^{\theta z}\ell(dz)=\infty,\quad n\geq1.
\end{align*}
For any $0<\varepsilon<1,$ we have that
\begin{align*}
\lim_{u\rightarrow\infty}u^{-(1+\varepsilon)}\Lambda_{1}^{\theta_{2},\beta
_{2}}(u) &  =\lim_{u\rightarrow\infty}u^{-(1+\varepsilon)}\{-\alpha_{Y}%
u+\frac{\alpha_{Y}\beta_{2}}{\kappa_{L}^{\prime\prime}(\theta_{2})}(\kappa
_{L}^{\prime}(u+\theta_{2})-\kappa_{L}^{\prime}(\theta_{2}))\}\\
&  =\frac{\alpha_{Y}\beta_{2}}{\kappa_{L}^{\prime\prime}(\theta_{2})}%
\lim_{u\rightarrow\infty}\frac{\kappa_{L}^{\prime}(u+\theta_{2})}%
{u^{(1+\varepsilon)}}=\frac{\alpha_{Y}\beta_{2}}{(1+\varepsilon)\kappa
_{L}^{\prime\prime}(\theta_{2})}\lim_{u\rightarrow\infty}\frac{\kappa
_{L}^{\prime\prime}(u+\theta_{2})}{u^{\varepsilon}}\\
&  =\frac{\alpha_{Y}\beta_{2}}{(1+\varepsilon)\varepsilon\kappa_{L}%
^{\prime\prime}(\theta_{2})}\lim_{u\rightarrow\infty}u^{1-\varepsilon}%
\kappa_{L}^{(3)}(u+\theta_{2})=\infty,
\end{align*}
which yields that the integral defining $t_{\infty}$ is finite. According to
Remark \ref{RemarkPsi0}, we have that
\begin{equation}
\lim_{t\rightarrow t_{\infty}}\Psi_{\theta_{2},\beta_{2}}^{0}(t)=\int
_{0}^{t_{\infty}}\Lambda_{0}^{\theta_{2},\beta_{2}}(\Psi_{\theta_{2},\beta
_{2}}^{1}(s))ds,\label{EquIntegralPsi0}%
\end{equation}
which may be finite or infinite depending, of course, on how fast $\Lambda
_{0}^{\theta_{2},\beta_{2}}(\Psi_{\theta_{2},\beta_{2}}^{1}(s))$ diverges to
infinity when $s$ approaches to $t_{\infty}.$
\end{itemize}
\end{proof}

As it does not seem possible to give simple conditions for the finiteness (or
not) of the integral $\left(  \ref{EquIntegralPsi0}\right)  $\ and it is not
relevant in the discussion to follow, we do not proceed further in the analysis.

\begin{remark}
If $\beta_{2}=0,$ then $\Psi_{\theta_{2},0}^{1}(t)=e^{-\alpha_{Y}t}$ and
\begin{align*}
\Psi_{\theta_{2},0}^{0}(t) &  =\int_{0}^{t}\mu_{Y}e^{-\alpha_{Y}s}ds+\int
_{0}^{t}\kappa_{L}(e^{-\alpha_{Y}s}+\theta_{2})-\kappa_{L}(\theta_{2})ds\\
&  =\frac{\mu_{Y}}{\alpha_{Y}}(1-e^{-\alpha_{Y}t})+\int_{0}^{t}\int
_{0}^{\infty}(e^{ze^{-\alpha_{Y}s}}-1)e^{\theta_{2}z}\ell(dz)ds.
\end{align*}
Obviously $\lim_{t\rightarrow\infty}e^{\alpha_{Y}t}\Psi_{\theta_{2},0}%
^{1}(t)=1$ and
\[
\lim_{t\rightarrow\infty}\Psi_{\theta_{2},0}^{0}(t)=\frac{\mu_{Y}}{\alpha_{Y}%
}+\int_{0}^{\infty}\int_{0}^{\infty}(e^{ze^{-\alpha_{Y}s}}-1)e^{\theta_{2}%
z}\ell(dz)ds<\infty.
\]
Note that%
\begin{align*}
\int_{0}^{\infty}\int_{0}^{\infty}(e^{ze^{-\alpha_{Y}s}}-1)e^{\theta_{2}z}%
\ell(dz)ds &  =\int_{0}^{\infty}\left(  \int_{0}^{\infty}(e^{ze^{-\alpha_{Y}%
s}}-1)ds\right)  e^{\theta_{2}z}\ell(dz)\\
&  \leq\int_{0}^{\infty}\left(  \int_{0}^{\infty}\left(  \int_{0}%
^{1}e^{\lambda ze^{-\alpha_{Y}s}}d\lambda\right)  ze^{-\alpha_{Y}s}ds\right)
e^{\theta_{2}z}\ell(dz)\\
&  \leq\frac{1}{\alpha_{Y}}\int_{0}^{\infty}ze^{(1+\theta_{2})z}\ell
(dz)=\frac{\kappa_{L}^{\prime}(1+\theta_{2})}{\alpha_{Y}}<\infty.
\end{align*}
If $\beta_{2}=1,$ we have that
\[
\frac{d}{du}\Lambda_{1}^{\theta_{2},\beta_{2}}(u)=-\alpha_{Y}+\frac{\alpha
_{Y}}{\kappa_{L}^{\prime\prime}(\theta_{2})}\kappa_{L}^{\prime\prime}%
(u+\theta_{2})=\alpha_{Y}(\frac{\kappa_{L}^{\prime\prime}(u+\theta_{2}%
)}{\kappa_{L}^{\prime\prime}(\theta_{2})}-1)>0,
\]
for $u\in(0,\Theta_{L}-\theta_{2}),$ which yields that $\Psi_{\theta_{2}%
,1}^{1}(t)>1$ and monotonically diverges to infinity.
\end{remark}

Although the previous result characterizes the behaviour of the solution of
the ODE $\left(  \ref{Equ_ODE}\right)  $ for different values of $(\theta
_{2},\beta_{2})$ in terms of $u^{\ast}(\theta_{2},\beta_{2}),$ usually one
cannot find $u^{\ast}(\theta_{2},\beta_{2})$ analytically and, given
$(\theta_{2},\beta_{2}),$ equation $\left(  \ref{Equ_ZeroVecField_Lambda1}%
\right)  $ must be solved numerically to know whether the solution associated
to equation $\left(  \ref{Equ_ODE}\right)  $ is bounded or not. Hence, the
following corollary of Theorem \ref{Theorem_Main_Geom} may be helpful in practice.

\begin{corollary}
\label{Corollary_SufficientCond}Under the hypothesis of Theorem
\ref{Theorem_Main_Geom} and for $\theta_{2}\in D_{L}^{g}(\delta)$ fixed, a
sufficient condition for $u^{\ast}(\theta_{2},\beta_{2})>1$ is that
\begin{equation}
\beta_{2}<\frac{\kappa_{L}^{\prime\prime}(\theta_{2})}{\kappa_{L}^{\prime
}(1+\theta_{2})-\kappa_{L}^{\prime}(\theta_{2})}. \label{Equ_Suf_Cond_Geom}%
\end{equation}

\end{corollary}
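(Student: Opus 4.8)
The plan is to reduce the inequality $u^*(\theta_2,\beta_2)>1$ to a single sign condition on the vector field $\Lambda_1^{\theta_2,\beta_2}$ evaluated at the point $u=1$. The crucial ingredient, already established in the proof of Theorem \ref{Theorem_Main_Geom}, is the global sign structure of $\Lambda_1^{\theta_2,\beta_2}$: for $\theta_2\in D_L^g(\delta)$ and $\beta_2\in(0,1)$, the quantity $u^*(\theta_2,\beta_2)$ is the \emph{unique} strictly positive zero of $\Lambda_1^{\theta_2,\beta_2}$ in $(0,\Theta_L-\theta_2)$, with $\Lambda_1^{\theta_2,\beta_2}(u)<0$ on $(0,u^*(\theta_2,\beta_2))$ and $\Lambda_1^{\theta_2,\beta_2}(u)>0$ on $(u^*(\theta_2,\beta_2),\Theta_L-\theta_2)$. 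I will take this sign pattern as given.

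First I would verify that $u=1$ lies in the relevant interval $(0,\Theta_L-\theta_2)$. Since $\theta_2\in D_L^g(\delta)=(-\infty,(\Theta_L-1-\delta)\wedge(\Theta_L/2))$ we have $\theta_2<\Theta_L-1-\delta$, hence $\Theta_L-\theta_2>1+\delta>1$, so that $1$ indeed belongs to $(0,\Theta_L-\theta_2)$. Given the sign pattern, any strictly positive $u$ at which $\Lambda_1^{\theta_2,\beta_2}(u)<0$ must lie to the left of the unique positive zero $u^*(\theta_2,\beta_2)$; consequently the implication
\[
\Lambda_1^{\theta_2,\beta_2}(1)<0 \;\Longrightarrow\; u^*(\theta_2,\beta_2)>1
\]
holds, which is exactly what the (sufficient) condition requires. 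The second step is then the routine evaluation
\[
\Lambda_1^{\theta_2,\beta_2}(1)=-\alpha_Y+\frac{\alpha_Y\beta_2}{\kappa_L^{\prime\prime}(\theta_2)}\left(\kappa_L^{\prime}(1+\theta_2)-\kappa_L^{\prime}(\theta_2)\right),
\]
followed by imposing $\Lambda_1^{\theta_2,\beta_2}(1)<0$. Dividing through by $\alpha_Y>0$ and using that $\kappa_L^{\prime}$ is strictly increasing (Remark \ref{Remark_D_Cumulants}), so that $\kappa_L^{\prime}(1+\theta_2)-\kappa_L^{\prime}(\theta_2)>0$, the inequality rearranges to $\beta_2<\kappa_L^{\prime\prime}(\theta_2)/\left(\kappa_L^{\prime}(1+\theta_2)-\kappa_L^{\prime}(\theta_2)\right)$, which is precisely \eqref{Equ_Suf_Cond_Geom}.

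There is no real obstacle here; the one point requiring care is that the equivalence between the location of $u^*(\theta_2,\beta_2)$ relative to $1$ and the sign of $\Lambda_1^{\theta_2,\beta_2}(1)$ rests entirely on the convexity and single-positive-zero structure of $\Lambda_1^{\theta_2,\beta_2}$ proved in Theorem \ref{Theorem_Main_Geom}. For the stated sufficient direction one only needs the implication just described, so no converse argument is necessary; note, however, that since $\Lambda_1^{\theta_2,\beta_2}$ changes sign exactly once in the positive half-line the condition is in fact equivalent, which makes \eqref{Equ_Suf_Cond_Geom} a convenient numerically checkable criterion for the globally bounded regime (Case 1) of Theorem \ref{Theorem_Main_Geom}.
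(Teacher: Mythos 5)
Your proposal is correct and follows essentially the same route as the paper: both arguments invoke the sign structure of $\Lambda_{1}^{\theta_{2},\beta_{2}}$ (negative on $(0,u^{\ast})$, positive on $(u^{\ast},\Theta_{L}-\theta_{2})$) established in the proof of Theorem \ref{Theorem_Main_Geom}, observe that $\Lambda_{1}^{\theta_{2},\beta_{2}}(1)<0$ forces $u^{\ast}(\theta_{2},\beta_{2})>1$, and rearrange $\Lambda_{1}^{\theta_{2},\beta_{2}}(1)<0$ into the stated bound on $\beta_{2}$. Your explicit check that $1\in(0,\Theta_{L}-\theta_{2})$, via $\theta_{2}<\Theta_{L}-1-\delta$, is a small point the paper leaves implicit (it is exactly the role of $\delta$), and your closing remark that the criterion is in fact equivalent is consistent with the paper's identification of $\beta_{2}^{\ast}(1)$ as the value where $u^{\ast}=1$.
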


\begin{proof}
Assume $\theta_{2}\in D_{L}^{g}(\delta)$ fixed. According to the discussion in
the proof of Theorem \ref{Theorem_Main_Geom}, for any $\theta\in D_{L}%
^{g}(\delta)$ and $\beta_{2}\in(0,1)$ there exists a unique root $u^{\ast
}=u^{\ast}(\theta_{2},\beta_{2})$ of the vector field $\Lambda_{1}^{\theta
_{2},\beta_{2}}(u)$ defined by equation $\left(
\ref{Equ_ZeroVecField_Lambda1}\right)  $ and such that $\Lambda_{1}%
^{\theta_{2},\beta_{2}}(u)<0$ if $(0,u^{\ast}(\theta_{2},\beta_{2}))$ and
$\Lambda_{1}^{\theta_{2},\beta_{2}}(u)>0$ if $(u^{\ast}(\theta_{2},\beta
_{2}),\Theta_{L}-\theta_{2}).$ Now, note that
\[
\beta_{2}^{\ast}(1)\triangleq\frac{\kappa_{L}^{\prime\prime}(\theta_{2}%
)}{\kappa_{L}^{\prime}(1+\theta_{2})-\kappa_{L}^{\prime}(\theta_{2})},
\]
is such that $1=u^{\ast}(\theta_{2},\beta_{2}^{\ast}(1)).$ If $\beta_{2}%
<\beta_{2}^{\ast}(1)$ one has that%
\[
\Lambda_{1}^{\theta_{2},\beta_{2}}(1)=\alpha_{Y}(-1+\frac{\beta_{2}}%
{\kappa_{L}^{\prime\prime}(\theta_{2})}\kappa_{L}^{\prime}(1+\theta
_{2})-\kappa_{L}^{\prime}(\theta_{2}))<0,
\]
which yields that the unique root $u^{\ast}=u^{\ast}(\theta_{2},\beta_{2})$ of
the vector field $\Lambda_{1}^{\theta_{2},\beta_{2}}(u)$ must be strictly
greater than one and, therefore, we are in the case (1) of Theorem
\ref{Theorem_Main_Geom}.
\end{proof}

Next, we present two examples where we apply the previous results.

\begin{example}
We start by the simplest possible case. Assume that the L\'{e}vy measure is
$\delta_{\{1\}}(dz),$ that is, the L\'{e}vy process $L$ has only jumps of size
$1.$ In this case $\Theta_{L}=\infty$ and, hence, $D_{L}^{g}=\mathbb{R}.$ We
have that $\kappa_{L}(\theta_{2})=e^{\theta_{2}}-1$ and $\kappa_{L}%
^{(n)}(\theta_{2})=e^{\theta_{2}},n\in\mathbb{N}.$ Therefore,
\begin{align*}
\Lambda_{0}^{\theta_{2},\beta_{2}}(u)  &  =\mu_{Y}u+\kappa_{L}(u+\theta
_{2})-\kappa_{L}(\theta_{2})=\mu_{Y}u+(e^{u+\theta_{2}}-e^{\theta_{2}}),\\
\Lambda_{1}^{\theta_{2},\beta_{2}}(u)  &  =-\alpha_{Y}u+\frac{\alpha_{Y}%
\beta_{2}}{\kappa_{L}^{\prime\prime}(\theta_{2})}\left(  \kappa_{L}^{\prime
}(u+\theta_{2})-\kappa_{L}^{\prime}(\theta_{2})\right)  =-\alpha_{Y}%
u+\alpha_{Y}\beta_{2}(e^{u}-1).
\end{align*}
First, we have to solve%
\begin{align}
\frac{d}{dt}\Psi_{\theta_{2},\beta_{2}}^{1}(t)  &  =-\alpha_{Y}\Psi
_{\theta_{2},\beta_{2}}^{1}(t)+\alpha_{Y}\beta_{2}(e^{\Psi_{\theta_{2}%
,\beta_{2}}^{1}(t)}-1),\label{Equ_Psi1_delta_1}\\
\quad\Psi_{\theta_{2},\beta_{2}}^{1}(0)  &  =1.\nonumber
\end{align}
and then integrate $\Lambda_{0}^{\theta_{2},\beta_{2}}(\Psi_{\theta_{2}%
,\beta_{2}}^{1}(s))$ from $0$ to $t.$ Although equation $\left(
\ref{Equ_Psi1_delta_1}\right)  $ can be solved analytically, its solution is
given in implicit form and a numerical method is easier to use. In this
example, equation $\left(  \ref{Equ_ZeroVecField_Lambda1}\right)  $ reads%
\begin{equation}
u=\frac{\beta_{2}}{e^{\theta_{2}}}\left(  e^{u+\theta_{2}}-e^{\theta_{2}%
}\right)  =\beta_{2}(e^{u}-1), \label{EquRootExDirac}%
\end{equation}
which can only be solved numerically. Heuristically, if $\beta_{2}$ is close
to one the solution of the previous equation must be close to zero and, hence,
the solution $\Psi_{\theta_{2},\beta_{2}}^{1}(t)$ diverges to $\infty.$
Applying Corollary \ref{Corollary_SufficientCond} we can guarantee that
$\Psi_{\theta_{2},\beta_{2}}^{1}(t)$ converges to zero if
\[
\beta_{2}<\frac{\kappa_{L}^{\prime\prime}(\theta_{2})}{\kappa_{L}^{\prime
}(1+\theta_{2})-\kappa_{L}^{\prime}(\theta_{2})}=\frac{e^{\theta_{2}}%
}{e^{1+\theta_{2}}-e^{\theta_{2}}}=(e-1)^{-1}.
\]

\end{example}

\begin{example}
Assume that the L\'{e}vy measure is $\ell(dz)=ce^{-\lambda z}\boldsymbol{1}%
_{(0,\infty)},$ that is, $L$ is a compound Poisson process with intensity
$c/\lambda$ and exponentially distributed jumps with mean $1/\lambda.$ In this
case $\Theta_{L}=\lambda$ and, hence, $D_{L}^{g}=\mathbb{(-\infty}%
,(\lambda-1)\wedge(\lambda/2).$ We have that $\kappa_{L}(\theta_{2}%
)=\frac{c\theta_{2}}{\lambda(\lambda-\theta_{2})}$ and $\kappa_{L}%
^{(n)}(\theta_{2})=\frac{cn!}{(\lambda-\theta_{2})^{n+1}},n\in\mathbb{N}.$
Therefore,
\begin{align*}
\Lambda_{0}^{\theta_{2},\beta_{2}}(u)  &  =\mu_{Y}u+\kappa_{L}(u+\theta
_{2})-\kappa_{L}(\theta_{2})\\
&  =\mu_{Y}u+\frac{c(u+\theta_{2})}{\lambda(\lambda-\theta_{2}-u)}%
-\frac{c\theta_{2}}{\lambda(\lambda-\theta_{2})},\\
\Lambda_{1}^{\theta_{2},\beta_{2}}(u)  &  =-\alpha_{Y}u+\frac{\alpha_{Y}%
\beta_{2}}{\kappa_{L}^{\prime\prime}(\theta_{2})}\left(  \kappa_{L}^{\prime
}(u+\theta_{2})-\kappa_{L}^{\prime}(\theta_{2})\right) \\
&  =-\alpha_{Y}u+\frac{\alpha_{Y}\beta_{2}(\lambda-\theta_{2})^{3}}{2}\left\{
\frac{1}{(\lambda-\theta_{2}-u)^{2}}-\frac{1}{(\lambda-\theta_{2})^{2}%
}\right\}  .
\end{align*}
Hence, we have to solve
\begin{align*}
\frac{d}{dt}\Psi_{\theta_{2},\beta_{2}}^{1}(t)  &  =-\alpha_{Y}\Psi
_{\theta_{2},\beta_{2}}^{1}(t)+\frac{\alpha_{Y}\beta_{2}(\lambda-\theta
_{2})^{3}}{2}\left\{  \frac{1}{(\lambda-\theta_{2}-\Psi_{\theta_{2},\beta_{2}%
}^{1}(t))^{2}}-\frac{1}{(\lambda-\theta_{2})^{2}}\right\}  ,\\
\Psi_{\theta_{2},\beta_{2}}^{1}(0)  &  =1,
\end{align*}
and then integrate $\Lambda_{0}^{\theta_{2},\beta_{2}}(\Psi_{\theta_{2}%
,\beta_{2}}^{1}(s))$ from $0$ to $t.$ As in the previous example, there is an
analytic solution to this equation in implicit form, but it is easier to use a
numerical method. In this example, equation $\left(
\ref{Equ_ZeroVecField_Lambda1}\right)  $ reads%
\[
u=\beta_{2}\frac{(\lambda-\theta_{2})^{3}}{2}\left(  \frac{1}{(\lambda
-\theta_{2}-u)^{2}}-\frac{1}{(\lambda-\theta_{2})^{2}}\right)  ,
\]
which has roots%
\[
(u_{0},u_{-},u_{+})=(0,\frac{\lambda-\theta_{2}}{4}\left(  4-\beta_{2}%
-\sqrt{\beta_{2}^{2}+8\beta_{2}}\right)  ,\frac{\lambda-\theta_{2}}{4}\left(
4-\beta_{2}+\sqrt{\beta_{2}^{2}+8\beta_{2}}\right)  ).
\]
We are just interested in the root $u_{-}\in(0,\lambda-\theta_{2}),$ note that
$u_{+}>\lambda-\theta_{2}.$ The inequality $\lambda-\theta_{2}>u_{-}>1$ yields%
\begin{equation}
0<\beta_{2}<2\frac{(\lambda-\theta_{2}-1)^{2}}{(\lambda-\theta_{2}%
)(2(\lambda-\theta_{2})-1)}. \label{Equ_Cond_Beta2_Example}%
\end{equation}
Hence, for any $\theta_{2}\in D_{L}^{g}(\delta)$ and $\beta_{2}$ satisfying
$\left(  \ref{Equ_Cond_Beta2_Example}\right)  $ we can ensure global existence
and boundedness of $\Psi_{\theta_{2},\beta_{2}}^{0}(t)$ and $\Psi_{\theta
_{2},\beta_{2}}^{1}(t)$.
\end{example}

\subsubsection{Discussion on the risk premium}

For the study of the sign change we are going to abuse the notation, as in the
arithmetic spot price model, and we will denote $R_{g,Q}^{F}(t,\tau)\triangleq
R_{g,Q}^{F}(t,t+\tau),$ where $\tau=T-t$ is the time to maturity. We also fix
the parameters of the model under the historical measure $P,$ i.e., $\mu
_{X},\alpha_{X},\sigma_{X},\mu_{Y},$ and $\alpha_{Y},$ and study the possible
sign of $R_{g,Q}^{F}(t,\tau)$ in terms of the change of measure parameters,
i.e., $\bar{\beta}=(\beta_{1},\beta_{2})$ and $\bar{\theta}=(\theta_{1}%
,\theta_{2})$ and the time to maturity $\tau.$ As in the arithmetic model, the
present time just enters into the picture through the stochastic components
$X$ and $Y.$ We are also going to assume $\mu_{X}=\mu_{Y}=0.$ Analogously to
the arithmetic case, in this way the seasonality function $\Lambda_{g}$
accounts completely for the mean price level. We also assume that $\alpha
_{X}<\alpha_{Y},$ which means that the component accounting for the jumps
reverts the fastest. Finally, in the sequel, we are going to assume that we
are in the Case 1 of Theorem \ref{Theorem_Main_Geom}, i.e., the values
$\theta_{2},\beta_{2}$ are such that $u^{\ast}(\theta_{2},\beta_{2})>1,$ and
$\Psi_{\theta_{2},\beta_{2}}^{0}$ and $\Psi_{\theta_{2},\beta_{2}}^{1}$ are
globally defined and the exponential affine formula $\left(
\ref{Equ_Exp_Affine_Formula}\right)  $ holds.

The following lemma will help us in the discussion to follow.

\begin{lemma}
\label{Lemma_Main_RPG}If $\mu_{X}=\mu_{Y}=0$ and $\alpha_{X}<\alpha_{Y},$ we
have that the sign of the risk premium $R_{g,Q}^{F}(t,\tau)$ will be the same
as the sign of%
\begin{align}
\Sigma(t,\tau)  &  \triangleq X(t)e^{-\alpha_{X}\tau}(e^{\alpha_{X}\beta
_{1}\tau}-1)+Y(t)(\Psi_{\theta_{2},\beta_{2}}^{1}(\tau)-\Psi_{0,0}^{1}%
(\tau))\label{Equ_SigmaTau}\\
&  +\frac{\theta_{1}}{\alpha_{X}(1-\beta_{1})}(1-e^{-\alpha_{X}(1-\beta
_{1})\tau})+\frac{\sigma_{X}^{2}}{4\alpha_{X}}\Lambda(2\alpha_{X}\tau
,1-\beta_{2})\nonumber\\
&  +\Psi_{\theta_{2},\beta_{2}}^{0}(\tau)-\Psi_{0,0}^{0}(\tau),\nonumber
\end{align}
where $\Lambda(x,y)$ is the (non-negative) function defined in Lemma
\ref{Lemma_Main_RPA}. Moreover,%
\begin{align}
\lim_{\tau\rightarrow\infty}\Sigma(t,\tau)  &  =\frac{\theta_{1}}{\alpha
_{X}(1-\beta_{1})}+\frac{\sigma_{X}^{2}}{4\alpha_{X}}\frac{\beta_{1}}%
{1-\beta_{1}}\label{Equ_Sigma_Infinity}\\
&  +\int_{0}^{\infty}\kappa_{L}(\Psi_{\theta_{2},\beta_{2}}^{1}(t)+\theta
_{2})-\kappa_{L}(\theta_{2})-\kappa_{L}(e^{-\alpha_{Y}t})dt\nonumber\\
\lim_{\tau\rightarrow0}\frac{\partial}{\partial\tau}\Sigma(t,\tau)  &
=X(t)\alpha_{X}\beta_{1}+Y(t)\alpha_{Y}\beta_{2}\frac{\kappa_{L}^{\prime
}(1+\theta_{2})-\kappa_{L}^{\prime}(\theta_{2})}{\kappa_{L}^{\prime\prime
}(\theta_{2})}\label{Equ_DSigma_Zero}\\
&  +\theta_{1}+\kappa_{L}(1+\theta_{2})-\kappa_{L}(\theta_{2})-\kappa
_{L}(1)\nonumber
\end{align}

\end{lemma}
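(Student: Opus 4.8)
The plan is to read off all three claims directly from the closed form of $R_{g,Q}^{F}(t,T)$ furnished by Theorem~\ref{Theo_Exp_Affine}, after specialising to $\mu_{X}=\mu_{Y}=0$ and writing $\tau=T-t$. For the sign statement, the key observation is that the prefactor $\mathbb{E}_{P}[S(T)\mid\mathcal{F}_{t}]$ is strictly positive, being the conditional expectation of the strictly positive random variable $S(T)$ in the geometric model \eqref{Equ_Geom_Model}; hence $\mathrm{sgn}\,R_{g,Q}^{F}(t,\tau)=\mathrm{sgn}(e^{E}-1)=\mathrm{sgn}\,E$, where $E$ denotes the exponent appearing in Theorem~\ref{Theo_Exp_Affine}. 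It then remains to verify $E=\Sigma(t,\tau)$. I would insert the two reference identities $\Psi_{0,0}^{1}(\tau)=e^{-\alpha_{Y}\tau}$ and $\Psi_{0,0}^{0}(\tau)=\int_{0}^{\tau}\int_{0}^{\infty}(e^{ze^{-\alpha_{Y}s}}-1)\ell(dz)\,ds$ (both obtained from the Remark computing $\Psi_{\theta_{2},0}^{1},\Psi_{\theta_{2},0}^{0}$ at $\theta_{2}=0$, using $\kappa_{L}(0)=0$), so that the $Y(t)$-line and the last line of $E$ become $Y(t)(\Psi_{\theta_{2},\beta_{2}}^{1}(\tau)-\Psi_{0,0}^{1}(\tau))$ and $\Psi_{\theta_{2},\beta_{2}}^{0}(\tau)-\Psi_{0,0}^{0}(\tau)$; the two Gaussian contributions $\frac{\sigma_{X}^{2}}{4\alpha_{X}(1-\beta_{1})}(1-e^{-2\alpha_{X}(1-\beta_{1})\tau})-\frac{\sigma_{X}^{2}}{4\alpha_{X}}(1-e^{-2\alpha_{X}\tau})$ collapse to $\frac{\sigma_{X}^{2}}{4\alpha_{X}}\Lambda(2\alpha_{X}\tau,1-\beta_{1})$ straight from the definition of $\Lambda$ in Lemma~\ref{Lemma_Main_RPA} (the argument $1-\beta_{1}$ stems from the $Q$-reversion speed $\alpha_{X}(1-\beta_{1})$ of the Brownian factor). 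This yields \eqref{Equ_SigmaTau}.

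For the large-$\tau$ limit \eqref{Equ_Sigma_Infinity} I would pass to the limit in each summand of $\Sigma(t,\tau)$, using throughout that we are in Case~1 of Theorem~\ref{Theorem_Main_Geom}. The $X(t)$-term equals $X(t)(e^{-\alpha_{X}(1-\beta_{1})\tau}-e^{-\alpha_{X}\tau})\to0$; the $Y(t)$-term vanishes because $\Psi_{\theta_{2},\beta_{2}}^{1}(\tau)\to0$ by \eqref{Equ_Exp_Psi1} and $e^{-\alpha_{Y}\tau}\to0$; the $\theta_{1}$-term tends to $\frac{\theta_{1}}{\alpha_{X}(1-\beta_{1})}$; and the Gaussian term tends to $\frac{\sigma_{X}^{2}}{4\alpha_{X}}\frac{\beta_{1}}{1-\beta_{1}}$ via $\lim_{x\to\infty}\Lambda(x,y)=(1-y)/y$ from Lemma~\ref{Lemma_Main_RPA}. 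Finally, writing (with $\mu_{Y}=0$) $\Psi_{\theta_{2},\beta_{2}}^{0}(\tau)-\Psi_{0,0}^{0}(\tau)=\int_{0}^{\tau}[\kappa_{L}(\Psi_{\theta_{2},\beta_{2}}^{1}(s)+\theta_{2})-\kappa_{L}(\theta_{2})-\kappa_{L}(e^{-\alpha_{Y}s})]\,ds$ via the representation $\Psi^{0}=\int_{0}^{\cdot}\Lambda_{0}\circ\Psi^{1}$ of Remark~\ref{RemarkPsi0}, the passage $\tau\to\infty$ is justified since both $\Psi_{\theta_{2},\beta_{2}}^{0}(\tau)$ and $\Psi_{0,0}^{0}(\tau)$ converge to finite limits, by \eqref{Equ_Infinity_Psi0} and by the finiteness recorded in the Remark for $\Psi_{0,0}^{0}$. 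Summing the five limits yields \eqref{Equ_Sigma_Infinity}. This interchange of limit and integral is the only genuinely delicate point, and it is precisely what Case~1 of Theorem~\ref{Theorem_Main_Geom} supplies.

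For the short-$\tau$ derivative \eqref{Equ_DSigma_Zero} I would differentiate $\Sigma(t,\tau)$ termwise and set $\tau=0$. The elementary summands give $X(t)\alpha_{X}\beta_{1}$ (from $-\alpha_{X}(1-\beta_{1})+\alpha_{X}$) and $\theta_{1}$, while the Gaussian term contributes nothing since $\partial_{x}\Lambda(0,y)=0$, again by Lemma~\ref{Lemma_Main_RPA}. The only new input is the value of the Riccati derivatives at the initial point: from the ODE \eqref{Equ_ODE} with $\Psi_{\theta_{2},\beta_{2}}^{1}(0)=1$ one has $\frac{d}{d\tau}\Psi_{\theta_{2},\beta_{2}}^{1}(0)=\Lambda_{1}^{\theta_{2},\beta_{2}}(1)$ and $\frac{d}{d\tau}\Psi_{\theta_{2},\beta_{2}}^{0}(0)=\Lambda_{0}^{\theta_{2},\beta_{2}}(1)$, and likewise $\frac{d}{d\tau}\Psi_{0,0}^{0}(0)=\Lambda_{0}^{0,0}(1)=\kappa_{L}(1)$. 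Substituting $\Lambda_{1}^{\theta_{2},\beta_{2}}(1)=-\alpha_{Y}+\frac{\alpha_{Y}\beta_{2}}{\kappa_{L}^{\prime\prime}(\theta_{2})}(\kappa_{L}^{\prime}(1+\theta_{2})-\kappa_{L}^{\prime}(\theta_{2}))$ into the $Y(t)$-term, the $+\alpha_{Y}$ coming from differentiating $-e^{-\alpha_{Y}\tau}$ cancels the $-\alpha_{Y}$, leaving $Y(t)\alpha_{Y}\beta_{2}\frac{\kappa_{L}^{\prime}(1+\theta_{2})-\kappa_{L}^{\prime}(\theta_{2})}{\kappa_{L}^{\prime\prime}(\theta_{2})}$, while $\Lambda_{0}^{\theta_{2},\beta_{2}}(1)-\kappa_{L}(1)=\kappa_{L}(1+\theta_{2})-\kappa_{L}(\theta_{2})-\kappa_{L}(1)$. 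Adding the five contributions gives exactly \eqref{Equ_DSigma_Zero}. Overall the proof is bookkeeping built on Theorem~\ref{Theo_Exp_Affine}; the substantive analytic content is borrowed entirely from Case~1 of Theorem~\ref{Theorem_Main_Geom}.
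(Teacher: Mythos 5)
Your proof is correct and follows essentially the same route as the paper's: both read everything off the closed form in Theorem~\ref{Theo_Exp_Affine}, identify the exponent with $\Sigma(t,\tau)$ via $\Psi_{0,0}^{1}(\tau)=e^{-\alpha_{Y}\tau}$ and $\Psi_{0,0}^{0}(\tau)=\int_{0}^{\tau}\int_{0}^{\infty}(e^{ze^{-\alpha_{Y}s}}-1)\ell(dz)ds$, and obtain the two limits from the Riccati ODE \eqref{Equ_ODE} together with Case~1 of Theorem~\ref{Theorem_Main_Geom}; your write-up merely makes explicit the positivity of $\mathbb{E}_{P}[S(T)|\mathcal{F}_{t}]$ and the termwise limits and derivatives that the paper's terse proof leaves implicit. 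One remark: your Gaussian term $\frac{\sigma_{X}^{2}}{4\alpha_{X}}\Lambda(2\alpha_{X}\tau,1-\beta_{1})$ is in fact the correct one — the argument $1-\beta_{2}$ appearing in the displayed definition \eqref{Equ_SigmaTau} is a typo in the paper, since only $1-\beta_{1}$ is consistent with the formula of Theorem~\ref{Theo_Exp_Affine} and with the limit $\frac{\sigma_{X}^{2}}{4\alpha_{X}}\frac{\beta_{1}}{1-\beta_{1}}$ asserted in \eqref{Equ_Sigma_Infinity}.
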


\begin{proof}
The result follows easily from Theorem \ref{Theo_Exp_Affine} and the following
computations with $\Psi_{\theta_{2},\beta_{2}}^{1}(\tau)$ and $\Psi
_{\theta_{2},\beta_{2}}^{0}(\tau)$. We have that%
\begin{align*}
\lim_{\tau\rightarrow0}\frac{d}{d\tau}\Psi_{\theta_{2},\beta_{2}}^{1}(\tau)
&  =\lim_{\tau\rightarrow0}\Lambda_{1}^{\theta_{2},\beta_{2}}(\Psi_{\theta
_{2},\beta_{2}}^{1}(\tau))=\Lambda_{1}^{\theta_{2},\beta_{2}}(1)\\
&  =-\alpha_{Y}+\alpha_{Y}\beta_{2}\frac{\kappa_{L}^{\prime}(1+\theta
_{2})-\kappa_{L}^{\prime}(\theta_{2})}{\kappa_{L}^{\prime\prime}(\theta_{2})},
\end{align*}
and%
\begin{align*}
\lim_{\tau\rightarrow0}\frac{d}{d\tau}\Psi_{\theta_{2},\beta_{2}}^{0}(\tau)
&  =\lim_{\tau\rightarrow0}\Lambda_{0}^{\theta_{2},\beta_{2}}(\Psi_{\theta
_{2},\beta_{2}}^{1}(\tau))=\Lambda_{0}^{\theta_{2},\beta_{2}}(1)\\
&  =\kappa_{L}(1+\theta_{2})-\kappa_{L}(\theta_{2}).
\end{align*}
In Theorem \ref{Theorem_Main_Geom}, it is proved that $\Psi_{\theta_{2}%
,\beta_{2}}^{1}(\tau)$ converges to $0$ when $\tau$ tends to infinity and
\[
\lim_{\tau\rightarrow\infty}\Psi_{\theta_{2},\beta_{2}}^{0}(\tau)=\int
_{0}^{\infty}\Lambda_{0}^{\theta_{2},\beta_{2}}(\Psi_{\theta_{2},\beta_{2}%
}^{1}(t))dt.
\]
Hence, using the definitions of $\Lambda_{0}^{\theta_{2},\beta_{2}}(u)$ and
$\Lambda_{0}^{0,0}(u),$ the fact that $\Psi_{0,0}^{1}(t)=e^{-\alpha_{Y}t}$ and
$\kappa_{L}(0)=0$ we get
\[
\lim_{\tau\rightarrow\infty}(\Psi_{\theta_{2},\beta_{2}}^{0}(\tau)-\Psi
_{0,0}^{0}(\tau))=\int_{0}^{\infty}\kappa_{L}(\Psi_{\theta_{2},\beta_{2}}%
^{1}(t)+\theta_{2})-\kappa_{L}(\theta_{2})-\kappa_{L}(e^{-\alpha_{Y}t})dt.
\]

\end{proof}

The sign of $\Sigma(t,\tau)$ is more complex to analyse than the sign of
$R_{a,Q}^{F}(t,\tau),$ the risk premium in the arithmetic model. In the
Esscher case the computations can be done quite explicitly. In the general
case we shall make use of Lemma \ref{Lemma_Main_RPG} to prove that one can
generate the empirically observed risk premium profile. Moreover, some
additional information on $\Sigma(t,\tau)$ can be deduced from classical
results on comparison of solutions of ODEs. In order to graphically illustrate
the discussion we plot the risk premium profiles obtained assuming that the
subordinator $L$ is a compound Poisson process with jump intensity
$c/\lambda>0$ and exponential jump sizes with mean $\lambda.$ That is, $L$
will have the L\'{e}vy measure given in Example $\left(
\ref{Example_Subordinators}\right)  ,$ $(1).$ We shall measure the time to
maturity $\tau$ in days and plot $R_{g,Q}^{F}(t,\tau)$ for $\tau\in
\lbrack0,360],$ roughly one year. We fix the values of the following
parameters%
\[
\alpha_{X}=0.099,\sigma_{X}=0.0158,\alpha_{Y}=0.3466,c=0.4,\lambda=2.
\]
The speed of mean reversion for the base component $\alpha_{X}$ yields a
half-life of seven days, while the one for the spikes $\alpha_{Y}$ yields a
half-life of two days. The value for $\sigma_{X}$ yields an annualised
volatility of $30\%$. The values for $c$ and $\lambda$ give jumps with mean
$0.5$ and frequency of $5$ spikes a month.

\begin{itemize}
\item \textbf{Changing the level of mean reversion (Esscher transform)},
$\bar{\beta}=(0,0):$ Setting $\bar{\beta}=(0,0),$ the probability measure $Q$
only changes the level of mean reversion (which is assumed to be zero under
the historical measure $P$). Moreover, as $R_{a,Q}^{F}(t,\tau)$ is
deterministic when $\bar{\beta}=(0,0),$ we have that the randomness in
$R_{g,Q}^{F}(t,\tau)$ comes into the picture through $\mathbb{E}%
_{P}[S(T)|\mathcal{F}_{t}],$ in particular through the levels of the driving
factors $X$ and $Y.$ By Proposition \ref{Prop_Esscher_Geom} we have that
\begin{align*}
R_{g,Q}^{F}(t,\tau)  &  =\mathbb{E}_{P}[S(t+\tau)|\mathcal{F}_{t}]\\
&  \qquad\times\left\{  \exp\left(  R_{a,Q}^{F}(t,\tau)-\frac{\kappa
_{L}^{\prime}(\theta_{2})-\kappa_{L}^{\prime}(0)}{\alpha_{Y}}(1-e^{-\alpha
_{Y}\tau})\right)  \right. \\
&  \qquad\times\left.  \exp\left(  \int_{0}^{\tau}\int_{0}^{\infty}%
(e^{\theta_{2}z}-1)(e^{ze^{-\alpha_{Y}s}}-1)\ell(dz)ds\right)  -1\right\}  ,
\end{align*}
and the sign of $R_{g,Q}^{F}(t,\tau)$ is the same as the sign of
\begin{align*}
&  R_{a,Q}^{F}(t,\tau)-\frac{\kappa_{L}^{\prime}(\theta_{2})-\kappa
_{L}^{\prime}(0)}{\alpha_{Y}}(1-e^{-\alpha_{Y}\tau})+\int_{0}^{\tau}\int
_{0}^{\infty}(e^{\theta_{2}z}-1)(e^{ze^{-\alpha_{Y}s}}-1)\ell(dz)ds\\
&  \qquad=\frac{\theta_{1}}{\alpha_{X}}(1-e^{-\alpha_{X}\tau})+\int_{0}^{\tau
}\int_{0}^{\infty}(e^{\theta_{2}z}-1)(e^{ze^{-\alpha_{Y}s}}-1)\ell(dz)ds,
\end{align*}
which is equal to $\Sigma(t,\tau)$ in Lemma \ref{Lemma_Main_RPG}.

\begin{figure}[t]
\centering
\subfigure[$\theta_1=-0.3,\theta_2=0.9,X(t)=-0.5,Y(t)=0.5$]{\includegraphics[width=2.90in]{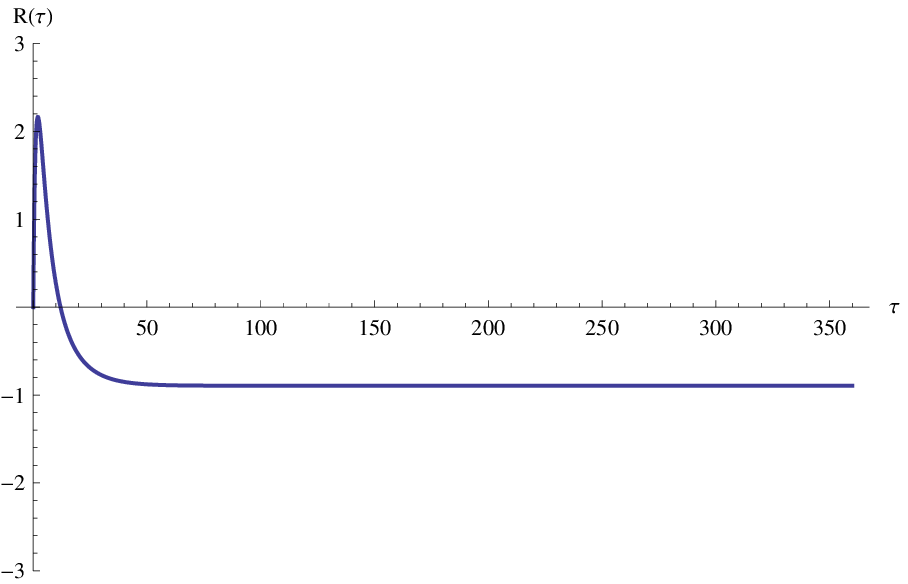}\label{FigBeta0_5_a}}
\subfigure[$\theta_1=0.03,\theta_2=-0.9,X(t)=0.5,Y(t)=0.5$]{\includegraphics[width=2.90in]{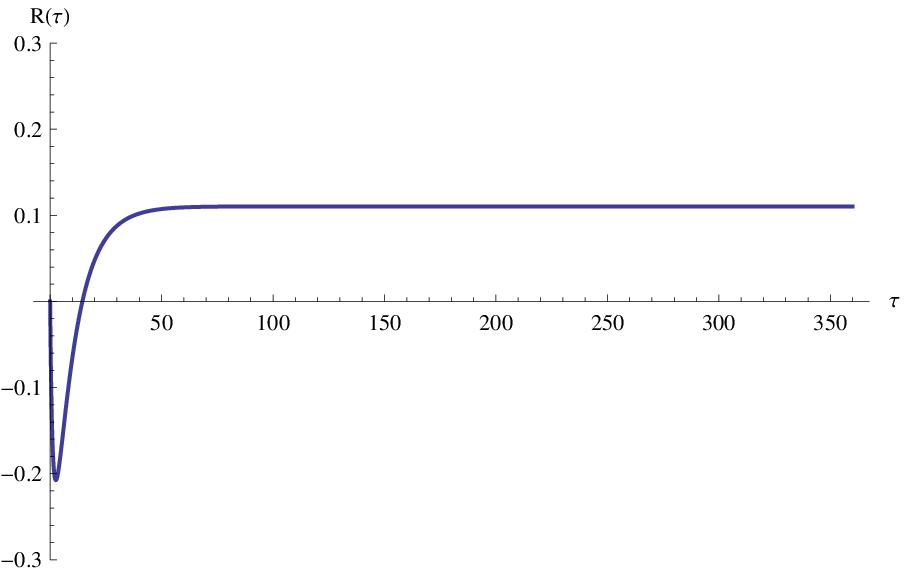}\label{FigBeta0_5_b}}\newline%
\subfigure[$\theta_1=-0.09,\theta_2=0.9,X(t)=-0.5,Y(t)=0.5$]{\includegraphics[width=2.90in]{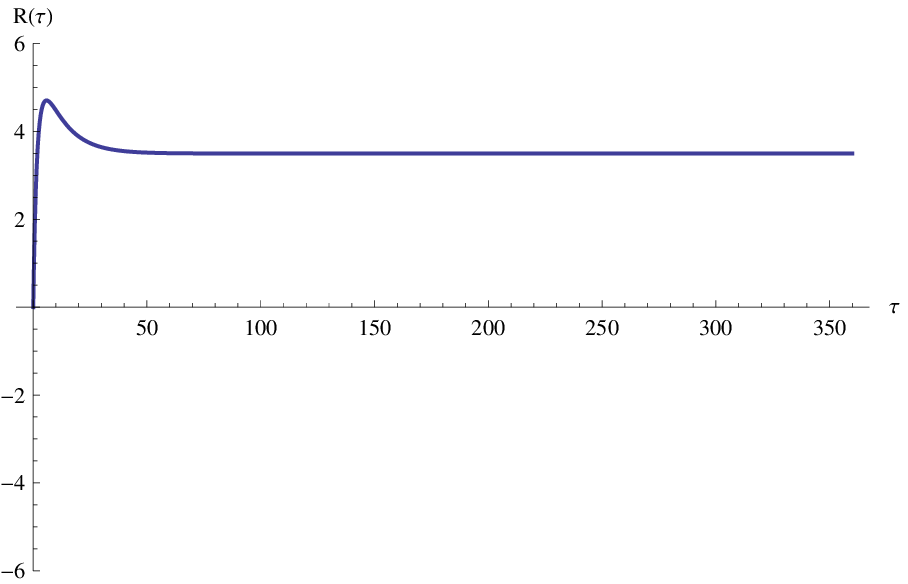}\label{FigBeta0_5_c}}
\subfigure[$\theta_1=-0.2,\theta_2=0.1,X(t)=0.5,Y(t)=0.5$]{\includegraphics[width=2.90in]{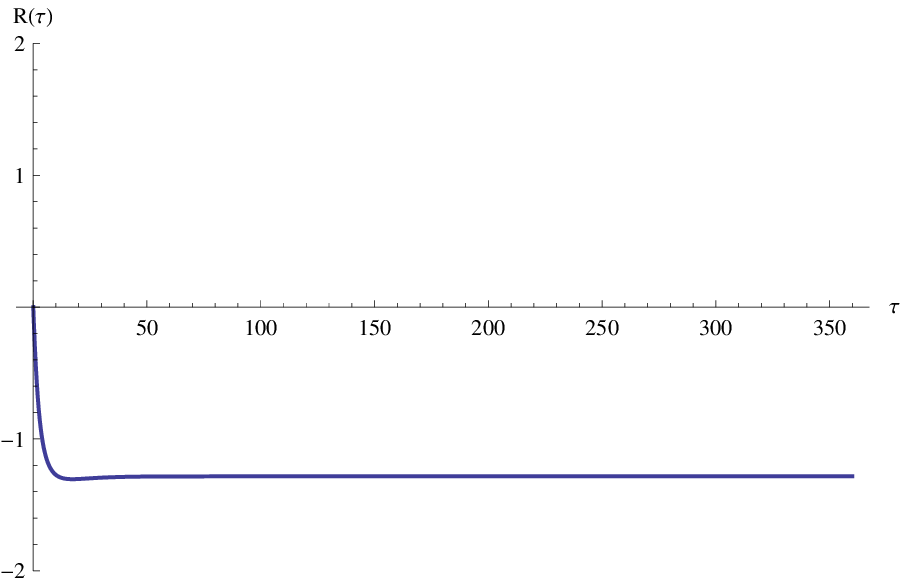}\label{FigBeta0_5_d}}\caption{Risk
premium profiles when $L$ is a compound Poisson process with exponentially
distributed jumps. Esscher transform: case $\bar{\beta}=(0,0).$ Geometric spot
model}%
\label{FigBeta0_5}%
\end{figure}If $\theta_{2}=0,$ then the sign of $R_{g,Q}^{F}(t,\tau)$ is the
same as the sign of $\theta_{1}$ and it is constant over all times to maturity
$\tau.$ Similarly, if $\theta_{1}=0,$ the sign $R_{g,Q}^{F}(t,\tau)$ is the
same as the sign of $\theta_{2}$ and it is also constant. If both $\theta_{1}$
and $\theta_{2}$ are different from zero we can get risk premium profiles with
non constant sign. By Lemma \ref{Lemma_Main_RPG}, we have that
\begin{align*}
\lim_{\tau\rightarrow0}\frac{\partial}{\partial\tau}\Sigma(t,\tau)  &
=\theta_{1}+\kappa_{L}(1+\theta_{2})-\kappa_{L}(\theta_{2})-\kappa_{L}(1)\\
&  =\theta_{1}+\int_{0}^{\infty}(e^{\theta_{2}z}-1)(e^{z}-1)\ell(dz).
\end{align*}
Hence, if we want the sign of $R_{g,Q}^{F}(t,\tau)$ to be positive when $\tau$
is close to zero we have to impose
\begin{equation}
\theta_{1}+\int_{0}^{\infty}(e^{\theta_{2}z}-1)(e^{z}-1)\ell(dz)>0.
\label{Equ_GeomSmallTau}%
\end{equation}
For large times to maturity, Lemma \ref{Lemma_Main_RPG} yields%
\begin{align*}
\lim_{\tau\rightarrow\infty}\Sigma(t,\tau)  &  =\frac{\theta_{1}}{\alpha_{X}%
}+\int_{0}^{\infty}\kappa_{L}(e^{-\alpha_{Y}t}+\theta_{2})-\kappa_{L}%
(\theta_{2})-\kappa_{L}(e^{-\alpha_{Y}t})dt\\
&  =\frac{\theta_{1}}{\alpha_{X}}+\int_{0}^{\infty}\int_{0}^{\infty}%
(e^{\theta_{2}z}-1)(e^{ze^{-\alpha_{Y}t}}-1)\ell(dz)dt.
\end{align*}
Using Fubini's theorem we get that
\begin{align*}
&  \int_{0}^{\infty}\int_{0}^{\infty}(e^{\theta_{2}z}-1)(e^{ze^{-\alpha_{Y}t}%
}-1)\ell(dz)dt\\
&  =\int_{0}^{\infty}(e^{\theta_{2}z}-1)\int_{0}^{\infty}(e^{ze^{-\alpha_{Y}%
t}}-1)dt\ell(dz)\\
&  =\int_{0}^{\infty}\frac{(e^{\theta_{2}z}-1)}{\alpha_{Y}}\left(
\operatorname{Ei}(z)-\log(z)-\gamma\right)  \ell(dz),
\end{align*}
where $\operatorname{Ei}(z)=\int_{-\infty}^{z}\frac{e^{t}}{t}dt$ is the
exponential integral function and $\gamma$ is the Euler-Mascheroni constant.
Hence, if we want $R_{g,Q}^{F}(t,\tau)$ to be negative when $\tau$ is large we
have to impose%
\begin{equation}
\theta_{1}+\frac{\alpha_{X}}{\alpha_{Y}}\int_{0}^{\infty}(e^{\theta_{2}%
z}-1)\left(  \operatorname{Ei}(z)-\log(z)-\gamma\right)  \ell(dz)<0.
\label{Equ_GeomLargeTau}%
\end{equation}
Note that $\operatorname{Ei}(z)-\log(z)-\gamma\geq0,\forall z\geq0$ and
$e^{z}-1-\frac{\alpha_{X}}{\alpha_{Y}}\left(  \operatorname{Ei}(z)-\log
(z)-\gamma\right)  >0,$ for all $z>0$ and $\alpha_{X}<\alpha_{Y}.$Therefore,
for all $\theta_{2}>0$ one has that%
\begin{equation}
0<\int_{0}^{\infty}(e^{\theta_{2}z}-1)\left(  \operatorname{Ei}(z)-\log
(z)-\gamma\right)  \ell(dz)<\frac{\alpha_{Y}}{\alpha_{X}}\int_{0}^{\infty
}(e^{\theta_{2}z}-1)\left(  e^{z}-1\right)  \ell(dz).
\label{Equ_inequality_IntegralsEi}%
\end{equation}
Combining equations $\left(  \ref{Equ_GeomSmallTau}\right)  $, $\left(
\ref{Equ_GeomLargeTau}\right)  $ and $\left(  \ref{Equ_inequality_IntegralsEi}%
\right)  $ we can conclude that it is possible to choose $\theta_{1}<0$ and
$\theta_{2}>0$ such that $R_{g,Q}^{F}(t,\tau)>0$ when the time to maturity is
close to zero and $R_{g,Q}^{F}(t,\tau)<0$ when the time to maturity is large.

\item \textbf{Changing the speed of mean reversion,} $\bar{\theta}=(0,0):$
Setting $\bar{\theta}=(0,0),$ the probability measure $Q$ only changes the
speed of mean reversion. By Lemma \ref{Lemma_Main_RPG} we have that the sign
of $R_{g,Q}^{F}(t,\tau)$ will coincide with the sign of
\begin{align*}
\Sigma(t,\tau)  &  =X(t)e^{-\alpha_{X}\tau}(e^{\alpha_{X}\beta_{1}\tau
}-1)+Y(t)\left(  \Psi_{0,\beta_{2}}^{1}(\tau)-\Psi_{0,0}^{1}(\tau)\right) \\
&  +\frac{\sigma_{X}^{2}}{4\alpha_{X}}\Lambda(2\alpha_{X}\tau,1-\beta
_{2})+\left(  \Psi_{0,\beta_{2}}^{0}(\tau)-\Psi_{0,0}^{0}(\tau)\right) \\
&  \triangleq\Sigma_{1}(t,\tau)+\Sigma_{2}(t,\tau)+\Sigma_{3}(t,\tau
)+\Sigma_{4}(t,\tau),
\end{align*}
and \begin{figure}[b]
\centering
\subfigure[$\beta_1=0.4,\beta_2=0.2,X(t)=1.0,Y(t)=0.5$]{\includegraphics[width=2.90in]{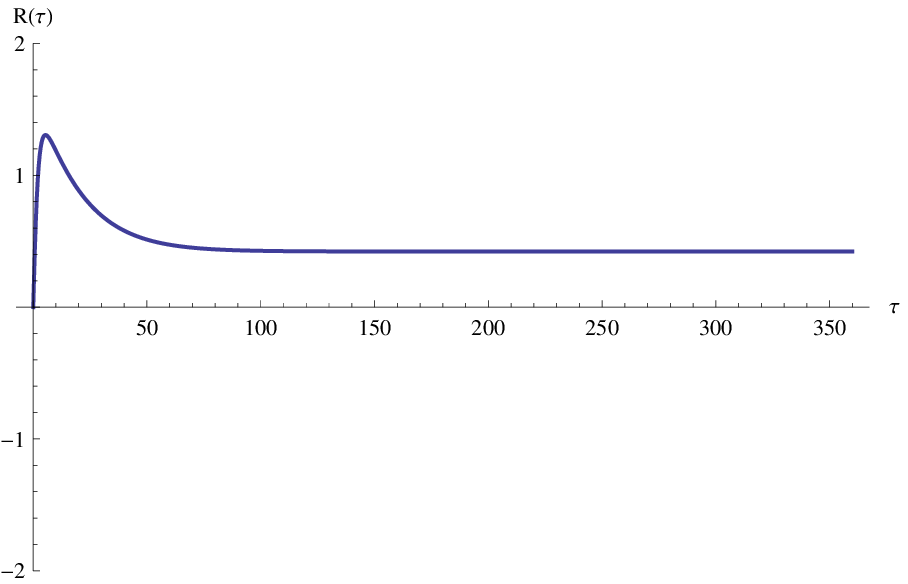}\label{FigTheta0_6_a}}
\subfigure[$\beta_1=0.75,\beta_2=0.0,X(t)=-2.5,Y(t)=0.5$]{\includegraphics[width=2.90in]{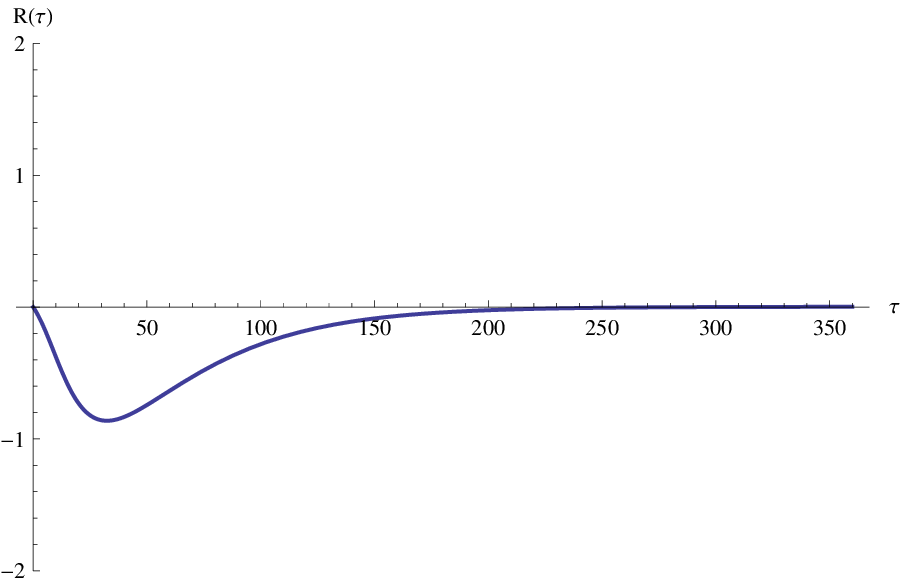}\label{FigTheta0_6_b}}\newline%
\subfigure[$\beta_1=0.75,\beta_2=0.3,X(t)=-2.5,Y(t)=0.0$]{\includegraphics[width=2.90in]{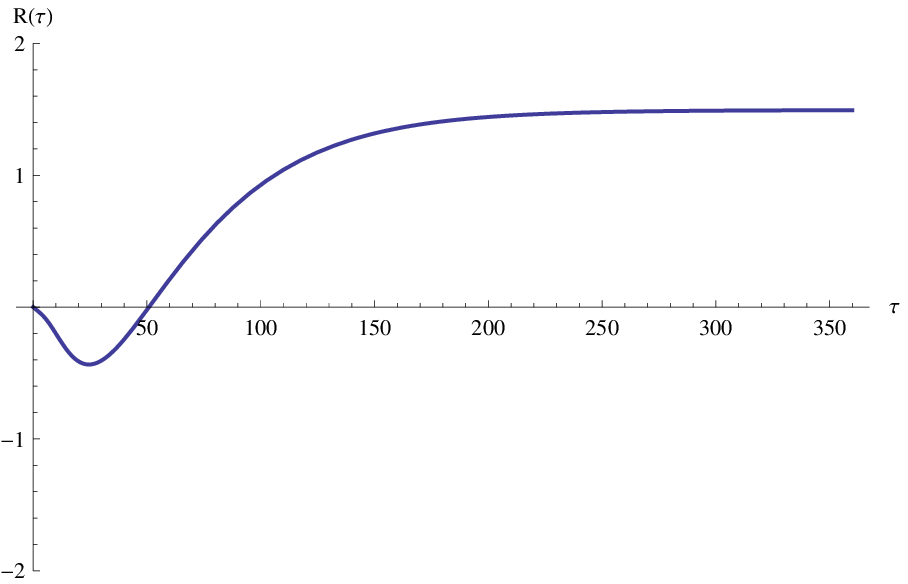}\label{FigTheta0_6_c}}
\subfigure[$\beta_1=0.5,\beta_2=0.2,X(t)=-2.5,Y(t)=2.5$]{\includegraphics[width=2.90in]{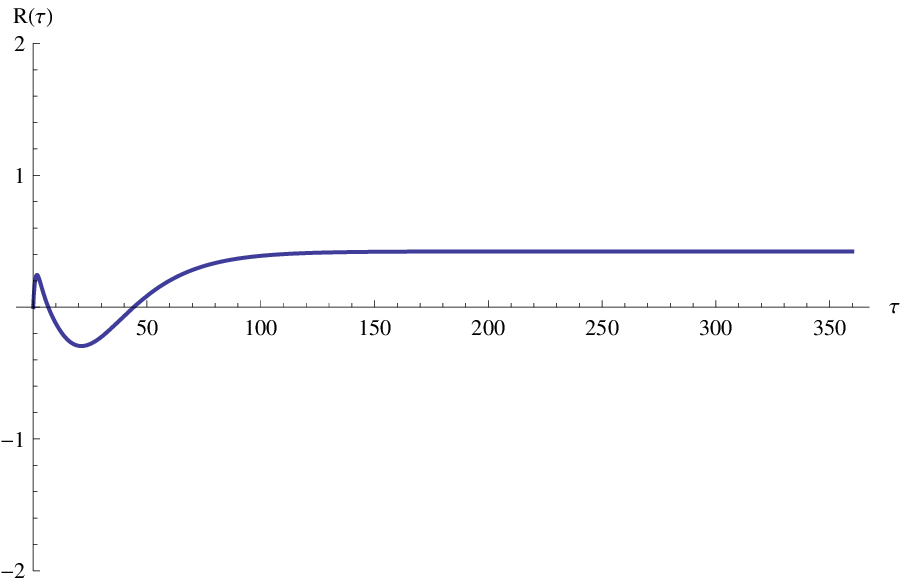}\label{FigTheta0_6_d}}\caption{Risk
premium profiles when $L$ is a compound Poisson process with exponentially
distributed jumps. Case $\bar{\theta}=(0,0).$ Geometric spot price model}%
\label{FigTheta0_6}%
\end{figure}%

\begin{align*}
\lim_{\tau\rightarrow\infty}\Sigma(t,\tau)  &  =\frac{\sigma_{X}^{2}}%
{4\alpha_{X}}\frac{\beta_{1}}{1-\beta_{1}}\geq0\\
\lim_{\tau\rightarrow0}\frac{\partial}{\partial\tau}\Sigma(t,\tau)  &
=X(t)\alpha_{X}\beta_{1}+Y(t)\alpha_{Y}\beta_{2}\frac{\kappa_{L}^{\prime
}(1)-\kappa_{L}^{\prime}(0)}{\kappa_{L}^{\prime\prime}(0)},
\end{align*}
where $\kappa_{L}^{\prime}(1)-\kappa_{L}^{\prime}(0)$ and $\kappa_{L}%
^{\prime\prime}(0)$ are strictly positive. Hence the risk premium will
approach to a non negative value in the long end of the market. In the short
end, it can be both positive or negative and stochastically varying with
$X(t)$ and $Y(t),$ but $Y(t)$ will always contribute to a positive sign. For
any $\tau,$ the sign of $\Sigma_{1}(t,\tau)$ will be the sign of $X(t),$ that
can be positive or negative. As the function $\Lambda(x,y)\ $is positive, the
term $\Sigma_{3}(t,\tau)$ is always positive. To analyse the sign of
$\Sigma_{2}(t,\tau),$ note that
\[
\Lambda_{1}^{0,\beta_{2}}(u)-\Lambda_{1}^{0,0}(u)=\frac{\alpha_{Y}\beta_{2}%
}{\kappa_{L}^{\prime\prime}(0)}\int_{0}^{\infty}(e^{uz}-1)z\ell(dz)\geq0,\quad
u\geq0,
\]
and $\Psi_{0,\beta_{2}}^{1}(1)=\Psi_{0,0}^{1}(\tau).$ Hence, applying a
comparison theorem for ODEs, see Theorem 6.1, pag.31, in Hale \cite{Ha69}, we
have that $\Psi_{0,\beta_{2}}^{1}(\tau)-\Psi_{0,0}^{1}(\tau)\geq0,$ for all
$\tau,$ and, as $Y(t)$ is always positive, the term $\Sigma_{2}(t,\tau)$ is
also always positive. Finally, as
\[
\Lambda_{0}(u)\triangleq\Lambda_{0}^{0,\beta_{2}}(u)=\Lambda_{0}^{0,0}%
(u)=\int_{0}^{\infty}(e^{uz}-1)\ell(dz),
\]
is an strictly increasing function and $\Psi_{0,\beta_{2}}^{1}(t)\geq
\Psi_{0,0}^{1}(t)$ we get that%
\[
\Sigma_{4}(t,\tau)=\Psi_{0,\beta_{2}}^{0}(\tau)-\Psi_{0,0}^{0}(\tau)=\int
_{0}^{\tau}\{\Lambda_{0}(\Psi_{0,\beta_{2}}^{1}(t))-\Lambda_{0}(\Psi_{0,0}%
^{1}(t))\}dt\geq0.
\]
Hence, if $\beta_{1}=0$ or $X(t)\geq0,$ then $R_{g,Q}^{F}(t,\tau)$ will be
positive for all times to maturity. Some of the possible risk profiles that
can be obtained are plotted in Figure \ref{FigTheta0_6}.

\item \textbf{Changing the level and speed of mean reversion simultaneously}:
We proceed as in the arithmetic case. As we are more interested in how the
change of measure $Q$ influence the component $Y(t)$, responsible for the
spikes in the prices, we are going to assume that $\beta_{1}=0.$ This means
that $Q$ may change the level of mean reversion of the regular component
$X(t),$ but not the speed at which this component reverts to that level.
According to Lemma \ref{Lemma_Main_RPG} we have that the sign of $R_{g,Q}%
^{F}(t,\tau)$ will coincide with the sign of
\begin{align*}
\Sigma(t,\tau)  &  =Y(t)(\Psi_{\theta_{2},\beta_{2}}^{1}(\tau)-e^{-\alpha
_{Y}\tau})+\frac{\theta_{1}}{\alpha_{X}}(1-e^{-\alpha_{X}\tau})+\Psi
_{\theta_{2},\beta_{2}}^{0}(\tau)\\
&  -\int_{0}^{\tau}\int_{0}^{\infty}(e^{ze^{-\alpha_{Y}s}}-1)\ell(dz)ds\\
&  =Y(t)\left(  \Psi_{\theta_{2},\beta_{2}}^{1}(\tau)-\Psi_{0,0}^{1}%
(\tau)\right)  +\frac{\theta_{1}}{\alpha_{X}}(1-e^{-\alpha_{X}\tau})+\left(
\Psi_{\theta_{2},\beta_{2}}^{0}(\tau)-\Psi_{0,0}^{0}(\tau)\right) \\
&  \triangleq\Sigma_{1}(t,\tau)+\Sigma_{2}(t,\tau)+\Sigma_{3}(t,\tau),
\end{align*}
and%
\begin{align}
\lim_{\tau\rightarrow\infty}\Sigma(t,\tau)  &  =\frac{\theta_{1}}{\alpha_{X}%
}+\int_{0}^{\infty}\kappa_{L}(\Psi_{\theta_{2},\beta_{2}}^{1}(t)+\theta
_{2})-\kappa_{L}(\theta_{2})-\kappa_{L}(e^{-\alpha_{Y}t}%
)dt\label{Equ_RPG_General_1}\\
&  =\frac{\theta_{1}}{\alpha_{X}}+\int_{0}^{\infty}\int_{0}^{1}\kappa
_{L}^{\prime}(\theta_{2}+\lambda\Psi_{\theta_{2},\beta_{2}}^{1}(t))d\lambda
\Psi_{\theta_{2},\beta_{2}}^{1}(t)dt\nonumber\\
&  -\int_{0}^{\infty}\int_{0}^{1}\kappa_{L}^{\prime}(\theta_{2}+\lambda
e^{-\alpha_{Y}t})d\lambda e^{-\alpha_{Y}t}dt\nonumber\\
\lim_{\tau\rightarrow0}\frac{\partial}{\partial\tau}\Sigma(t,\tau)  &
=Y(t)\alpha_{Y}\beta_{2}\frac{\kappa_{L}^{\prime}(1+\theta_{2})-\kappa
_{L}^{\prime}(\theta_{2})}{\kappa_{L}^{\prime\prime}(\theta_{2})}%
\label{Equ_RPG_General_2}\\
&  +\theta_{1}+\kappa_{L}(1+\theta_{2})-\kappa_{L}(\theta_{2})-\kappa
_{L}(1)\nonumber\\
&  =Y(t)\alpha_{Y}\beta_{2}\frac{\kappa_{L}^{\prime}(1+\theta_{2})-\kappa
_{L}^{\prime}(\theta_{2})}{\kappa_{L}^{\prime\prime}(\theta_{2})}+\theta
_{1}\nonumber\\
&  +\int_{0}^{1}\kappa_{L}^{\prime}(\theta_{2}+\lambda)d\lambda-\kappa
_{L}(1)\nonumber
\end{align}
\begin{figure}[t]
\centering
\subfigure[$\beta_1=0,\beta_2=0.2,\theta_1=-0.1,\theta_2=0.2,X(t)=1.0,Y(t)=1.0$]{\includegraphics[width=2.90in]{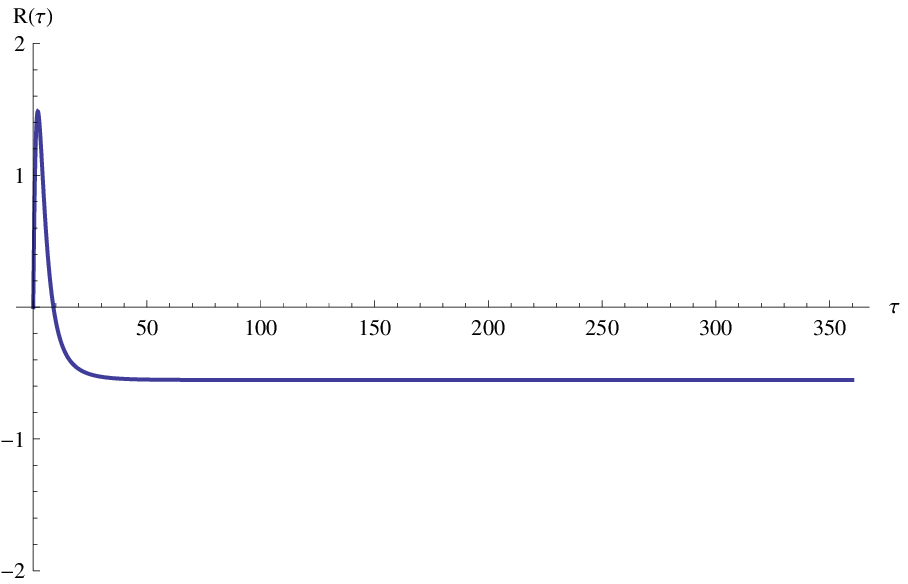}\label{Fig_7_a}}\caption{Risk
premium profiles when $L$ is a compound Poisson process with exponentially
distributed jumps. Geometric spot model}%
\label{Fig_7}%
\end{figure}Note that we can make equation $\left(  \ref{Equ_RPG_General_1}%
\right)  $ negative by simply choosing $\theta_{1}$
\begin{align}
\theta_{1}  &  <-\alpha_{X}\int_{0}^{\infty}\int_{0}^{1}\kappa_{L}^{\prime
}(\theta_{2}+\lambda\Psi_{\theta_{2},\beta_{2}}^{1}(t))d\lambda\Psi
_{\theta_{2},\beta_{2}}^{1}(t)dt\label{Equ_RPG_General_3}\\
&  +\alpha_{X}\int_{0}^{\infty}\int_{0}^{1}\kappa_{L}^{\prime}(\theta
_{2}+\lambda e^{-\alpha_{Y}t})d\lambda e^{-\alpha_{Y}t}dt.\nonumber
\end{align}
On the other hand, to make equation $(\ref{Equ_RPG_General_2})$ positive, we
have to choose $\theta_{1}$ satisfying%
\begin{equation}
\theta_{1}>-\int_{0}^{1}\kappa_{L}^{\prime}(\theta_{2}+\lambda)d\lambda
+\kappa_{L}(1)-Y(t)\alpha_{Y}\beta_{2}\frac{\kappa_{L}^{\prime}(1+\theta
_{2})-\kappa_{L}^{\prime}(\theta_{2})}{\kappa_{L}^{\prime\prime}(\theta_{2})}.
\label{Equ_RPG_General_4}%
\end{equation}
Equations $\left(  \ref{Equ_RPG_General_3}\right)  $ and $\left(
\ref{Equ_RPG_General_4}\right)  $ are compatible if the following equation is
satisfied%
\begin{align}
U_{+}(\theta_{2},\beta_{2})  &  \triangleq\int_{0}^{1}\kappa_{L}^{\prime
}(\theta_{2}+\lambda)d\lambda+\alpha_{X}\int_{0}^{\infty}\int_{0}^{1}%
\kappa_{L}^{\prime}(\theta_{2}+\lambda e^{-\alpha_{Y}t})d\lambda
e^{-\alpha_{Y}t}dt\nonumber\\
&  +Y(t)\alpha_{Y}\beta_{2}\frac{\kappa_{L}^{\prime}(1+\theta_{2})-\kappa
_{L}^{\prime}(\theta_{2})}{\kappa_{L}^{\prime\prime}(\theta_{2})}\nonumber\\
&  >\alpha_{X}\int_{0}^{\infty}\int_{0}^{1}\kappa_{L}^{\prime}(\theta
_{2}+\lambda\Psi_{\theta_{2},\beta_{2}}^{1}(t))d\lambda\Psi_{\theta_{2}%
,\beta_{2}}^{1}(t)dt+\kappa_{L}(1)\triangleq U_{-}(\theta_{2},\beta_{2}).
\label{Equ_RPG_General_5}%
\end{align}
As $e^{-\alpha_{Y}t}\leq1,\Psi_{\theta_{2},\beta_{2}}^{1}(t)\leq1,\kappa
_{L}^{\prime}(\theta)>0$ and $\kappa_{L}^{\prime\prime}(\theta)>0$ we have
that%
\begin{align*}
\frac{\kappa_{L}^{\prime}(1+\theta_{2})-\kappa_{L}^{\prime}(\theta_{2}%
)}{\kappa_{L}^{\prime\prime}(\theta_{2})}  &  =\frac{\int_{0}^{1}\kappa
_{L}^{\prime\prime}(\theta_{2}+\lambda)d\lambda}{\kappa_{L}^{\prime\prime
}(\theta_{2})}>1,\\
\int_{0}^{\infty}\int_{0}^{1}\kappa_{L}^{\prime}(\theta_{2}+\lambda
e^{-\alpha_{Y}t})d\lambda e^{-\alpha_{Y}t}dt  &  \geq\frac{\kappa_{L}^{\prime
}(\theta_{2})}{\alpha_{Y}},
\end{align*}
and%
\[
\int_{0}^{\infty}\int_{0}^{1}\kappa_{L}^{\prime}(\theta_{2}+\lambda
\Psi_{\theta_{2},\beta_{2}}^{1}(t))d\lambda\Psi_{\theta_{2},\beta_{2}}%
^{1}(t)dt\leq\int_{0}^{1}\kappa_{L}^{\prime}(\theta_{2}+\lambda)d\lambda
\int_{0}^{\infty}\Psi_{\theta_{2},\beta_{2}}^{1}(t)dt,
\]
As $\Psi_{\theta_{2},\beta_{2}}^{1}(t)$ converges to zero exponentially fast,
see equation $\left(  \ref{Equ_Exp_Psi1}\right)  $, we have that
\[
\int_{0}^{\infty}\Psi_{\theta_{2},\beta_{2}}^{1}(t)dt<\infty.
\]
Actually, as $\Lambda_{1}^{\theta_{2},\beta_{2}}(u)<\Lambda_{1}^{\theta
_{2},\beta_{2}}(1)<0,0<u<1,$ we can use a comparison theorem for ODEs to
obtain that
\[
\Psi_{\theta_{2},\beta_{2}}^{1}(t)\leq e^{\Lambda_{1}^{\theta_{2},\beta_{2}%
}(1)t}=\exp\left(  -\alpha_{Y}(1-\frac{\beta_{2}}{\kappa_{L}^{\prime\prime
}(\theta_{2})}(\kappa_{L}^{\prime}(1+\theta_{2})-\kappa_{L}^{\prime}%
(\theta_{2})))t\right)  ,
\]
which yields
\[
\int_{0}^{\infty}\Psi_{\theta_{2},\beta_{2}}^{1}(t)dt\leq\frac{1}{\alpha_{Y}%
}(1-\beta_{2}\frac{\int_{0}^{1}\kappa_{L}^{\prime\prime}(\theta_{2}%
+\lambda)d\lambda}{\kappa_{L}^{\prime\prime}(\theta_{2})})^{-1}.
\]
Hence,
\begin{align*}
U_{+}(\theta_{2},\beta_{2})  &  \geq\int_{0}^{1}\kappa_{L}^{\prime}(\theta
_{2}+\lambda)d\lambda+\frac{\alpha_{X}}{\alpha_{Y}}\kappa_{L}^{\prime}%
(\theta_{2})+Y(t)\alpha_{Y}\beta_{2}\triangleq V_{+}(\theta_{2},\beta_{2}),\\
U_{-}(\theta_{2},\beta_{2})  &  \leq\frac{\alpha_{X}}{\alpha_{Y}}\int_{0}%
^{1}\kappa_{L}^{\prime}(\theta_{2}+\lambda)d\lambda(1-\beta_{2}\frac{\int
_{0}^{1}\kappa_{L}^{\prime\prime}(\theta_{2}+\lambda)d\lambda}{\kappa
_{L}^{\prime\prime}(\theta_{2})})^{-1}+\kappa_{L}(1)\triangleq V_{-}%
(\theta_{2},\beta_{2}),
\end{align*}
and if we can find $\theta_{2}\in D_{L}^{g}(\delta)$ for some $\delta>0$ and
$\beta_{2}\in(0,1)$ such that $V_{+}(\theta_{2},\beta_{2})>V_{-}(\theta
_{2},\beta_{2})$ then equation $\left(  \ref{Equ_RPG_General_5}\right)  $ will
be satisfied. Note that the larger the value of $Y(t)$ the easier to find such
$\theta_{2}$ and $\beta_{2}.$ Even in the case that $Y(t)=0,$ by choosing
$\beta_{2}$ close to zero and $\theta_{2}$ large enough we can get
$V_{+}(\theta_{2},\beta_{2})>V_{-}(\theta_{2},\beta_{2}).$ This shows that we
can create a change of measure $Q$ generating the empirically observed risk
premium profile, see Figure \ref{Fig_7}.
\end{itemize}

\end{document}